\documentclass[preprint,12pt,3p,authoryear]{elsarticle}
\usepackage[colorlinks, linkcolor=skyblue, citecolor=skyblue]{hyperref}
\usepackage{booktabs}
\usepackage{array}
\usepackage{subcaption}

\usepackage{multirow}

\usepackage{lineno}
\newcommand*\patchAmsMathEnvironmentForLineno[1]{%
  \expandafter\let\csname old#1\expandafter\endcsname\csname #1\endcsname
  \expandafter\let\csname oldend#1\expandafter\endcsname\csname end#1\endcsname
  \renewenvironment{#1}%
     {\linenomath\csname old#1\endcsname}%
     {\csname oldend#1\endcsname\endlinenomath}}%
\newcommand*\patchBothAmsMathEnvironmentsForLineno[1]{%
  \patchAmsMathEnvironmentForLineno{#1}%
  \patchAmsMathEnvironmentForLineno{#1*}}%
\AtBeginDocument{%
\patchBothAmsMathEnvironmentsForLineno{equation}%
\patchBothAmsMathEnvironmentsForLineno{align}%
\patchBothAmsMathEnvironmentsForLineno{flalign}%
\patchBothAmsMathEnvironmentsForLineno{alignat}%
\patchBothAmsMathEnvironmentsForLineno{gather}%
\patchBothAmsMathEnvironmentsForLineno{multline}%
}
\usepackage{amsmath}
\usepackage{amssymb}

\usepackage{amsthm}

\newtheorem{theorem}{Theorem}[section]
\newtheorem{proposition}[theorem]{Proposition}

\newtheorem{corollary}[theorem]{Corollary}
\newtheorem{assumption}{Assumption}
\newtheorem{remark}{Remark}

\usepackage{dashrule}

\graphicspath{
    {figure/}
    {figure/figure1/}
    {figure/figure2/}
    {figure/figure3/}
    {figure/figure4to6/}
    {figure/figure7/}
    {figure/figure8/}
    {figure/figure9/}
    {figure/figure10/}
    {figure/figure11/}
    {figure/figure12/}
    {figure/figure13/}
    {figure/figure14/}
    {figure/figure15/}
}

\usepackage{stackengine}
\usepackage{tikz}
\usepackage{float}

\usepackage{threeparttable}

\newcommand{\myimage}[1]{\includegraphics[width=\linewidth, keepaspectratio]{#1}}
\usepackage{enumitem}
\usepackage{tabularx}
\usepackage{caption}
\usepackage{xurl}
\newcolumntype{C}{>{\centering\arraybackslash}X}

\begin{document}
    \begin{frontmatter}
        \title{The Evolution of Eco-routing under Population Growth: Evidence from Six U.S. Cities}

         \author[label1,label4]{Zhiheng Shi}

         \author[label2,label3,label5]{Xiaohan Xu\corref{cor1}}

         \author[label6]{Wei Ma}

         \author[label3,label4]{Kairui Feng\corref{cor1}}

         \author[label3,label2,label5]{Bin He\corref{cor1}}

         \address[label1]{School of Mathematics, Harbin Institute of Technology, Harbin 150001, China}
         \address[label2]{College of Electronics and Information Engineering, Tongji University, Shanghai 201804, China}
         \address[label3]{State Key Laboratory of Autonomous Intelligent Unmanned Systems, Shanghai 201210, China}
         \address[label5]{Frontiers Science Center for Intelligent Autonomous Systems, Shanghai 201210, China}
         \address[label6]{Department of Civil and Environmental Engineering, The Hong Kong Polytechnic University, Hong Kong 999077, China}
         \address[label4]{Shanghai Innovation Institute, Shanghai 200030, China}

         \cortext[cor1]{Corresponding author: xhxu@tongji.edu.cn; kelvinfkr@tongji.edu.cn; hebin@tongji.edu.cn}

        \begin{abstract}

            Rapid urban population growth drives car travel demand,
            increasing transport carbon emissions and posing a critical challenge to sustainable development.
            Although existing studies have demonstrated that eco-routing can reduce individual emissions, research gaps remain.
            On the one hand, such personal reductions have a negligible impact on overall emissions,
            and cannot be simply aggregated to capture the complex effects of large-scale eco-routing.
            On the other hand, under population growth, the long-term effectiveness of eco-routing,
            as well as the evolution of its efficiency and traveler route choice, remain underexplored.
            To address these limitations, this study proposes Time-Only and Time-Carbon user equilibrium (UE) models,
            integrates them with a demand forecasting method for simulating future network traffic,
            and designs multi-dimensional metrics to characterize urban dynamics.
            Using real-world road networks, commuting origin-destination (OD) demand, and population projections under various shared socioeconomic pathways (SSPs)
            for six representative U.S. cities as a case study,
            we conduct a comprehensive analysis of urban dynamics across different routing strategies and population sizes.
            The results reveal that
            while eco-routing mitigates total emissions,
            emissions in most cities scale superlinearly with population,
            a scaling order that remains invariant regardless of routing and construction strategies.
            Moreover, under population growth,
            travelers using eco-routing tend to increasingly select shorter routes,
            giving rise to carbon bottlenecks.
            A strategy of targeted capacity expansion on these critical bottlenecks (0.46\% of links)
            significantly reduces both emissions (3\%) and travel time (28\%) without compromising eco-routing efficiency.
            This study provides a foundation for formulating low-carbon urban transport planning and emission reduction policies.
        \end{abstract}

        \begin{keyword}
        Eco-routing \sep Transport carbon emissions \sep Population growth \sep Sustainable development
        \end{keyword}

    \end{frontmatter}

    \section{Introduction}
    \label{sec:1}
        Cities have become centers of energy consumption, air pollution, and environmental degradation,
        with urban transport emissions (i.e., carbon emissions) contributing a substantial fraction of national greenhouse gas (GHG) emissions,
        specifically accounting for 26.3\% in the U.S., a share that continues to rise annually~\citep{UNFCCC_2021_GHG}.
        In addition, given projections estimating that approximately 68\% of the global population will reside in urban areas by 2050~\citep{desa2018world},
        such population growth and urban sprawl are accompanied by rising private car ownership,
        resulting in growing demand for car travel and increased vehicle miles traveled (VMT)~\citep{glaeser2004sprawl, bento2005effects},
        which will inevitably translate into escalating emissions,
        posing a major challenge to sustainable development~\citep{kazancoglu2021prediction, naganathan2017evaluation, yao2020optimization}.
        Moreover, this challenge is particularly acute in the United States,
        where studies project that approximately 80\% of urban cities in the Northeast and 75\% in the Midwest
        will experience population growth by 2100 under the shared socioeconomic pathway 2 (SSP2) scenario ~\citep{sutradhar2024depopulation}.
        Furthermore, with only 5\% of the world’s population,
        the U.S. possesses 30\% of the world’s automobiles
        and contributes 45\% of global automotive emissions~\citep{us_census_2020, decicco2006global}.
        Given this high car dependency in the U.S., population growth will drive car travel demand even more significantly,
        leading to a substantial surge in emissions.

        To mitigate emissions, researchers have explored multiple approaches, such as alternative fuels~\citep{fouladvand2024does},
        more efficient engine technologies~\citep{dong2022comparative},
        and diverse traffic management strategies~\citep{zong2024congestion}.
        Among these, eco-routing (also known as eco-friendly navigation) remains a research focus~\citep{ortega2021can,zhuang2024dual,liu2023collaborated}.
        Specifically, eco-routing guides travelers to select routes that minimize their fuel consumption and emissions
        by incorporating these environmental costs into the generalized travel cost.
        In contrast, strategies prioritizing minimizing travel time are referred to as time-routing,
        while other routing approaches exist based on various optimization objectives~\citep{ku2021new,wang2025terouting}.
        The effectiveness of eco-routing has been validated by many studies,
        which demonstrate that a single traveler adopting eco-routing can significantly reduce fuel consumption and emissions~\citep{fahmin2024eco,zhou2016review}.
        In practice, Google Maps introduced eco-friendly navigation in the U.S. in 2021
        to guide travelers along the ``lowest carbon footprint'' routes,
        a feature with the potential to reduce over 1 million tons of carbon emissions per year~\citep{reinicke2021routee}.

        Although eco-routing has proven effective, research gaps remain in existing literature.
        Prior studies have almost entirely focused on minimizing emissions at the individual level~\citep{nie2013eco, ericsson2006optimizing}.
        While a single driver utilizing eco-routing can achieve reductions in personal emissions,
        the impact of such individual emission savings on overall urban transport emissions is negligible.
        However, with the widespread adoption of eco-routing, substantial shifts in travel route choice significantly affect network congestion,
        making the collective impact of this large-scale eco-routing complex.
        Moreover, most eco-routing studies formulate generalized travel costs based solely on environmental costs,
        neglecting the time-sensitivity of travelers.
        When both factors are considered,
        the trade-offs between travel time and emissions are expected to influence traveler route choice.
        Additionally, while limited studies have investigated the impact of large-scale eco-routing~\citep{ahn2013network,antunez2023does},
        they have overlooked its long-term effectiveness in the context of population growth,
        failing to assess whether eco-routing can fundamentally curb the upward trend of emissions.
        Meanwhile, the evolution of traveler route choice also remains underexplored.
        As infrastructure development is inevitable, understanding this evolution is critical for guiding sustainable urban planning and road construction.
        Given that emission reductions achieved via eco-routing typically come at the expense of increased travel time~\citep{ahn2008effects,ahn2013network},
        it is necessary to investigate how the relative differences in travel time and emissions between time-routing and eco-routing vary under population growth,
        thereby evaluating the efficiency of eco-routing.
        Furthermore, available studies primarily focus on idealized or small-to-medium-scale networks,
        lacking systematic quantitative analysis of real-world urban road networks and actual origin-destination (OD) demand.

        To address these gaps, this study formulates Time-Only and Time-Carbon UE models,
        integrates them with an OD demand forecasting method to simulate future urban network traffic under different routing strategies.
        Subsequently, several urban metrics are designed,
        including urban transport emissions, UE travel time (UETT), UE travel length (UETL), and volume over capacity (VOC).
        The evolution of these metrics and traveler route choice collectively constitutes the analyzed urban dynamics.
        Finally, through a case study of six representative U.S. cities under various SSPs,
        the results enable the analysis of urban dynamics across different routing strategies and population sizes,
        as well as eco-routing efficiency.
        The research findings can inform policy recommendations for low-carbon urban transport planning, road construction, and emission mitigation.

        The main contributions of this study are summarized as follows:
        \begin{enumerate}[label=(\arabic*), itemsep=0.5pt, parsep=0pt]
            \item \textbf{Time-Only and Time-Carbon UE models.}
                We formulate Time-Only and Time-Carbon UE models,
                establish the existence and uniqueness of their solutions,
                and employ a bi-conjugate Frank-Wolfe algorithm for computation.
                By integrating these models with a demand forecasting method,
                we simulate future network traffic under different routing strategies.

            \item \textbf{Theoretical insights.}
                We design a set of metrics
                and conduct theoretical derivations based on the proposed UE models to explore urban dynamics and the impact
                of bounded link capacity expansion from both macroscopic and microscopic perspectives, thereby providing a rigorous foundation for the corresponding findings presented in the case study.

            \item \textbf{Empirical insights from six U.S. cities.}
                Using real-world road networks and block-level commuting demand, we reveal that
                under population growth:
                (i) emissions in most cities scale superlinearly with population regardless of routing and construction strategies;
                (ii) eco-routing is either already inefficient in certain cities or projected to become inefficient in others;
                (iii) carbon bottlenecks (0.46\% of links) emerge as travelers using eco-routing increasingly select shorter routes,
                    and expanding their capacity significantly reduces both emissions (3\%) and travel time (28\%).
        \end{enumerate}

        The remainder of this paper is organized as follows:
        Section~\ref{sec:2} reviews the relevant literature.
        Section~\ref{sec:3} establishes the methodological framework, including UE models construction, indicator design, and theoretical derivations.
        Section~\ref{sec:data} details the study area, data sources and processing procedures.
        Section~\ref{sec:4} presents the case study analysis and discusses the key findings.
        Finally, Section~\ref{sec:5} summarizes the conclusions of this study.

    \section{Literature Review}
    \label{sec:2}
        \subsection{Population Growth and Urban Transport Emissions}
        \label{sec:2.1}
            Demographic projections indicate that the global population will continue to grow,
            with consensus estimates converging on a peak exceeding 9.7 billion
            in the latter half of this century~\citep{department2024world,vollset2020fertility,samir2017human}.
            In developed economies, urban population dynamics are characterized by the simultaneous phenomena of reurbanization and suburbanization,
            driving both densification and sprawl~\citep{kabisch2011diversifying,oecd2018rethinking,rerat2012new}.
            In the U.S. context, studies forecast widespread urban growth by 2100~\citep{gao2017downscaling,hauer2019population}.
            Specifically, approximately 80\% of urban cities in the Northeast, 75\% of those in the Midwest,
            and nearly 100\% in the West and South are projected to experience population increases under the SSP2 scenario~\citep{sutradhar2024depopulation}.
            Even cities currently experiencing population loss, such as Long Island and those around Chicago, may still achieve future growth driven by immigration.
            Moreover, cities with high vehicle ownership are more likely to grow in population.
            Consequently, demographic expansion acts as a driver of total car travel demand,
            thereby placing stress on road networks and increasing urban transport emissions~\citep{chapman2007transport}.
            Additionally, urban sprawl remains ubiquitous and continues to expand,
            often accompanied by increased private car ownership~\citep{glaeser2004sprawl}.
            These findings align with spatial analyses in the U.S.~\citep{bento2005effects} indicating that urban sprawl results in increased car usage and higher annual VMT,
            thereby leading to higher emissions.

            Few studies have directly explored the relationship between population growth and urban transport emissions,
            as most related research primarily focuses on overall urban emissions.
            To quantify urban dynamics under population growth, \citet{bettencourt2007growth} proposed the urban scaling theory,
            demonstrating that various urban indicators scale with population following a power-law function.
            Specifically, urban metrics like traffic congestion often exhibit superlinear scaling,
            whereas gasoline sales often exhibit sublinear scaling,
            implying that both traffic congestion and emissions increase as population grows~\citep{bettencourt2010urban}.
            Similarly, \citet{mohsin2019integrated} and \citet{phdthsis1} showed that
            population growth leads to increased energy consumption,
            which directly intensifies emissions and accelerates environmental degradation.
            Existing research has clearly shown the growth trend of future population and demonstrated the associated increase in urban transport emissions.
            However, constrained by data availability,
            research gaps remain in exploring the effectiveness of traffic management strategies for mitigating emissions driven by population growth.
            In this study, we utilize urban commuting demand and integrate it with population projections to predict future travel demand.
            By applying the forecasted travel demand to the real-world road network, we simulate future network traffic,
            thereby investigating emissions when implementing city-scale routing strategies in the context of population growth.

        \subsection{Application of Eco-routing in Large-Scale Networks}
        \label{sec:2.2}
            An eco-routing system comprises two fundamental components: (i) an energy consumption model
            to estimate the energy consumption for a single vehicle on a given road segment;
            (ii) a routing algorithm that identifies the most energy-efficient path from an origin to a destination.
            Existing studies often model energy consumption as a function of
            road gradient, road type, vehicle operational characteristics, and vehicle speed~\citep{yu2016improving,wang2016fuel},
            with most research formulating fuel consumption solely based on vehicle speed~\citep{yang2018carbon,antunez2023does,wang2025carbon}.
            Routing algorithms generally fall into two categories: search-based and optimization-based solutions.
            Search-based solutions treat eco-routing as a shortest path problem in graph theory,
            setting environmental costs (e.g., fuel consumption, emissions) as edge weights to identify the minimum cost path via algorithms like
            Dijkstra, $\mathrm{A}^*$, and Bellman-Ford~\citep{hart1968formal,johnson1977efficient}.
            Optimization-based solutions formulate the routing task as a single or multi-objective optimization problem,
            solving for the optimal route using meta-heuristics such as
            Ant Colony Optimization~\citep{jabir2017design}, Particle Swarm Optimization~\citep{hannan2018capacitated}, and Genetic Algorithms~\citep{wang2021voyage}.
            Many studies have validated the effectiveness of eco-routing,
            which demonstrate that individual travelers adopting eco-routing can reduce fuel consumption and emissions~\citep{fahmin2024eco,zhou2016review}.
            In practice, a key application of eco-routing lies in navigation systems that assist users in finding energy-efficient routes.
            For instance, Google Maps has introduced an eco-friendly navigation feature,
            with estimates suggesting that the adoption of this functionality could reduce carbon emissions by over one million tons annually~\citep{reinicke2021routee}.

            Prior research on eco-routing almost entirely focused on minimizing individual emissions.
            Only a limited number of studies have explored the complex impact of large-scale eco-routing,
            demonstrating that it can reduce urban transport emissions in most cities~\citep{ahn2008effects,coloma2019environmental}.
            Specifically, \citet{ahn2013network} integrated eco-routing with incremental UE traffic assignment
            to develop an ECO-SFA routing mechanism,
            concluding that eco-routing can reduce network-wide fuel consumption and emissions in most cases,
            with emission reductions becoming more pronounced as market penetration increases.
            Subsequent research applied eco-routing to city backbone networks and found that as the level of market penetration rises,
            emissions may increase in some cities while decreasing in others,
            highlighting the complex impact of the large-scale eco-routing~\citep{antunez2023does}.
            However, existing studies on eco-routing strategies primarily rely on static travel demand,
            failing to explore the urban dynamics under population growth.
            To bridge this gap, we forecast future travel demand via population projections and historic demand data,
            utilizing them to simulate future network traffic,
            thereby investigating the long-term impact of large-scale eco-routing, as well as the evolution of travel route choice.
            Moreover, most available literature formulates generalized travel costs based solely on environmental costs.
            Given the time-sensitivity of travelers, we incorporate travel time into the optimization objective
            and vary its relative weight against carbon cost to explore how trade-offs between time and emissions influence traveler route choice.
            Furthermore, while eco-routing reduces emissions, it typically comes at the expense of increased travel time~\citep{ahn2008effects,ahn2013network}.
            Consequently, this study offers valuable insights into the efficiency of eco-routing
            by investigating the relative differences in time and emissions between time-routing and eco-routing.


        \subsection{UE Traffic Assignment}
        \label{sec:2.3}
            The UE traffic assignment is widely recognized as the most representative approach in static traffic assignment.
            The principle of UE originates from the seminal work of \citet{wardrop1952road},
            stating that in a UE state, no traveler can reduce their own travel cost by changing their route alone.
            \citet{beckmann1956studies} later gave this principle a strict mathematical formulation,
            which has since become the theoretical foundation for modern road network analysis.
            Solution methods for the UE problem can be traced back to the convex combination algorithm proposed by \citet{frank1956algorithm}.
            Subsequently, based on this foundation, \citet{florian1976application} made improvements and refinements to the algorithmic framework.
            These theories and algorithms form the cornerstone of traffic assignment research
            and are widely used to allocate travel demand under given traffic flow conditions~\citep{van1980most}.
            Many studies have demonstrated that the UE traffic assignment model is effective in reproducing real-world traffic flow patterns,
            and thus serves as an essential basis
            for evaluating the impact of infrastructure projects and traffic management policies~\citep{sheffi1985urban, nagurney2000multiclass}.
            Specifically, in policy research, by modifying the generalized travel cost within the UE model,
            studies can simulate urban network traffic, capture traveler behavior, and investigate urban dynamics under traffic management strategies.

            Extensive literature on congestion pricing utilizes UE models by modifying generalized travel cost in specific areas during certain times,
            successfully capturing significant shifts in traveler behavior and traffic flow patterns
            that lead to reductions in congestion and pollution~\citep{chen2020did, jia2023system,chen2022dynamic}.
            For instance, by combining travel time, route length, and emissions as the travel cost,
            research in Changchun reported a 5\% decrease in car usage and a 23\% reduction in emissions~\citep{zong2024congestion}.
            Similarly, numerous studies on routing strategies integrate UE models by formulating generalized travel cost based on specific objectives,
            also observing a fundamental reshaping of traffic flow patterns and travel route choice~\citep{raith2014solving,yin2006internalizing}.
            Specifically, existing literature applying UE models to explore the effectiveness of eco-routing in urban road networks
            has documented a reduction in VMT and an increase in average travel time~\citep{antunez2023does, ahn2013network}.
            Moreover, \citet{ccolak2016understanding} demonstrated that
            changing the selfish-routing strategy alone, through modifying the generalized travel cost to the social optimum,
            can significantly alter traveler route choice and reduce time losses by 15\%\textnormal{--}30\%,
            where the resulting urban mean congestion exhibited a high correlation ($R^{2}=0.758$) with empirical data from the TomTom Traffic Index.
            These studies have confirmed the efficacy, accuracy, and widespread applicability of the UE model in
            simulating real-world traffic patterns and evaluating traffic management strategies.
            Building on this foundation, we employ Time-Only and Time-Carbon UE models
            to analyze network traffic and urban dynamics under distinct routing strategies and different population sizes.


    \section{Methodology}
    \label{sec:3}
        \subsection{ARIMA-based Prediction of OD Demand}
        \label{sec:3.2}
                We first normalize the OD demand $d_{y}^{s, t}$ to obtain the OD demand distribution $\pi_{y}^{s, t}$, formulated as:
                \begin{equation}
                    \label{eq:Pr}
                    \pi_{y}^{s, t} = \frac{ d_{y}^{s, t} }{\sum_{s, t} d_{y}^{s, t} }
                \end{equation}
                where $d_{y}^{s, t}$ denotes the travel demand from origin $s$ to destination $t$ in year $y$,
                and $\pi_{y}^{s, t}$ represents the corresponding demand distribution.
                Through this normalization, the evolution of $d_{y}^{s, t}$ is decoupled into
                the population-driven change in total demand and the variation in $\pi_{y}^{s, t}$,
                where the latter solely reflects the intrinsic evolution of the OD demand distribution over time.
                To forecast the $\pi_{n}^{s, t}$ for future year $n$,
                we employ the AutoRegressive Integrated Moving Average (ARIMA) algorithm,
                which involves a two-stage process:
                model calibration via historical data fitting and one-step-ahead forecasting, as detailed below:
                \begin{enumerate}[label=(\arabic*), itemsep=0.5pt, parsep=0pt]
                \item \textbf{Model calibration:}
                    The ARIMA model is calibrated on the historical OD demand distribution time series $\pi_y^{s, t}$ where $y \in \{ n-18, \dots, n-1 \}$.
                    We determine the optimal model orders based on the Akaike Information Criterion (AIC), including
                    the autoregressive order $o_1$, the degree of differencing $o_2$, and the moving average order $o_3$;
                    and estimate the model coefficients using the Maximum Likelihood Estimation (MLE) method,
                    a standard approach in time series analysis~\citep{box2015time}.
                    The general form of the ARIMA$(o_1, o_2, o_3)$ model is expressed as:
                    \begin{equation}
                        \label{eq:arima1}
                        \left(1 - \sum_{i=1}^{o_1} \phi_i B^i\right) \nabla^{o_2} \pi_y^{s, t} = c + \left(1 + \sum_{j=1}^{o_3} \theta_j B^j\right) \varepsilon_y
                    \end{equation}
                    where $o_1$ and $o_3$ are determined within a search range of $[0, 3]$;
                    and $o_2$ is fixed at 1.
                    Here, $B$ is the backshift operator defined as $B^i \pi_y^{s, t} = \pi_{y-i}^{s, t}$;
                    $\nabla$ is the differencing operator where $\nabla \pi_y^{s, t} = \pi_y^{s, t} - \pi_{y-1}^{s, t}$;
                    $\phi_i$ and $\theta_j$ are the autoregressive and moving average coefficients, respectively;
                    $c$ is the drift term;
                    and $\varepsilon_y$ represents the white noise error term.
                \item \textbf{One-step-ahead forecasting:}
                    Based on the calibrated parameters and historical data, the one-step-ahead forecast for target year $n$,
                    denoted as $\widehat{\pi}_n^{s, t}$, is given by:
                    \begin{equation}
                    \label{eq:arima2}
                        \widehat{\pi}_n^{s, t} = \pi_{n-1}^{s, t}
                            + \sum_{i=1}^{o_1} \phi_i \left( \pi_{n-i}^{s, t} - \pi_{n-i-1}^{s, t} \right)
                            + \sum_{j=1}^{o_3} \theta_j \widehat{\varepsilon}_{n-j} + c
                    \end{equation}
                    where $\widehat{\varepsilon}_{y} = \pi_{y}^{s, t} - \widehat{\pi}_{y}^{s, t}$ represents the observed residual at year $y$.
                \end{enumerate}
%

                Finally, the predicted OD distribution $\widehat{\pi}_n^{s, t}$
                is normalized to sum to unity over all OD pairs to satisfy the distributional constraint.
                By integrating the future population data with $\widehat{\pi}_n^{s, t}$,
                the OD demand for the future year $n$ is calculated as:
                \begin{equation}
                    \label{eq:P}
                    d_{n}^{s, t} = \widehat{\pi}_n^{s, t} \cdot
                        \frac{ N_{n} \sum_{s,t} d_{n-1}^{s, t} }{ N_{n-1} }
                \end{equation}
                where $N_{n}$ denotes the population size in year $n$.
                Under this formulation, the total OD demand is directly proportional to the annual population size.

        \subsection{Time-Only and Time-Carbon UE models}
        \label{sec:3.3}
            Table~\ref{tab:parameters} summarizes the nomenclature and parameter definitions used in this section.
            \begin{table}[!htb]
                \centering
                \small
                \caption{Nomenclature}
                \label{tab:parameters}
                \begin{tabular}{ll}
                    \toprule 
                    Notation & Definition \\
                    \midrule 
                    $N_{y}$                 & The population size of a given city in year $y$; \\
                    $\alpha$, $\beta$       & Parameters of BPR function; \\
                    $a$                     & Index for a link (road segment); \\
                    $A$                     & Set of all links in the network; \\
                    $W$                     & Set of all OD pairs in the network; \\
                    $p$                     & Index for a route (path); \\
                    $\mathcal{P}^{s, t}$    & Set of all feasible paths connecting the origin $s$ and destination $t$ in the network; \\
                    $V$                     & Set of all nodes in the network; \\
                    $Q$                     & Total OD demand in the network; \\
                    $G$                     & The network $G=(V,A)$; \\
                    $l_{a}$                 & The travel length of link $a$, \ (km); \\
                    $q_{a}$                 & The capacity of link $a$, \ (veh/h); \\
                    $t^{0}_{a}$             & The free flow travel time of link $a$, \ (min); \\
                    $x_{a}$                 & The traffic vehicle flow on link $a$, \ (veh/h); \\
                    $t_{a}(x_{a})$          & The travel time of link $a$ for traffic flow $x_{a}$, (min); \\
                    $v_{a}(x_{a})$          & The travel speed on link $a$ for traffic flow $x_{a}$, (km/h); \\
                    $c_{a}(x_{a})$          & The travel fuel consumption of a single vehicle on link $a$ for traffic flow $x_{a}$, (L); \\
                    $e_{a}(x_{a})$          & The travel carbon emissions of a single vehicle on link $a$ for traffic flow $x_{a}$, (g); \\
                    $C_{a}(x_{a})$          & The generalized travel cost of a single vehicle on link $a$ for traffic flow $x_{a}$, (\$); \\
                    $f^{s, t}_{p}$          & The path flow between the origin $s$ and destination $t$ on route $p$; \\
                    $d^{s, t}$              & The demand volume between the origin $s$ and destination $t$; \\
                    $\delta^{s, t}(p, a)$   & $\delta^{s, t}(p,a)=1$ when link $a$ lies on route $p$, otherwise $\delta^{s, t}(p,a)=0$; \\
                    $h(v)$                  & The fuel consumption factor at speed $v$, \ (L/km); \\
                    $\gamma$                & The emission factor, (g/L); \\
                    $\psi_{1}$, $\psi_{2}$  & The weights for the time cost and carbon cost, respectively; \\
                    $\omega_{1}$, $\omega_{2}$ & The monetary coefficients for time (\$/min) and fuel (\$/L), respectively. \\
                    \bottomrule
                \end{tabular}
                \vspace{-1.0em}
            \end{table}

            \begin{assumption}[Network Topology]
            \label{asp:network}
                Let the road network be modeled as a strongly connected directed graph $G = (V, A)$,
                where $V$ is the set of nodes, and $A$ is the set of links.
                Let $W$ denote the finite set of OD pairs.
                For each OD pair $(s,t) \in W$, let $\mathcal{P}^{s, t}$ denote the set of feasible paths connecting origin $s$ and destination $t$,
                which is assumed to be non-empty and finite.
                Furthermore, the travel demand $d^{s, t}$ is positive and bounded,
                and the total demand in the network is denoted by $Q = \sum_{(s,t) \in W} d^{s, t}$.
            \end{assumption}

            \subsubsection{Time-Only UE Model}
            \label{sec:3.3.1}
                The traffic assignment problem can be formalized by modeling route choice:
                every traveler $i$ selects a route $p$ connecting their origin and destination
                based on a personal utility $\mathrm{cost}_i=\sum_{a\in p}{C_{a}(x_{a})}$,
                where $\mathrm{cost}_i$ is expressed as the sum of costs $C_{a}(x_{a})$ of every link $a$ on the chosen route $p$.
                Under time-routing,
                the travel cost considers only travel time,  $C_{a}(x_{a})= \tau t_{a}(x_{a})$,
                where $t_{a}(x_{a})$ represents the travel time on link $a$ for traffic flow $x_{a}$,
                and $\tau$ denotes the average value of time (VOT) for all purposes, set to \$0.30/min~\citep{usdot2023benefit}.
                The relationship between travel time $t_{a}(x_{a})$ and traffic flow $x_{a}$
                is commonly described by the bureau of public roads (BPR) function~\citep{manual1964bureau}:
                \begin{equation}
                    \label{eq:BPR}
                    t_{a}(x_{a})=t^{0}_{a}\left(1+\alpha \left(\frac{x_{a}}{q_{a}}\right)^{\beta } \right)
                \end{equation}
                where 
                $t^{0}_{a}$ is the free flow time of link $a$;
                $\alpha$ and $\beta$ are the parameters of BPR function.

                The set of link flows that occur
                when every traveler adopts time-routing to minimize their own travel time is referred to as the UE flows.
                In this resulting equilibrium state, no traveler can benefit from deviating from their route.
                This concept is fundamentally a Nash equilibrium in road networks and is captured in Wardrop’s first principle~\citep{wardrop1952road}:
                the journey times on all the used routes for an OD pair are equal and are less than those that would be experienced by a
                single vehicle on any unused route.
                To find this UE flows, we adopt the Time-Only UE model (i.e., the classic UE model),
                whose convex programming formulation is given by:
                \begin{equation}
                    \begin{aligned}
                        \label{eq:Time-Only UE}
                        & \operatorname*{Minimize}_{x_{a} \ \forall a\in A} \quad && \sum_{a \in A}\int_{0}^{x_{a}} \tau t_{a}(x)dx \\
                        & \text{subject to}                                    && \sum_{p}f^{s, t}_{p}=d^{s, t} \\
                        &                                                      && x_{a}=\sum_{s}\sum_{t}\sum_{p}f^{s, t}_{p}\delta^{s, t}(p,a), \quad \forall a \in A \\
                        &                                                      && x_{a}\ge 0, \quad f^{s, t}_{p}\ge0, \quad \forall s,t,p
                    \end{aligned}
                \end{equation}
                where
                $f^{s, t}_{p}$ is the path flow between the origin $s$ and destination $t$ on route $p$;
                and $\delta^{s, t}(p,a)=1$ when link $a$ lies on route $p$, otherwise $\delta^{s, t}(p,a)=0$.
                This optimization problem has been proven to be strictly convex, guaranteeing a unique solution for the UE flows~\citep{sheffi1985urban}.
                However, finding the optimal solution in city-scale road networks is challenging due to the prohibitive computational cost.
                To address this,
                we employ the bi-conjugate Frank-Wolfe algorithm~\citep{mitradjieva2013stiff,zill2019toll}.
                Additionally, in a UE model, although each traveler selects a route solely to minimize their own travel time,
                the total system cost also includes the marginal cost each traveler imposes on all other travelers sharing the links.
                Consequently, due to such selfish choices, the system settles into a suboptimal state~\citep{ccolak2016understanding}.

            \subsubsection{Time-Carbon UE Model}
            \label{sec:3.3.2}
                Car travel accounts for a dominant share of urban trips in the U.S.~\citep{mcguckin2018summary},
                a proportion projected to escalate due to ongoing urbanization and population growth.
                Moreover, research indicates that travelers exhibit strong behavioral inertia,
                maintaining their travel mode despite high travel costs~\citep{li2023joint, kim2023non}.
                Given these factors, this study focuses on modeling car traffic during the peak hour.
                \begin{enumerate}[label=(\arabic*), itemsep=0.5pt, parsep=0pt]
                \item \textbf{Fuel consumption factor}:
                    Consistent with existing studies~\citep{wang2025carbon,zong2024congestion},
                    we model fuel consumption per kilometer as a function of vehicle speed:
                    \begin{equation}
                        \label{eq:fuel consumption}
                        h(v)= {\varphi_{1} \left(v-v_e\right)}^{2} + \varphi_{2}
                    \end{equation}
                    where $h(v)$ denotes the fuel consumption factor, representing the fuel consumption per kilometer (L/km);
                    $v_e$ represents the optimal vehicle speed, at which the $h(v)$ reaches its minimum value $\varphi_{2}$.
                    The parameter $\varphi_{1}$ reflects the sensitivity of $h(v)$ to speed deviations;
                    specifically, a larger $\varphi_{1}$ indicates a more rapid increase in $h(v)$ as the current speed $v$ deviates from optimal speed $v_e$.
                    Following \citet{yang2018carbon},
                    we set $\varphi_{1}=3.968\times {10}^{-5}$, $v_e=73.412$~km/h,
                    and $\varphi_{2}=4.275\times {10}^{-2}$~L/km.
                    Consequently, the fuel consumption of a single vehicle on link $a$, denoted as $c_{a}(x_{a})$, is formulated as:
                    \begin{equation}
                        \label{eq:fuel consumption on links}
                        c_{a}(x_{a})=h(v_{a}) l_{a}
                    \end{equation}
                    where $c_{a}(x_{a})$ is measured in liters (L); $v_{a}$ represents the travel speed on link $a$; and $l_{a}$ is the length of link $a$.

                \item \textbf{Carbon emissions}:
                    Carbon emissions $e_{a}(x_{a})$ of a single vehicle on link $a$
                    are calculated as the product of its fuel consumption $c_{a}(x_{a})$ and a carbon emission factor $\gamma$,
                    as shown in Eq.~\eqref{eq:carbon emission}.
                    Since the factor $\gamma$ depends on gasoline composition and is specific to the study region,
                    we adopt the value from \citet{EPAwebGHG},
                    which states that the combustion of one gallon of gasoline produces 8,887~g of $\mathrm{{CO}_{2}}$,
                    equivalent to approximately 2,350~g of $\mathrm{{CO}_{2}}$ per liter.
                    \begin{equation}
                        \label{eq:carbon emission}
                        e_{a}(x_{a})=\gamma c_a(x_{a})
                    \end{equation}

                \item \textbf{Time-carbon cost}:
                    We formulate the generalized travel cost $C_{a}(x_{a})$ for a single vehicle on link $a$ based on the fuel consumption $c_{a}(x_{a})$,
                    thereby ensuring the cost function with a clear physical interpretation.
                    As shown in Eq.~\eqref{eq:Time-Carbon cost}, $C_{a}(x_{a})$ represents the time-carbon cost,
                    which reduces to the time cost when $\psi_{2}=0$ and to the carbon cost when $\psi_{1}=0$.
                    \begin{equation}
                        \label{eq:Time-Carbon cost}
                        \begin{split}
                            C_{a}(x_{a}) &= \tau \left(\psi_{1}t_{a}(x_{a}) + \psi_{2}c_{a}(x_{a})\right) \\
                                            &= \omega_{1} t_{a}(x_{a}) + \omega_{2} c_{a}(x_{a})
                        \end{split}
                    \end{equation}
                    where $\psi_{1}$ and $\psi_{2}$ represent the weights for the time and carbon costs;
                    and $\omega_{1}$ and $\omega_{2}$ correspond to the monetary coefficients for time (\$/min) and fuel (\$/L), respectively.
                \end{enumerate}

                In this study, we trade off travel time and emissions by adjusting the unit prices of time $\omega_1$ and fuel $\omega_2$,
                which are derived from varying cost weights $\psi_1$ and $\psi_2$.
                Specifically, $\omega_{2}$ is determined based on the VOT $\tau$ and the carbon cost weight $\psi_{2}$,
                representing the generalized fuel price as the sum of the market fuel price per liter, $\omega_{2}^{1}$, and
                the monetized cost of emissions generated by combusting one liter of fuel, $\omega_{2}^{2}$,
                defined as $\omega_{2}^{2} = \tau \psi_2 - \omega_{2}^{1}$.
                This is equivalent to imposing a carbon tax of
                $\omega_{3}=\omega_{2}^{2} / \gamma$ per gram of emissions.

                When every traveler adopts eco-routing to minimize their individual generalized travel cost,
                the objective of the Time-Carbon UE model is to find the UE flows based on the time-carbon cost.
                The convex programming formulation for this problem is given by Eq.~\eqref{eq:Time-Carbon UE}.
                To examine how trade-offs between time and emissions affect the UE flows,
                we vary the weight ratio of $\psi_{1}:\psi_{2}$:
                adopting $\psi_{2}=0$ (the Time-Only UE model) as the baseline;
                applying $\psi_{1}=8$, $\psi_{2}=2$ (i.e., $\psi_{1}=4$, $\psi_{2}=1$) and $\psi_{1}=1$, $\psi_{2}=9$
                to explore the impact of increasing the relative weight of carbon cost;
                and setting $\psi_{1}=0$, $\psi_{2}=10$ as the carbon-only boundary case.
                \begin{equation}
                    \begin{aligned}
                        \label{eq:Time-Carbon UE}
                        & \operatorname*{Minimize}_{x_{a} \ \forall a\in A} \quad && \sum_{a \in A}\int_{0}^{x_{a}} \tau (\psi_{1}t_{a}(x)+\psi_{2}c_{a}(x))dx \\
                        & \text{subject to}                                    && \sum_{p}f^{s, t}_{p}=d^{s, t} \\
                        &                                                      && x_{a}=\sum_{s}\sum_{t}\sum_{p}f^{s, t}_{p} \delta^{s, t}(p,a), \quad \forall a \in A \\
                        &                                                      && x_{a}\ge 0, \quad f^{s, t}_{p}\ge0, \quad \forall s,t,p
                    \end{aligned}
                \end{equation}

                \begin{enumerate}[label=(\arabic*), itemsep=0.5pt, parsep=0pt]
                    \item \textbf{Existence}:
                        Since the feasible region defined by Eq.~\eqref{eq:Time-Carbon UE} is compact (closed and bounded) and convex,
                        and the objective function is continuous on this domain, the existence of an optimal solution is guaranteed by
                        the Weierstrass Extreme Value Theorem.

                    \item \textbf{Uniqueness}:
                        The objective function in Eq.~\eqref{eq:Time-Carbon UE} is strictly convex with respect to the link flow $x_{a}$ when $\psi_{1}>0$ and $\psi_{2}>0$,
                        and the feasible region is convex because the constraints are linear.
                        Therefore, the UE flows is unique, as guaranteed by the Strict Convexity Theorem.

                    \item \textbf{Equivalence}:
                        The objective function in Eq.~\eqref{eq:Time-Carbon UE} represents a minimization problem subject to linear conservation and non-negativity constraints.
                        Consequently, the equilibrium conditions are satisfied at any minimum point
                        if the first-order optimality conditions are equivalent to the equilibrium conditions~\citep{sun2021multi}.

                        First, the Lagrangian function of the objective function in Eq.~\eqref{eq:Time-Carbon UE} is constructed as:
                        \begin{equation}
                        \label{Lagrangian function}
                            \begin{aligned}
                                \mathcal{L} =
                                &\sum_{a \in A} \int_{0}^{x_{a}}
                                \tau \big(\psi_{1} t_{a}(x) + \psi_{2} c_{a}(x)\big) \, dx \\
                                &+ \sum_{s,t} \sigma^{s, t} \Big(d^{s, t} - \sum_{p} f_{p}^{s, t} \Big)
                                + \sum_{a} \mu_{a} \Big(x_{a} - \sum_{s,t,p} f_{p}^{s, t} \delta^{s, t}(p,a) \Big),
                            \end{aligned}
                        \end{equation}
                        where $\sigma^{s, t}$ and $\mu_a$ are the Lagrange multipliers associated with the OD demand and link flow conservation constraints, respectively.

                        Taking the derivative of $\mathcal{L}$ with respect to the path flow $f_p^{s, t}$ yields:
                        \begin{equation}
                        \label{derivative}
                            \frac{\partial \mathcal{L}}{\partial f_p^{s, t}}
                            = \sum_{a} \delta^{s, t}(p,a)
                            \tau \big(\psi_{1} t_a(x_a) + \psi_{2} c_a(x_a)\big)
                            - \sigma^{s, t}.
                        \end{equation}

                        According to the Kuhn-Tucker (KKT) conditions, we have:
                        \begin{equation}
                        \label{KKT}
                            \frac{\partial \mathcal{L}}{\partial f_p^{s, t}}
                            \begin{cases}
                                = 0, & \text{if } f_p^{s, t} > 0,\\[4pt]
                                \ge 0, & \text{if } f_p^{s, t} = 0.
                            \end{cases}
                        \end{equation}

                        Therefore, for each OD pair $(s,t)$, the following holds:
                        \begin{equation}
                        \label{EQU}
                            \sum_{a\in p} C_a(x_a)
                            =
                            \sum_{a\in p} \tau \big(\psi_{1}t_a(x_a)+\psi_{2}c_a(x_a)\big)
                            \begin{cases}
                                =\sigma^{s, t}, & \text{if } f_p^{s, t}>0,\\[4pt]
                                \ge \sigma^{s, t}, & \text{if } f_p^{s, t}=0.
                            \end{cases}
                        \end{equation}

                        This condition is exactly the Wardrop equilibrium condition for the Time-Carbon UE,
                        and the first-order conditions of the optimization problem (Eq.~\eqref{eq:Time-Carbon UE}) are equivalent to
                        the equilibrium conditions of the Time-Carbon UE model.
                        Therefore, the equivalence between the first-order conditions and the Wardrop equilibrium conditions is ensured by the KKT conditions,
                        which provide the necessary optimality conditions for this convex optimization problem.
                \end{enumerate}

        \subsection{Indicator Design}
        \label{sec:Indicator Design}
            \subsubsection{Urban Metrics}
            \label{sec:Results of UE models}
                \noindent(1) \textbf{Urban Transport Emissions}

                Urban transport emissions (i.e., carbon emissions),
                denoted as $E$ and measured in grams (g), are quantified by aggregating the emissions across all links:
                \begin{equation}
                    \label{eq:total_carbon_emissions}
                    E = \sum_{a \in A}{e_{a}(x_{a})x_{a}}
                \end{equation}

                \noindent(2) \textbf{UETT}

                The UETT, denoted as $T$ and measured in minutes (min),
                represents the average travel time in the urban road network, calculated as:
                \begin{equation}
                    \label{eq:UETT}
                    T = \frac{\sum_{a \in A}{t_{a}{(x_{a})x_{a}}}}{Q}
                \end{equation}

                \noindent(3) \textbf{UETL}

                Given that vehicle kilometers traveled (VKT) tends to rise as the population grows,
                we utilize the UETL, denoted as $L$ and measured in kilometers (km), to represent the average travel distance in the urban road network.
                This metric is calculated by normalizing $\mathrm{VKT}$ by the total demand $Q$:
                \begin{equation}
                    \label{eq:VKT}
                    \mathrm{VKT} = \sum_{a \in A}{l_{a}x_{a}}
                \end{equation}

                \begin{equation}
                    \label{eq:UETL}
                    L = \frac{\sum_{a \in A}{l_{a}x_{a}}}{Q}
                \end{equation}

                \noindent(4) \textbf{VOC}

                We utilize the VOC to measure the congestion level of link $a$, denoted as $\mathrm{VOC}_a$ and calculated as:
                \begin{equation}
                    \label{eq:VOC}
                    \mathrm{VOC}_a = \frac{x_a}{q_a}
                \end{equation}

            \subsubsection{Eco-routing Efficiency Metrics}
            \label{sec:Efficiency of Eco-routing}
                \noindent(1) \textbf{Trade-off Percentage and Optimization Percentage}

                The trade-off percentage $P_{t}$ and optimization percentage $P_{c}$ are used to
                quantify the relative changes in emissions and UETT
                between Time-Only and Time-Carbon UE models, as calculated in Eqs.~\eqref{eq:Pt} and~\eqref{eq:Pc}.
                Within a city network, $P_{c}$ is employed to evaluate the optimization performance of eco-routing,
                where a larger $P_{c}$ indicates a better performance.
                \begin{equation}
                    \label{eq:Pt}
                    P_{t}=\frac{ T_\mathrm{TC} - T_\mathrm{TO} }{ T_\mathrm{TO} } \times 100\%
                \end{equation}
                \begin{equation}
                    \label{eq:Pc}
                    P_{c}=\frac{E_\mathrm{TO} - E_\mathrm{TC}}{ E_\mathrm{TO} } \times 100\%
                \end{equation}
                where the subscripts $\mathrm{TO}$ and $\mathrm{TC}$ denote urban metrics under Time-Only and Time-Carbon UE models, respectively.

                \noindent(2) \textbf{Optimization Potential and Cost Potential}

                Examining how $P_{t}$ and $P_{c}$ evolve with respect to population size
                helps to better understand the variation in eco-routing performance.
                Specifically, $P_{t}$ and $P_{c}$ are plotted on the y-axis against population size on the x-axis,
                with the demand distribution implicitly reflected in population.
                The slope of $P_{c}$ curve at a given population is termed the optimization potential ($\mathrm{OP}$),
                while the slope of $P_{t}$ curve is termed the cost potential ($\mathrm{CP}$), calculated as Eq.~\eqref{eq:OP and CP}.
                A larger $\mathrm{OP}$ implies that the $P_{c}$ increases more rapidly as population grows,
                equivalent to a larger improvement in the eco-routing performance,
                signifying that eco-routing possesses a greater emission reduction potential.
                \begin{equation}
                \label{eq:OP and CP}
                    \mathrm{OP} = \frac{\partial P_c}{\partial N }, \quad \mathrm{CP} = \frac{\partial P_t}{\partial N }
                \end{equation}
                where $N$ represents a given population size.

                \noindent(3) \textbf{Optimization Level}

                Ideally, to ensure that eco-routing remains efficient under population growth,
                $\mathrm{OP}$ should increase to enhance future optimization performance,
                while $\mathrm{CP}$ should decrease to moderate the increase in travel time.
                Accordingly, the urban carbon optimization status under eco-routing can be classified into four distinct levels,
                as shown in Fig.~\ref{fig:Optimization Level},
                where eco-routing is considered efficient only if the city is at Level 1 or 2.
                \begin{figure}[!ht]
                    \vspace{-1.0em}
                    \centering
                    \includegraphics{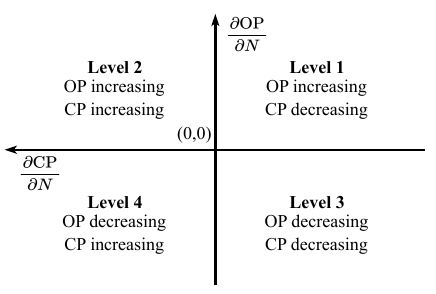}
                    \vspace{-0.7em}
                    \caption{The four optimization levels of eco-routing (with the positive x-axis extending to the left and the positive y-axis extending upwards).}
                    \label{fig:Optimization Level}
                    \vspace{-1.0em}
                \end{figure}

                In subsequent analyses, to facilitate a fair comparison of population sizes across cities,
                we transform the absolute population size into the relative population change
                with respect to a given reference population (e.g., the population under a specific scenario in a specific year), defined as:
                \begin{equation}
                \label{eq:population_percentage}
                    G_{\mathrm{ref}}(\%) = \frac{ N_{u, y} - N_{\mathrm{ref}} }{ N_{\mathrm{ref}} } \times 100\%
                \end{equation}
                where $N_{u, y}$ denotes the population size under scenario $u$ in year $y$;
                $N_{\mathrm{ref}}$ is the reference population;
                and $G_{\mathrm{ref}}$ represents the percentage change in population relative to the reference $N_{\mathrm{ref}}$ across all scenarios and years.
                Here, the scenario $u$ refers to the SSP scenario, as detailed in Section~\ref{sec:population data}.
                This transformation constitutes a strictly monotonic affine mapping of the absolute population size.
                Consequently, by the chain rule,
                the sign and the trend of both $\mathrm{OP}$ and $\mathrm{CP}$ remain invariant.
                Specifically, for any two reference populations $N_{\mathrm{ref}}^{1}$ and $N_{\mathrm{ref}}^{2}$,
                the corresponding $\mathrm{OP}$ values satisfy Eq.~\eqref{eq:chain},
                where $N_{\mathrm{ref}}^{2} / N_{\mathrm{ref}}^{1}$ is strictly positive.
                Therefore, changing the reference population preserves both the sign and the trend of $\mathrm{OP}$ and $\mathrm{CP}$,
                as well as the optimization levels.
                Analogously, we can select reference urban metrics,
                where the transformations are also strictly monotonic affine mappings of the corresponding original urban metrics.
                \begin{equation}
                    \label{eq:chain}
                    \operatorname{OP}^{2}
                    = \operatorname{OP}^{1} \cdot \frac{ N_{\mathrm{ref}}^{2} }{ N_{\mathrm{ref}}^{1} },
                \end{equation}

                The overall analytical framework of this study is illustrated in Fig.~\ref{fig:workflow}.
                \begin{figure}[!htbp]
                    \vspace{-0.7em}
                    \centering
                    \includegraphics[width=0.95\linewidth]{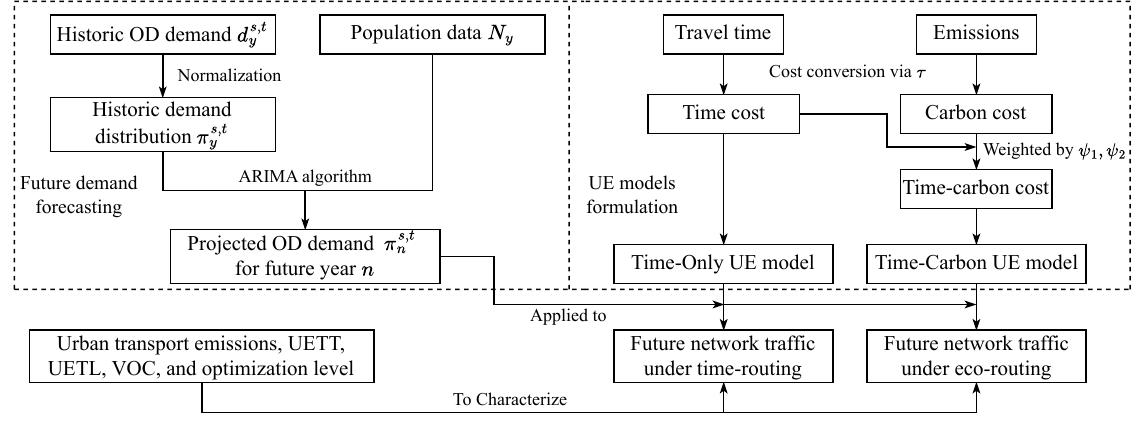}
                    \caption{The proposed analytical framework of this study.}
                    \label{fig:workflow}
                    \vspace{-1.0em}
                \end{figure}

        \subsection{Theoretical Derivations}
        \label{sec:theorem}
            This section derives general theorems based on the Time-Only and Time-Carbon UE models,
            offering theoretical insights that provide robust support for the case study findings.
            \subsubsection{Scaling Relationship between Emissions and Total Demand}
            \label{sec:theorem1}
                \begin{theorem}[Linear Scaling of Emissions with Total Demand]
                \label{thm:scaling}
                    Let $d(1)=\{d^{s, t}(1)\}_{(s,t) \in W}$ denote the base demand.
                    For a demand scaling factor $\lambda > 0$,
                    the scaled demand $d(\lambda)=\{d^{s, t}(\lambda)\}_{(s,t) \in W}$ is defined by $d^{s, t}(\lambda)=\lambda d^{s, t}(1)$;
                    $x(\lambda)=\{x_a(\lambda)\}_{a \in A}$ represents
                    the UE flows corresponding to the scaled demand $d(\lambda)$;
                    and the total carbon emission is given by:
                    \begin{equation}
                        \label{eq:thm1.1}
                        E(\lambda) = \sum_{a \in A} e_a(x_a(\lambda)) x_a(\lambda) = \sum_{a \in A} \gamma l_a h(v_a) x_a(\lambda)
                    \end{equation}

                    Within typical road networks, there exist constants $c_1$, $c_2 > 0$, which are independent of $\lambda$, such that:
                    \begin{equation}
                        \label{eq:thm1.2}
                        c_1 \lambda \leq E(\lambda) \leq c_2 \lambda
                    \end{equation}
                    implying that $E(\lambda) = \Theta(\lambda)$ and emissions scale linearly with total demand.
                \end{theorem}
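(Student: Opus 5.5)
The plan is to bound the fuel consumption factor $h(v_a)$ above and below by constants independent of $\lambda$. The remaining factor $\sum_a l_a x_a(\lambda)$, which is the VKT, is then bounded by multiples of $\lambda$ using flow conservation and the finiteness of path sets from Assumption~\ref{asp:network}.

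\textbf{Lower bound.} Since $h(v)=\varphi_1(v-v_e)^2+\varphi_2\ge \varphi_2>0$ for every speed, we have $E(\lambda)\ge \gamma\varphi_2\sum_a l_a x_a(\lambda)$. Every unit of OD flow from $s$ to $t$ traverses a path $p\in\mathcal{P}^{s,t}$. The length of $p$ is at least the shortest-path length $\ell^{s,t}_{\min}>0$ (links have positive length and $s\neq t$). Decomposing link flows into path flows then gives
\[
\sum_a l_a x_a(\lambda)=\sum_{(s,t)}\sum_p f_p^{s,t}(\lambda)\sum_a \delta^{s,t}(p,a)l_a \ge \lambda\sum_{(s,t)} d^{s,t}(1)\,\ell^{s,t}_{\min}.
\]
This yields $c_1=\gamma\varphi_2\sum_{(s,t)} d^{s,t}(1)\ell^{s,t}_{\min}$.

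\textbf{Upper bound.} Similarly, every path length is at most $\ell_{\max}=\max_{(s,t),p}\sum_{a\in p}l_a<\infty$, since the path sets are finite. Hence $\sum_a l_a x_a(\lambda)\le \lambda Q(1)\ell_{\max}$. It remains to bound $h(v_a)$ uniformly. Here the phrase ``within typical road networks'' must be made precise. The speed on link $a$ is $v_a=l_a/t_a(x_a)$ (with unit conversion), and by the BPR function \eqref{eq:BPR}, $0<v_a\le l_a/t_a^0=v_a^{0}$, the free-flow speed. Therefore $|v_a-v_e|\le \max(v_e,\,v_{\max}^0)$, where $v^0_{\max}=\max_a v_a^0$. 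This gives $h(v_a)\le \varphi_1\max(v_e,v^0_{\max})^2+\varphi_2=:h_{\max}$ uniformly in $\lambda$. The upper bound then follows with $c_2=\gamma h_{\max}Q(1)\ell_{\max}$.

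\textbf{Main obstacle.} The delicate point is that $h$ is bounded on the whole range of speeds only because speed lies in $(0,v^0_{\max}]$. As $\lambda\to\infty$ the speed tends to $0$, but $h(0)=\varphi_1v_e^2+\varphi_2$ is still finite, so the quadratic form keeps $h$ bounded. I would state explicitly that this relies on the speed-based fuel model \eqref{eq:fuel consumption} together with bounded free-flow speeds, which is what ``typical road networks'' should mean. A second subtlety is that path sets may contain cycles. Assumption~\ref{asp:network} states they are finite, which suffices for $\ell_{\max}<\infty$. The argument does not use uniqueness of the equilibrium or the specific routing weights $\psi_1,\psi_2$, so the bound holds for both the Time-Only and the Time-Carbon UE models.
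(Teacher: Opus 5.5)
Your proof is correct and follows the paper's route: bound $h(v_a)$ between $\varphi_2$ and a finite $h_{\max}$, then sandwich $\mathrm{VKT}(\lambda)$ between $\lambda L_{\min} Q(1)$ and $\lambda L_{\max} Q(1)$ using only feasibility of the scaled flows, as the paper does with its normalized flows $\xi(\lambda)\in K(1)$. Your version is a bit more careful in two places, neither of which changes the argument: you derive $h_{\max}$ from $0<v_a\le v^0_{\max}$ instead of asserting $h_{\max}=h(0)$ (which implicitly needs $v^0_{\max}\le 2v_e$, true here since free-flow speeds are at most 90 km/h), and you weight shortest-path lengths by OD demand.
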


                \begin{proof}
                    For a demand scaling factor $\lambda > 0$, the feasible region of the Beckmann convex program (Eq.~\eqref{eq:Time-Carbon UE}) is:
                    \begin{equation}
                    \label{eq:thm2.1}
                        K(\lambda) = \left\{ x \in \mathbb{R}^{|A|}_+ :
                            \exists f \geq 0, \; x_a = \sum_{s,t}\sum_{p} f_p^{s, t} \delta^{s, t}(p,a), \; \sum_{p} f_p^{s, t} = d^{s, t}(\lambda) \right\}
                    \end{equation}

                    \noindent \textbf{Step 1: Linear scaling of link flows with demand.}

                    Define the normalized link flows $\xi(\lambda) = \{\xi_a(\lambda)\}_{a \in A}$ by $\xi_a(\lambda) = x_a(\lambda) / \lambda$.
                    Since the UE flows $x(\lambda)$ belong to the feasible set $K(\lambda)$,
                    it follows from the linearity of constraints that the normalized flows $\xi(\lambda)$ must belong to the unit-demand feasible set $K(1)$.
                    Given that $K(1)$ is a non-empty compact convex set, the feasible region for the normalized flows $\xi(\lambda)$ is bounded;
                    consequently, each $\xi_a(\lambda)$ is uniformly bounded with respect to $\lambda$.
                    The relationship $x_a(\lambda) = \lambda \xi_a(\lambda)$ thus implies that the link flows scale linearly with the demand scaling factor $\lambda$,
                    demonstrating a linear scaling property between link flows and demand.

                    \noindent \textbf{Step 2: Boundedness of the fuel consumption factor.}

                    Within typical road networks, the vehicle speed $v_a$ is physically constrained to the range $[0, v_{\mathrm{max}}]$.
                    Consequently, there exist positive constants $h_{\mathrm{min}}$ and $h_{\mathrm{max}}$ such that the fuel consumption factor, $h(v_a)$, satisfies:
                    \begin{equation}
                    \label{eq:thm2.2}
                        h_{\mathrm{min}} = \varphi_2 \leq h(v_a) \leq h_{\mathrm{max}} < \infty
                    \end{equation}

                    Notably, as the demand scaling factor $\lambda \to \infty$, severe congestion causes travel time $t_a \to \infty$ and speed $v_a \to 0$.
                    Under these conditions, $h(v_a)$ remains bounded, rather than diverging to infinity:
                    \begin{equation}
                    \label{eq:thm2.3}
                        h_{\mathrm{max}} = h(0) = \varphi_1 v_e^2 + \varphi_2 < \infty
                    \end{equation}

                    \noindent \textbf{Step 3: Upper and lower bounds of emissions.}

                    \noindent \textit{Upper Bound:}
                    \begin{equation}
                    \label{ea:thm2.4}
                        E(\lambda) = \sum_{a \in A} \gamma l_a h(v_a) x_a(\lambda)
                            \leq \gamma h_{\mathrm{max}} \sum_{a \in A} l_a x_a(\lambda)
                            = \gamma \cdot h_{\mathrm{max}} \cdot \mathrm{VKT}(\lambda)
                    \end{equation}
                    where $\mathrm{VKT}(\lambda) = \sum_a l_a x_a(\lambda) = \lambda \sum_a l_a \xi_a(\lambda) \leq \lambda \cdot L_{\mathrm{max}} \cdot Q(1)$.
                    Here, $L_{\mathrm{max}}$ denotes the maximum feasible path length between OD pairs in the network,
                    and $Q(1)$ represents the total base demand, both of which are bounded and independent of $\lambda$,
                    allowing us to define the positive constant $c_2 = \gamma \cdot h_{\mathrm{max}} \cdot L_{\mathrm{max}} \cdot Q(1)$.

                    \vspace{1.1mm}
                    \noindent \textit{Lower Bound:}
                    \begin{equation}
                    \label{ea:thm2.5}
                        E(\lambda) \geq \gamma \cdot h_{\mathrm{min}} \cdot \mathrm{VKT}(\lambda)
                            \geq \gamma \cdot h_{\mathrm{min}} \cdot \lambda \cdot L_{\mathrm{min}} \cdot Q(1) = c_1 \lambda
                    \end{equation}
                    where $L_{\mathrm{min}}$ denotes the minimum feasible path length between OD pairs in the network,
                    and we define the positive constant $c_1 = \gamma \cdot h_{\mathrm{min}} \cdot L_{\mathrm{min}} \cdot Q(1)$.

                    Combining these results,
                    there exist positive constants $c_1$ and $c_2$ such that $c_1 \lambda \leq E(\lambda) \leq c_2 \lambda$,
                    which implies $E(\lambda) = \Theta(\lambda)$.
                \end{proof}

                \begin{remark}[Local Scaling Characteristics]
                \label{re:Local scaling}
                    Theorem~\ref{thm:scaling} establishes the global linear relationship between emissions and total demand.
                    However, from a local perspective, particularly within the finite range of $\lambda$ considered in this study,
                    $E(\lambda)$ may exhibit sublinear or superlinear growth trends.
                \end{remark}

                \begin{corollary}[Invariant Scaling Order under Routing Strategies]
                \label{cor:eco_coefficient}
                    For both Time-Only $(\psi_2=0)$ and Time-Carbon $(\psi_2>0)$ UE models,
                    the emissions satisfy:
                    \begin{equation}
                        \label{eq:corboth}
                        E^{\mathrm{TO}}(\lambda) = \Theta(\lambda), \quad E^{\mathrm{TC}}(\lambda) = \Theta(\lambda)
                    \end{equation}
                    where the superscripts $\mathrm{TO}$ and $\mathrm{TC}$ denote emissions under Time-Only and Time-Carbon UE models,
                    indicating that their global scaling orders are identical.

                    Furthermore, there exists a constant $\rho \in (0, 1)$ such that:
                    \begin{equation}
                        \label{eq:thm3.2}
                            E^{\mathrm{TC}}(\lambda) \approx \rho \cdot E^{\mathrm{TO}}(\lambda)
                    \end{equation}
                    implying that eco-routing reduces emissions while retaining the local scaling order.
                \end{corollary}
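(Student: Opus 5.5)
The first claim follows directly from Theorem~\ref{thm:scaling}, because that proof never uses the particular value of $\psi_1,\psi_2$. I would re-read Steps 1--3 and check that each ingredient holds uniformly in the routing strategy. Step 1 uses only that the UE flows $x(\lambda)$ lie in $K(\lambda)=\lambda K(1)$, and this is a property of the feasible region, not of the objective. Step 2 uses only the speed bound $v_a\in[0,v_{\max}]$ and the form of $h$. Step 3 bounds $\mathrm{VKT}(\lambda)$ by $\lambda L_{\min}Q(1)$ and $\lambda L_{\max}Q(1)$, which again depends only on path lengths.

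Existence of the equilibrium flows is given by the Existence argument of Section~\ref{sec:3.3.2} for every $\psi_1,\psi_2\ge 0$. Uniqueness is not needed: any equilibrium flow in $K(\lambda)$ satisfies the bounds. Applying Theorem~\ref{thm:scaling} with $\psi_2=0$ and with $\psi_2>0$ therefore gives
\[
c_1\lambda\le E^{\mathrm{TO}}(\lambda)\le c_2\lambda,\qquad c_1\lambda\le E^{\mathrm{TC}}(\lambda)\le c_2\lambda ,
\]
with the same constants $c_1=\gamma\varphi_2L_{\min}Q(1)$ and $c_2=\gamma h_{\max}L_{\max}Q(1)$. This is Eq.~\eqref{eq:corboth}.

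For the second claim I would define $\rho(\lambda)=E^{\mathrm{TC}}(\lambda)/E^{\mathrm{TO}}(\lambda)$. Dividing the two-sided bounds gives $c_1/c_2\le\rho(\lambda)\le c_2/c_1$ for all $\lambda$, so $\rho$ is bounded away from $0$ and $\infty$. Consequently $E^{\mathrm{TC}}$ and $E^{\mathrm{TO}}$ differ only by a bounded multiplicative factor, and the local power-law exponent (the slope of $\log E$ against $\log\lambda$) of $E^{\mathrm{TC}}$ differs from that of $E^{\mathrm{TO}}$ by $d\log\rho/d\log\lambda$. If $\rho$ varies slowly over the studied range, this difference is small, which is the sense in which the local scaling order is retained. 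I would state explicitly that the ``$\approx$'' in Eq.~\eqref{eq:thm3.2} means $\rho(\lambda)$ is confined to a bounded interval and varies slowly, and take $\rho$ to be its representative value over the range of $\lambda$ considered.

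The main obstacle is showing $\rho<1$, i.e.\ that eco-routing actually reduces total emissions. This is not automatic. The Time-Carbon UE minimizes a Beckmann potential, not $E$ itself, so Braess-type effects could in principle make $E^{\mathrm{TC}}>E^{\mathrm{TO}}$. My first attempt would be a first-order comparison. Write the Time-Carbon link cost as $\omega_1t_a+\omega_2c_a$ and compare the variational inequalities characterizing the two equilibria,
\[
\sum_a C^{\mathrm{TC}}_a(x^{\mathrm{TC}})(x^{\mathrm{TO}}_a-x^{\mathrm{TC}}_a)\ge 0,\qquad \sum_a t_a(x^{\mathrm{TO}})(x^{\mathrm{TC}}_a-x^{\mathrm{TO}}_a)\ge0 .
\]
Adding these and using monotonicity of $t_a$ should give $\sum_a c_a(x^{\mathrm{TC}})(x^{\mathrm{TO}}_a-x^{\mathrm{TC}}_a)\ge0$, i.e.\ per-vehicle fuel costs at the eco equilibrium are lower for the eco flow pattern.

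Turning this into $E^{\mathrm{TC}}<E^{\mathrm{TO}}$ requires controlling the externality term $\sum_a c_a'(x_a)x_a$. I expect this to be possible only under an additional regularity assumption, for example that $\omega_2$ is large relative to the variation of $c_a$ over the equilibrium range, or that the relevant links operate below $v_e$ so the comparison has a definite sign. Failing a clean proof, I would state $\rho<1$ as holding for typical networks, supported by the bound above and the numerical evidence of the case study.
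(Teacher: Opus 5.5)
Your proposal is correct and follows the paper's proof. For the first claim, the paper argues exactly as you do: the bounds in Theorem~\ref{thm:scaling} depend only on $K(\lambda)=\lambda K(1)$, on path lengths and on the boundedness of $h$, not on $\psi_1,\psi_2$. Your observations that the constants $c_1,c_2$ are literally shared by both models and that uniqueness is unnecessary are sharper than the paper's wording. For $\rho\in(0,1)$, the paper, like your fallback, offers no general proof: it reads $\rho\approx 1-P_c$ off Table~\ref{tab:Transport emissions}, where $P_c$ is positive and rises only slightly with $\lambda$. You should cite that table for the slow variation of $\rho$ as well, not only for $\rho<1$. Your variational-inequality computation correctly shows why $\rho<1$ cannot be derived in general.
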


                \begin{proof}
                    The derivation of Theorem~\ref{thm:scaling} demonstrates that the linear scaling relationship between $E(\lambda)$ and $\lambda$
                    is determined by the boundedness of $h(v)$, independent of the cost weights $\psi_1$ and $\psi_2$;
                    consequently, their scaling orders are identical.

                    Furthermore, the empirical results in Table~\ref{tab:Transport emissions} indicate that as $\lambda$ grows,
                    the optimization percentage $P_c$ consistently remains within the range $(0, 1)$ and exhibits only a small increase.
                    This suggests the existence of a constant $0 < \rho < 1$ such that
                    $E^{\mathrm{TC}}(\lambda) \approx \rho \cdot E^{\mathrm{TO}}(\lambda)$,
                    thereby retaining the local scaling order observed under time-routing.
                \end{proof}

                \begin{remark}[Policy Implications]
                \label{rm:fundamental approach}
                    Given that the total demand is directly proportional to the population size according to Eq.~\eqref{eq:P},
                    Theorem~\ref{thm:scaling} and Corollary~\ref{cor:eco_coefficient} reveal that
                    the linear scaling relationship between emissions and population $($total demand$)$
                    is an intrinsic property of urban road networks,
                    thereby remaining invariant regardless of routing strategies.
                    Consequently, the fundamental approach to mitigating emissions driven by future population growth
                    lies in managing the associated rise in total demand,
                    or following a sustainable development pathway characterized by slower population growth.
                \end{remark}

            \subsubsection{Critical Role of Travel Distance Reduction in Emission Mitigation}
            \label{sec:theorem2}
                \begin{theorem}[Critical Role of Travel Distance Reduction in Emission Mitigation]
                \label{thm:short_distance}
                    Define the average fuel consumption factor as $\bar{h} = E / (\gamma \cdot Q \cdot L)$.
                    Consequently, the emissions can be reformulated as $E = \gamma Q L \bar{h}$.
                    Let $\bar{v}$ denote the average travel speed, defined as $\bar{v} = L/T$.

                    The elasticity of fuel consumption factor $h(v)$ with respect to vehicle speed is defined as:
                    \begin{equation}
                    \label{eq:thm6.1}
                        \eta(v) = \frac{v h'(v)}{h(v)} = \frac{2\varphi_1 v (v - v_e)}{\varphi_1 (v-v_e)^2 + \varphi_2}
                    \end{equation}
                    which quantifies the percentage change in fuel consumption factor resulting from a 1\% change in vehicle speed.

                    The variation in emissions corresponding to the transition from
                    the initial state $(Q^1, L^1, \bar{h}^1)$ to the final state $(Q^2, L^2, \bar{h}^2)$ can be decomposed as:
                    \begin{equation}
                    \label{eq:decomposition1.1}
                        \Delta E = E^2 - E^1 =
                        \gamma Q^1 ( \underbrace{\bar{h}^1 \Delta L}_{ \Delta E_\mathrm{dist} }
                        + \underbrace{L^2 \Delta \bar{h}}_{ \Delta E_\mathrm{fuel} } )
                    \end{equation}
                    where $\Delta L = (Q^2 / Q^1) \cdot L^2 - L^1$ and $\Delta \bar{h} = (Q^2 / Q^1) \cdot (\bar{h}^2 - \bar{h}^1)$;
                    $\Delta E_\mathrm{dist}$ and $\Delta E_\mathrm{fuel}$
                    represent the distance effect and the fuel consumption factor effect, respectively.

                    Within typical road networks, the elasticity of $h(v)$ satisfies $| \eta(v) | < 1$,
                    and the following relationship holds:
                    \begin{equation}
                    \label{eq:thm6.2}
                        \frac{\left| \Delta E_\mathrm{fuel} \right|}{\left| \Delta E_\mathrm{dist} \right|} \le
                            | \eta(\bar{v}^1) | \cdot \frac{\left| \frac{\bar{v}^2 - \bar{v}^1}{\bar{v}^1} \right|}{\left| \frac{L^2 - L^1}{L^2} \right|}
                    \end{equation}

                    We analyze the emission variations in the following two scenarios:
                    \begin{enumerate}[label=(\arabic*), itemsep=0.5pt, parsep=0pt]
                        \item Transition from time-routing to eco-routing $(\Delta E < 0)$:
                            The system state shifts from $(Q, L^{\mathrm{TO}}, \bar{h}^{\mathrm{TO}})$ to $(Q, L^{\mathrm{TC}}, \bar{h}^{\mathrm{TC}})$.
                            To maximize emission reduction, $\Delta E$ should be as negative as possible.
                        \item Using eco-routing under population growth $(\Delta E > 0)$:
                            The system state shifts from $(Q(\lambda_1), L^{\lambda_1}, \bar{h}^{\lambda_1})$ to $(Q(\lambda_2), L^{\lambda_2}, \bar{h}^{\lambda_2})$,
                            where $Q(\lambda_{1}) < Q(\lambda_{2})$.
                            For effective emission mitigation, $\Delta E$ should be minimized.
                    \end{enumerate}

                    The inequality in Eq.~\eqref{eq:thm6.2} implies that in both scenarios, to achieve effective emission mitigation,
                    reducing travel distance is significantly more critical than improving $($or maintaining$)$ travel speed.
               \end{theorem}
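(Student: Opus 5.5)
We prove the statement in three steps: the decomposition identity, the elasticity bound, and then the ratio inequality Eq.~\eqref{eq:thm6.2}.

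\textbf{Step 1: the decomposition.} Write $E^i=\gamma Q^i L^i\bar h^i$ and substitute the stated definitions of $\Delta L$ and $\Delta\bar h$ into the right-hand side of Eq.~\eqref{eq:decomposition1.1}:
\[
\gamma Q^1\Big(\bar h^1\big(\tfrac{Q^2}{Q^1}L^2-L^1\big)+L^2\tfrac{Q^2}{Q^1}(\bar h^2-\bar h^1)\Big)=\gamma Q^2L^2\bar h^2-\gamma Q^1L^1\bar h^1 .
\]
The cross terms $\pm\gamma Q^2L^2\bar h^1$ cancel, so the identity is exact.

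\textbf{Step 2: the bound $|\eta(v)|<1$ on typical speeds.} Set $u=v-v_e$, so that
\[
\eta=\frac{2\varphi_1 vu}{\varphi_1u^2+\varphi_2}.
\]
With the calibrated constants ($\varphi_2/\varphi_1\approx 1077$, $v_e\approx 73.4$), $|\eta|<1$ is equivalent to
\[
\varphi_1u^2-2\varphi_1|vu|+\varphi_2>0 .
\]
For $v\in[0,v_e]$ we have $u\le 0$ and $|vu|=v(v_e-v)\le v_e^2/4\approx 1347$. Here the plan is to check the quadratic numerically on the admissible speed range $[0,v_{\max}]$. This is the step I expect to be the real obstacle. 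The maximum of $|\eta|$ over $[0,v_e]$ may exceed $1$ near $v\approx 40$~km/h, where $2v|u|\approx 2690$ while $u^2+1077\approx 2200$. If so, I would restrict ``typical road networks'' to the speed band where the inequality holds, or replace the constant $1$ by $\sup|\eta|$ over that band. I would present this as an explicit assumption on the speed range, consistent with the wording ``within typical road networks.''

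\textbf{Step 3: the ratio inequality.} First,
\[
\frac{|\Delta E_{\rm fuel}|}{|\Delta E_{\rm dist}|}=\frac{L^2\,\tfrac{Q^2}{Q^1}\,|\bar h^2-\bar h^1|}{\bar h^1\,|\Delta L|}.
\]

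\emph{Scenario (1).} Here $Q^1=Q^2$, so $\Delta L=L^2-L^1$. The ratio becomes
\[
\frac{|\bar h^2-\bar h^1|/\bar h^1}{|L^2-L^1|/L^2}.
\]

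\emph{Scenario (2).} Here $Q^2>Q^1$. When $L^2\ge L^1$ we have $|\Delta L|\ge \tfrac{Q^2}{Q^1}|L^2-L^1|$, so the ratio is bounded by the same expression. The case $L^2<L^1$ I would handle by assuming, as in the empirical regime, that $\Delta L$ keeps the sign of $L^2-L^1$ scaled appropriately. Otherwise I would simply state the bound in terms of $|\Delta L|/L^2$.

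It then remains to bound the relative change in $\bar h$ by $|\eta(\bar v^1)|$ times the relative change in $\bar v$. I would treat $\bar h$ as $h$ evaluated at the representative speed $\bar v$; this is a mean-field approximation, justified by the flow-weighted averaging in $\bar h=\sum_a l_ax_ah(v_a)/\sum_a l_ax_a$. A first-order expansion then gives
\[
\frac{\Delta\bar h}{\bar h^1}\approx \eta(\bar v^1)\,\frac{\Delta\bar v}{\bar v^1}.
\]
To make this a genuine inequality rather than an approximation, I would use the mean value theorem on $\log h$ versus $\log v$: this yields $|\log(\bar h^2/\bar h^1)|\le \sup|\eta|\cdot|\log(\bar v^2/\bar v^1)|$ over the interval between the two speeds. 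Replacing the supremum by $\eta(\bar v^1)$ is exact only to first order, and I expect to state it as a local/small-change bound. This, together with Step 2, is the delicate part.

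Finally, since $|\eta|<1$ and in practice relative speed changes are comparable to or smaller than relative distance changes, the ratio is below one. Hence $\Delta E_{\rm dist}$ dominates in both scenarios, which gives the stated conclusion.
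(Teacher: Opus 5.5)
Your Step 1 and the Scenario (1) reduction match the paper. Scenario (2) has a genuine gap, and it falls in exactly the case the theorem is about.

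Under eco-routing with population growth, the paper's own results have $L$ \emph{decreasing}. For example, in San Francisco under SSP2, from 2040$\to$2050, $L$ falls from $5.95$ to $5.93$~km while $r=Q^2/Q^1\approx 1.074$. The case $L^2\ge L^1$ that you treat rigorously is the time-routing pattern.

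For $L^2<L^1$ and $r>1$, the inequality you need is $r|L^2-L^1|\le|\Delta L|$. It holds if and only if $(L^1-L^2)/L^2\le (r-1)/(r+1)$, a condition that forces $\Delta L=rL^2-L^1>0$. If instead $\Delta L$ kept the sign of $L^2-L^1$, then $|\Delta L|=L^1-rL^2<r(L^1-L^2)$ for every $r>1$, so the bound \emph{fails}. Your proposed assumption is therefore precisely the regime in which Eq.~\eqref{eq:thm6.2} is false. It is also not the empirical regime: there $\Delta L\approx 1.074\cdot 5.93-5.95\approx 0.42>0$. The missing ingredient is the one the paper invokes: the relative variation of $L$ is much smaller than that of $Q$. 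That forces $\Delta L>0$ and gives $r|L^2-L^1|\le \Delta L$, which is exactly what lets you pass from $\frac{Q^2}{Q^1}|\bar h^2-\bar h^1|/|\Delta L|$ to $|\bar h^2-\bar h^1|/|L^2-L^1|$.

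Your closing step also needs to change. The premise that relative speed changes are comparable to or smaller than relative distance changes is false here. In the same example, $\bar v=L/T$ drops by about $8\%$ while $L$ drops by about $0.3\%$, so the right-hand side of Eq.~\eqref{eq:thm6.2} is well above one. The theorem never asserts that the ratio is below one. The paper instead reads Eq.~\eqref{eq:thm6.2} as a per-unit sensitivity statement. Treating the variations of $\bar v$ and $L$ as independent, the fuel effect of a $1\%$ change in $\bar v$ is matched by a change in $L$ of at most $|\eta(\bar v)|$ percent, which is less than $1\%$. Hence $E$ is more sensitive to $L$ than to $\bar v$.

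On Step 2, your numerics are right: $|\eta|>1$ roughly for $33.5<v<64.4$~km/h. The paper resolves this as you suggest, by assuming $\bar v\le 30$~km/h, where $|\eta(\bar v)|\le|\eta(30)|\approx 0.88$. Only $\eta(\bar v^1)$ is needed, not a bound over all link speeds. The first-order elasticity step is an approximation in the paper as well, so your mean-field treatment of $\bar h$ is on par with it.
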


                \begin{proof}
                    \noindent \textbf{Step 1: Derivation of the formula.}

                    Starting from the definition $E = \gamma Q L \bar{h}$:
                    \begin{equation}
                        \begin{aligned}
                        \label{eq:thm7.1}
                            \Delta E &= \gamma Q^2 L^2 \bar{h}^2 - \gamma Q^1 L^1 \bar{h}^1 \\
                            &= \gamma Q^1 \left[ \frac{Q^2}{Q^1}
                            \left( L^2 \bar{h}^2 - L^2 \bar{h}^1 + L^2 \bar{h}^1 \right)
                            - L^1 \bar{h}^1 \right] \\
                            &= \gamma Q^1 \left[ \bar{h}^1 \left( \frac{Q^2}{Q^1}L^2 - L^1 \right)
                            + L^2 \frac{Q^2}{Q^1} \left( \bar{h}^2 - \bar{h}^1 \right) \right]
                        \end{aligned}
                    \end{equation}

                    Within typical road networks, the average speed generally satisfies $\bar{v} \le 30$ km/h~\citep{TRB2016HCM}.
                    Consequently, given the parameters $\varphi_{1}$, $\varphi_{2}$, and $v_{e}$ configured as in Section~\ref{sec:3.3.2},
                    this implies that $| \eta(\bar{v}) | < 1$.
                    Considering the definition of elasticity, we obtain the Eq.~\eqref{eq:thm7.2},
                    which holds trivially when $Q^1 = Q^2$;
                    in cases where $Q^1 \neq Q^2$, it follows from the physical reality that the relative variation of $L$ is significantly smaller than that of $Q$:
                    \begin{equation}
                    \label{eq:thm7.2}
                        \frac{\left| \Delta E_\mathrm{fuel} \right|}{\left| \Delta E_\mathrm{dist} \right|}
                            = \frac{\left| \frac{\frac{Q^2}{Q^1}(\bar{h}^2 - \bar{h}^1)}{\bar{h}^1} \right|}
                                    {\left| \frac{\frac{Q^2}{Q^1}L^2 - L^1}{L^2} \right|}
                            \approx \frac{\left| \eta(\bar{v}^1) \cdot \frac{\frac{Q^2}{Q^1}(\bar{v}^2 - \bar{v}^1)}{\bar{v}^1} \right|}
                                    {\left| \frac{\frac{Q^2}{Q^1}L^2 - L^1}{L^2} \right|}
                            \le | \eta(\bar{v}^1) | \cdot \frac{\left| \frac{\bar{v}^2 - \bar{v}^1}{\bar{v}^1} \right|}{\left| \frac{L^2 - L^1}{L^2} \right|}
                    \end{equation}

                    \noindent \textbf{Step 2: Critical Role of Travel Distance Reduction.}

%

                    Assuming that variations in $\bar{v}$ and $L$ are independent,
                    a change in $\bar{v}$ affects solely $\Delta E_\mathrm{fuel}$
                    and leaves $\Delta E_\mathrm{dist}$ unchanged.
                    Eq.~\eqref{eq:thm7.2} implies that a $1\%$ change in speed $\bar{v}$,
                    yielding $\left| \Delta E_\mathrm{fuel} \right| > 0$,
                    only requires a change in $L$ of at most $|\eta(\bar{v})|\% < 1\%$
                    to achieve $\left| \Delta E_\mathrm{dist} \right| = \left| \Delta E_\mathrm{fuel} \right| $.
                    Therefore, $\Delta E$ is significantly more sensitive to changes in travel distance $L$ than to changes in speed $\bar{v}$.

                    \vspace{0.5mm}
                    \noindent \textit{From time-routing to eco-routing:}

                    Consider the system transition from $(Q, L^{\mathrm{TO}}, \bar{h}^{\mathrm{TO}})$ to $(Q, L^{\mathrm{TC}}, \bar{h}^{\mathrm{TC}})$,
                    where $\Delta E < 0$.
                    Ideally, a smaller $L^{\mathrm{TC}}$ and a higher $\bar{v}^{\mathrm{TC}}$ (closer to optimal speed $v_e$)
                    would contribute to maximizing the emission reduction (i.e., making $\Delta E$ more negative).
                    However, within typical road networks, a trade-off exists:
                    prioritizing a shorter distance $(L^{\mathrm{TC}} < L^{\mathrm{TO}})$ implies a reduction in speed $(\bar{v}^{\mathrm{TC}} < \bar{v}^{\mathrm{TO}})$,
                    whereas prioritizing higher speed $(\bar{v}^{\mathrm{TC}} > \bar{v}^{\mathrm{TO}})$ comes at the expense of longer travel distance $L^{\mathrm{TC}} > L^{\mathrm{TO}}$.
                    Recalling that $\Delta E$ is significantly more sensitive to travel distance than to speed,
                    we conclude that reducing travel distance is far more critical than improving travel speed for maximizing emission reduction.

                    \vspace{0.5mm}
                    \noindent \textit{Eco-routing under population growth:}

                    Consider the system evolution driven by population growth
                    from $(Q(\lambda_1), L^{\lambda_1}, \bar{h}^{\lambda_1})$ to $(Q(\lambda_2), L^{\lambda_2}, \bar{h}^{\lambda_2})$,
                    where $\Delta E > 0$.
                    Similarly, a smaller $L^{\lambda_2}$ and a higher $\bar{v}^{\lambda_2}$ (closer to optimal speed) would contribute to minimizing the emission growth.
                    However, within typical road networks,
                    population growth leads to increased demand scaling factor $\lambda$,
                    which raises link flows $x(\lambda)$ and inevitably reduces travel speeds $v(\lambda)$.
                    Prioritizing maintaining speed $\bar{v}^{\lambda_2}$ inevitably results in a longer travel distance
                    compared to prioritizing reducing $L^{\lambda_2}$.
                    Analogously, given the higher sensitivity of $\Delta E$ to $L$,
                    prioritizing reducing travel distance is significantly more important than maintaining travel speed for mitigating emissions.
                \end{proof}

                \begin{remark}[Policy Implications]
                \label{re:construction strategies}
                    Theorem~\ref{thm:short_distance} establishes that to minimize emissions,
                    reducing travel distance is significantly more critical than improving $($or maintaining$)$ travel speed,
                    whether shifting from time-routing to eco-routing or using eco-routing under population growth.
                    Consequently, to reduce emissions and construct low-carbon road networks,
                    on the one hand, it is crucial to enhance the jobs-housing balance and promote polycentric development to shorten travel distance.
                    On the other hand, strategies should focus on improving the short routes between OD pairs.

                \end{remark}

            \subsubsection{Impact of Capacity Expansion on Emissions}
            \label{sec:theorem3}
                \begin{theorem}[Impact of Capacity Expansion on Emissions]
                \label{thm:sensitivity}
                    For specific link $k$, the sensitivity of emissions $E$ with respect to capacity $q_k$ is given by:
                    \begin{equation}
                    \label{eq:dCE_dq}
                        \frac{\partial E}{\partial q_k}
                            = x_k \cdot \frac{\mathrm{d} e_k}{\mathrm{d} t_k} \cdot \frac{\partial t_k}{\partial q_k}
                            = - \alpha \beta t_k^0 \cdot \frac{\mathrm{d} e_k}{\mathrm{d} t_k} \cdot \left( \frac{x_k}{q_k} \right)^{\beta+1}
                    \end{equation}
                    where $\mathrm{d} e_k / \mathrm{d} t_k > 0$ and $\partial t_k / \partial q_k < 0$.
                    Consequently, $\partial E / \partial q_k < 0$.
                \end{theorem}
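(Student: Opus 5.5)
The plan is a partial-derivative computation in which the flow $x_k$ is held fixed. This is a first-order, local (ceteris paribus) sensitivity: when $q_k$ changes, the only direct channel is through $t_k$. The equilibrium re-routing response $\partial x_a/\partial q_k$ is treated as a second-order effect and neglected, which is the convention under which the stated formula holds.

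\textbf{Step 1: write $E$ through the link-$k$ term.} Start from Eq.~\eqref{eq:total_carbon_emissions}, $E=\sum_a e_a(x_a)x_a$. The capacity $q_k$ appears only in $t_k$ via Eq.~\eqref{eq:BPR}. Write $e_k$ as a function of $t_k$:
\[
v_k = \frac{60\, l_k}{t_k}, \qquad e_k = \gamma\, l_k\, h(v_k).
\]
With $x_a$ fixed, the chain rule then gives the first equality
\[
\frac{\partial E}{\partial q_k}= x_k\,\frac{\mathrm{d} e_k}{\mathrm{d} t_k}\,\frac{\partial t_k}{\partial q_k}.
\]

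\textbf{Step 2: differentiate the BPR function.} Differentiating Eq.~\eqref{eq:BPR} gives
\[
\frac{\partial t_k}{\partial q_k} = -\alpha\beta\, t_k^0\, \frac{x_k^\beta}{q_k^{\beta+1}} < 0,
\]
since $\alpha,\beta,t_k^0>0$. Multiplying by $x_k$ turns the factor $x_k^{\beta+1}/q_k^{\beta+1}$ into $(x_k/q_k)^{\beta+1}$, which is the second equality in Eq.~\eqref{eq:dCE_dq}.

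\textbf{Step 3: sign of $\mathrm{d}e_k/\mathrm{d}t_k$ (the main obstacle).} Compute
\[
\frac{\mathrm{d} e_k}{\mathrm{d} t_k}=\gamma\, l_k\, h'(v_k)\,\frac{\mathrm{d} v_k}{\mathrm{d} t_k}=-\gamma\, l_k\, h'(v_k)\,\frac{v_k}{t_k}.
\]
Here $h'(v)=2\varphi_1(v-v_e)$, so the sign is positive exactly when $v_k<v_e$. Two physical facts give this. First, by the BPR form, $t_k\ge t_k^0$, so $v_k$ never exceeds the free-flow speed. Second, the free-flow speeds of urban links, and more so the congested speeds at the flows relevant to expansion, lie below $v_e\approx 73.4$ km/h. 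This is the same "typical road networks" restriction used in Theorem~\ref{thm:short_distance}, where $\bar v\le 30$ km/h. I would state this as the standing hypothesis $v_k<v_e$ and note that it fails only for uncongested high-speed freeway links, where expansion is irrelevant anyway. Under this hypothesis $\mathrm{d}e_k/\mathrm{d}t_k>0$.

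\textbf{Step 4: conclude.} The product in Eq.~\eqref{eq:dCE_dq} has $x_k>0$ on a used link, a positive factor $\mathrm{d}e_k/\mathrm{d}t_k$, and a negative factor $\partial t_k/\partial q_k$. Hence $\partial E/\partial q_k<0$. If $x_k=0$, the derivative vanishes, which is consistent with the formula.
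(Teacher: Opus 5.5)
Your proposal is correct and follows the paper's proof essentially step for step: hold the link flows fixed, apply the chain rule through $e_k=\gamma l_k h(l_k/t_k)$, differentiate the BPR function, and combine signs (the paper also defers flow redistribution to Remark~\ref{re:redistribution}). The one difference is an improvement: the paper simply asserts $\mathrm{d}e_k/\mathrm{d}t_k>0$ from the ``physical interpretation'' of $h$, whereas your computation shows this holds exactly when $v_k<v_e$. That hypothesis is genuinely needed, because the paper's own 90 km/h expressways (Table~\ref{tab:city parameters}) run above $v_e\approx 73.4$ km/h at low VOC, so you should drop the claim that urban free-flow speeds lie below $v_e$ and keep $v_k<v_e$ as an explicit assumption.
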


                \begin{proof}
                    \begin{equation}
                    \label{eq:thm8.1}
                        \frac{\partial E}{\partial q_k} = \frac{\partial \sum_{a \in A} x_a e_a(x_a)}{\partial  q_k}
                            = x_k \cdot \frac{\mathrm{d} \left( \gamma l_k h(\frac{l_k}{t_k}) \right) }{\mathrm{d} t_k} \cdot \frac{\partial t_k}{\partial q_k}
                            = - \alpha \beta t_k^0 \cdot \frac{\mathrm{d} e_k}{\mathrm{d} t_k} \cdot \left( \frac{x_k}{q_k} \right)^{\beta+1}
                    \end{equation}

                    Based on the physical interpretation of the fuel consumption factor $h(v_k)$,
                    the link emission $e_k$ increases monotonically with travel time $t_k$,
                    yielding $\mathrm{d} e_k / \mathrm{d} t_k > 0$.
                    Combining this with the fact that $ \partial t_k / \partial q_k < 0$,
                    it follows immediately that $ \partial E / \partial q_k < 0$.
                \end{proof}

                \begin{remark}[Impact of Flow Redistribution]
                \label{re:redistribution}
                    The partial derivative of the generalized cost $C_k$ on link $k$ is negative:
                    \begin{equation}
                    \label{eq:ma1.1}
                        \frac{\partial C_k}{\partial q_k}
                            = \tau \left( \psi_1 \frac{\partial t_k}{\partial q_k} + \psi_2 \frac{\partial c_k}{\partial q_k} \right) < 0
                    \end{equation}
                    indicating that capacity expansion leads to a reduction in $C_k$.
                    In this study, the expanded links are located on the short routes between OD pairs.
                    Consequently, the reduced generalized cost on these links
                    attracts more traffic flow under redistribution,
                    thereby reducing both $\mathrm{UETL}$ and $\mathrm{VKT}$.
                    Given the critical role of distance reduction in emission mitigation,
                    coupled with the fact that capacity expansion enhances average travel speed,
                    the emissions ultimately decrease following the redistribution,
                    indicating that the total derivative of $E$ satisfies $ \mathrm{d} E / \mathrm{d} q_k < 0$.
                \end{remark}

                \begin{proposition}[Properties of Carbon Influence]
                \label{prop:zipf}
                    The carbon influence of link $k$ is defined as:
                    \begin{equation}
                    \label{eq:carbon_influence}
                        \mathcal{I}_k = \left| \frac{\partial E}{\partial q_k} \right|
                    \end{equation}
                    which quantifies the change in emissions resulting from a unit capacity expansion on link $k$.
                    Within a road network, the following properties hold:
                    \begin{enumerate}[label=(\arabic*), itemsep=0.5pt, parsep=0pt]
                        \item Dependence on flow and capacity: $\mathcal{I}_k \propto \left( x_k / q_k \right)^{\beta+1} = \mathrm{VOC}_k^{\beta+1}$.
                        \item Significant heterogeneity across links: The distribution of carbon influence is highly uneven,
                                with the ratio between the maximum and minimum values spanning several orders of magnitude:
                                $ \mathcal{I}_{\mathrm{max}} / \mathcal{I}_{\mathrm{min}} \approx 10^3 \textnormal{--} 10^5$.
                    \end{enumerate}
                \end{proposition}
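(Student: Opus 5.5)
The plan is to take Theorem~\ref{thm:sensitivity} as the starting point and treat the two properties separately. Property (1) is essentially algebraic, while property (2) is a quantitative, network-dependent claim. It can only be justified by bounding the range of the factors entering $\mathcal{I}_k$ and combining this with the empirical spread of link states.

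\textbf{Property (1).} Taking absolute values in Eq.~\eqref{eq:dCE_dq}, and using $\mathrm{d}e_k/\mathrm{d}t_k>0$, gives
\begin{equation*}
\mathcal{I}_k=\alpha\beta\, t_k^0\,\frac{\mathrm{d}e_k}{\mathrm{d}t_k}\,\mathrm{VOC}_k^{\beta+1}.
\end{equation*}
Since $\alpha,\beta$ are global BPR constants, the proportionality $\mathcal{I}_k\propto \mathrm{VOC}_k^{\beta+1}$ holds with link-dependent prefactor $\kappa_k=\alpha\beta t_k^0\,\mathrm{d}e_k/\mathrm{d}t_k$.

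To make the proportionality meaningful, I would show that $\kappa_k$ varies only mildly compared with the VOC term. Writing $e_k=\gamma l_k h(l_k/t_k)$ gives
\begin{equation*}
\frac{\mathrm{d}e_k}{\mathrm{d}t_k}=-\gamma\frac{l_k^2}{t_k^2}h'(v_k)=-\gamma\, v_k^2\, h'(v_k),
\end{equation*}
so that $\kappa_k=-\alpha\beta\gamma\, t_k^0 v_k^2 h'(v_k)$. Here $v_k\in(0,v_e)$ on congested links, by the argument in Step~2 of Theorem~\ref{thm:scaling} and the speed bound used in Theorem~\ref{thm:short_distance}. On this range $-v^2h'(v)=2\varphi_1 v^2(v_e-v)$ is bounded above by an explicit constant, and it is positive away from $v=0$.

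\textbf{Property (2).} The ratio $\mathcal{I}_{\max}/\mathcal{I}_{\min}$ factors into three ratios: free-flow times, speed-dependent factors, and $\mathrm{VOC}^{\beta+1}$. With the standard $\beta=4$, the exponent $\beta+1=5$ amplifies VOC heterogeneity dramatically. A VOC range of only about $4$ to $10$ (for example from $0.2$ to $1.5$) already produces a factor of $4^5$ to $10^5$, that is, $10^3$ to $10^5$. The $t_k^0$ factor, set by link length, adds further spread. I would therefore bound $\log(\mathcal{I}_{\max}/\mathcal{I}_{\min})$ from below by $(\beta+1)\log(\mathrm{VOC}_{\max}/\mathrm{VOC}_{\min})$, minus the bounded log-spread of $\kappa_k$.

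\textbf{Main obstacle.} The difficulty is that the stated range $10^3$ to $10^5$ cannot follow from the UE model alone. It depends on the realized VOC distribution and on how the links with near-zero flow are handled, since $\mathcal{I}_{\min}$ could otherwise be arbitrarily small. I would first try to restrict attention to loaded links whose VOC exceeds a small threshold, for instance the links used in the case study. I would then invoke the empirical VOC spread from the UE solutions to pin the ratio in this range, presenting the order-of-magnitude claim as a consequence of the fifth-power amplification rather than as a universal bound.
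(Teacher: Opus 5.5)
Your proposal follows the paper's route. Property (1) comes from taking absolute values in Eq.~\eqref{eq:dCE_dq}. Property (2) comes from the fifth-power amplification of a VOC ratio with $\beta=4$. The paper's Step~2 compares a representative Principal Arterial (VOC $0.6$--$0.85$) with a Local road (VOC $0.1$--$0.3$), with VOC values taken from FHWA/HCM classification standards rather than from the UE output. It obtains $(0.75/0.15)^5=3125$ and appeals to links with $\mathrm{VOC}>1$ to reach $10^4$--$10^5$, treating $\alpha\beta t_k^0\,\mathrm{d}e_k/\mathrm{d}t_k$ as common to both links without comment.

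The step you add to justify this is that $\kappa_k$ has a bounded, mild log-spread that can be subtracted. That step fails, because $\kappa_k$ depends on $\mathrm{VOC}_k$ itself. Under BPR, $v_k=v_k^0/(1+\alpha\,\mathrm{VOC}_k^\beta)$ with $v_k^0=l_k/t_k^0$, so your own formula gives
\begin{equation*}
\mathcal{I}_k = 2\alpha\beta\gamma\varphi_1\, l_k v_k^0\,(v_e-v_k)\,\frac{\mathrm{VOC}_k^{\beta+1}}{\bigl(1+\alpha\,\mathrm{VOC}_k^{\beta}\bigr)^{2}}.
\end{equation*}
For $\beta>1$ and fixed link geometry, this is bounded in $\mathrm{VOC}_k$ and decays like $\mathrm{VOC}_k^{1-\beta}$ once $\alpha\,\mathrm{VOC}_k^\beta\gg1$. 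The rational factor peaks at $\mathrm{VOC}_k^\beta=(\beta+1)/(\alpha(\beta-1))$, i.e.\ $\mathrm{VOC}_k\approx1.35$ for $\alpha=0.5$, $\beta=4$. So the $\mathrm{VOC}>1$ regime you use to reach $10^5$ is exactly where the speed factor works against the amplification. For VOC $1.5$ versus $0.2$, the factor $v_k^2$ alone divides $7.5^5\approx2.4\times10^4$ by about $12.5$.

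Your control of $\kappa_k$ has two further problems.
\begin{enumerate}[label=(\roman*), itemsep=0.5pt, parsep=0pt]
\item The product $v^2(v_e-v)$ vanishes at $v=v_e$ as well as at $v=0$. The bound $\bar v\le30$~km/h in Theorem~\ref{thm:short_distance} is a network average, so it does not place lightly loaded expressways (free-flow $90$~km/h) in $(0,v_e)$. On such links $\kappa_k$ is near zero or negative.
\item Subtracting the global spread of $\kappa_k$ conflicts with your remark that $t_k^0$ ``adds further spread''. Link lengths span orders of magnitude, so that correction would swamp the lower bound.
\end{enumerate}

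What does work is the pairwise argument the paper uses implicitly: $\mathcal{I}_{\max}/\mathcal{I}_{\min}\ge\mathcal{I}_a/\mathcal{I}_b$ for any two links. Choose two links with comparable $l_k v_k^0$, free-flow speeds well below $v_e$, and $\mathrm{VOC}\le1$. There the $v_k^2$ correction is at most a factor $(1+\alpha)^2$ and $v_e-v_k$ varies moderately, which recovers $O(10^3)$. The $10^4$--$10^5$ end cannot be obtained by pushing VOC past $1$. It must come from a small $\mathrm{VOC}_{\min}$ (your ``main obstacle'') or from geometric spread, and you should state it that way.
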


                \begin{proof}
                    \noindent \textbf{Step 1: Expression for Carbon Influence.}

                    According to Theorem~\ref{thm:sensitivity}, the carbon influence is given by:
                    \begin{equation}
                    \label{eq:prop13.1}
                        \mathcal{I}_k = \left| \frac{\partial E}{\partial q_k} \right|
                        = \left| \frac{d e_k}{d t_k} \cdot  \left( -\alpha \beta t_k^0 \cdot \frac{x_k^{\beta+1}}{q_k^{\beta+1}} \right) \right|
                        \propto \left( \frac{x_k}{q_k} \right)^{\beta+1}
                        = \mathrm{VOC}_k^{\beta+1}
                    \end{equation}

                    \noindent \textbf{Step 2: Significant Heterogeneity of Carbon Influence.}

                    Based on typical urban road functional classification standards~\citep{FHWA2013functional, TRB2016HCM},
                    we adopt the parameters for a representative road network during the peak hour,
                    where Principal Arterials exhibit VOC ranging from 0.6 to 0.85, and Local roads range from 0.1 to 0.3.
                    Setting the standard BPR parameter $\beta = 4$~\citep{sheffi1985urban},
                    we calculate the ratio of carbon influence between the Principal Arterial $(\mathcal{I}_1)$ and the Local road $(\mathcal{I}_2)$:
                    \begin{equation}
                    \label{eq:prop13.2}
                        \frac{\mathcal{I}_1}{\mathcal{I}_2}
                             \approx \left( \frac{\mathrm{VOC}_1}{\mathrm{VOC}_2} \right)^5 \\
                             = \left( \frac{0.75}{0.15} \right)^5
                            = 3125 = O(10^3)
                    \end{equation}

                    Considering more extreme scenarios under population growth
                    where severely congested links exhibit $\mathrm{VOC} > 1$,
                    the resulting ratio can reach the order of $10^4 \textnormal{--} 10^5$.
                \end{proof}

                \begin{remark}[Policy Implications]
                \label{re:dynamic carbon influence}
                    A larger carbon influence $\mathcal{I}_k$ indicates that
                    a unit capacity expansion on link $k$ yields a greater reduction in emissions.
                    Consequently, given the significant heterogeneity of carbon influence revealed by Proposition~\ref{prop:zipf},
                    expansion strategies should prioritize bottleneck links with high $\mathrm{VOC}$ to maximize emission reduction efficiency.
                    Furthermore, effective planning requires extending beyond static indicators
                    to account for the dynamic evolution of network traffic under population growth.
                    Specifically, expansion planning should target
                    links projected to experience significant increases in $\mathrm{VOC}$,
                    thereby ensuring robust emission mitigation for future development.
                \end{remark}

                \begin{corollary}[Invariant Scaling Order under Bounded Capacity Expansion]
                \label{cor:finite_expansion}
                    Consider a bounded capacity expansion scheme, defined as
                    $\mathbf{q}' = \mathbf{q} + \Delta \mathbf{q}$ subject to $\|\Delta \mathbf{q}\| \leq B$,
                    where $\mathbf{q}$ and $\mathbf{q}'$ denote the link capacity vectors before and after expansion, respectively;
                    $\Delta \mathbf{q}$ represents the expansion vector; and $B$ is the budget constraint.
                    Let $E(\lambda; \mathbf{q})$ and $E(\lambda; \mathbf{q}')$ denote the emissions before and after expansion.
                    Then:
                    \begin{enumerate}[label=(\arabic*), itemsep=0.5pt, parsep=0pt]
                        \item Invariant Scaling Order: $E(\lambda; \mathbf{q}) = \Theta(\lambda)$ and $E(\lambda; \mathbf{q}') = \Theta(\lambda)$.
                        \item Proportional Reduction: There exists a constant $\rho_q \in (0, 1]$ such that
                            $E(\lambda; \mathbf{q}') \approx \rho_q \cdot E(\lambda; \mathbf{q})$,
                            indicating that bounded capacity expansion reduces emissions while retaining the local scaling order.
                    \end{enumerate}
                \end{corollary}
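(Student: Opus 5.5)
The plan is to reduce part (1) to Theorem~\ref{thm:scaling} applied to the modified network, and to obtain part (2) by combining Theorem~\ref{thm:sensitivity} and Remark~\ref{re:redistribution} with a perturbation argument in $\Delta\mathbf{q}$, in the same spirit as the proof of Corollary~\ref{cor:eco_coefficient}.

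\textbf{Step 1 (invariant scaling order).} Replacing $\mathbf{q}$ by $\mathbf{q}'$ only changes the BPR parameters $q_a$ in Eq.~\eqref{eq:BPR}. The network topology $G=(V,A)$, the path sets $\mathcal{P}^{s,t}$, the link lengths $l_a$ and the base demand $Q(1)$ are all unchanged. Assumption~\ref{asp:network} therefore still holds, and the Beckmann program Eq.~\eqref{eq:Time-Carbon UE} with capacities $\mathbf{q}'$ still has a solution $x(\lambda;\mathbf{q}')$ lying in the same feasible set $K(\lambda)$ of Eq.~\eqref{eq:thm2.1}. The constants in the proof of Theorem~\ref{thm:scaling} are
\[
c_1=\gamma h_{\mathrm{min}} L_{\mathrm{min}} Q(1), \qquad c_2=\gamma h_{\mathrm{max}} L_{\mathrm{max}} Q(1).
\]
These depend only on $h$, on path lengths and on demand, and never on $q_a$. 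Re-running Steps 1--3 of that proof verbatim gives $c_1\lambda\le E(\lambda;\mathbf{q}')\le c_2\lambda$, with the \emph{same} constants as for $\mathbf{q}$. Hence both $E(\lambda;\mathbf{q})$ and $E(\lambda;\mathbf{q}')$ are $\Theta(\lambda)$. The budget $B$ is not even needed for this part, since we only require $q'_a>0$.

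\textbf{Step 2 (existence of $\rho_q$).} From Step 1, the ratio $E(\lambda;\mathbf{q}')/E(\lambda;\mathbf{q})$ lies in $[c_1/c_2,\,c_2/c_1]$ for every $\lambda$, so it is bounded away from $0$ and $\infty$. To show it is at most $1$, I would invoke Theorem~\ref{thm:sensitivity} and Remark~\ref{re:redistribution}. Since $\Delta\mathbf{q}\ge 0$, integrate the total derivative along the segment $\mathbf{q}+s\Delta\mathbf{q}$, $s\in[0,1]$:
\[
E(\lambda;\mathbf{q}')-E(\lambda;\mathbf{q})=\int_0^1 \sum_k \frac{\mathrm{d}E}{\mathrm{d}q_k}\,\Delta q_k\,\mathrm{d}s\le 0 .
\]
This gives $\rho_q(\lambda)\in(0,1]$. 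Next, write $\rho_q(\lambda)=1-\epsilon(\lambda)$. A first-order expansion bounds $\epsilon(\lambda)$ by
\[
\sup_k \mathcal{I}_k \,\frac{\|\Delta\mathbf{q}\|}{E(\lambda;\mathbf{q})}\le \frac{\sup_k \mathcal{I}_k\, B}{c_1\lambda},
\]
using Proposition~\ref{prop:zipf}. The correction is therefore controlled by $B$, and $\rho_q$ varies slowly in $\lambda$. Finally, as in Corollary~\ref{cor:eco_coefficient}, I would appeal to the case-study results showing only a small variation of the reduction percentage over the considered range of $\lambda$, and take $\rho_q$ to be a representative constant. This yields $E(\lambda;\mathbf{q}')\approx\rho_q E(\lambda;\mathbf{q})$, so the local scaling order is retained.

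\textbf{Main obstacle.} The hard step is the sign claim $\mathrm{d}E/\mathrm{d}q_k\le 0$ for the \emph{total} derivative. Theorem~\ref{thm:sensitivity} only gives the partial derivative at fixed flows, and Braess-type redistribution could in principle raise emissions. I would first try to rely on Remark~\ref{re:redistribution}, which covers expansions on short OD routes. Failing that, I would weaken part (2) to $\rho_q\in(0,\,c_2/c_1]$, with $\rho_q\le 1$ stated as holding for the targeted-bottleneck scheme used in the study.

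A second issue is that $\sup_k\mathcal{I}_k$ grows with $\lambda$ through $\mathrm{VOC}^{\beta+1}$. The approximation ``constant $\rho_q$'' is therefore only local in $\lambda$. This matches the $\approx$ in the statement and the empirical justification style of Corollary~\ref{cor:eco_coefficient}.
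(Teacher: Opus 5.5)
Your proposal follows the paper's argument. Part (1) holds because the constants of Theorem~\ref{thm:scaling} depend only on the bounds on $h$, path lengths and demand, none of which involve $\mathbf{q}$; part (2) uses the sign of $\partial E/\partial q_k$ (Theorem~\ref{thm:sensitivity}, Proposition~\ref{prop:zipf}) plus an empirical appeal to Table~\ref{tab:Expanding table}, as in Corollary~\ref{cor:eco_coefficient}. You are more careful than the paper on two points: you make the $\mathbf{q}$-independence of $c_1,c_2$ explicit, which gives $E(\lambda;\mathbf{q}')/E(\lambda;\mathbf{q})\in[c_1/c_2,c_2/c_1]$ outright, and you flag that $\rho_q\le 1$ needs the sign of the \emph{total} derivative, which the paper only asserts through Remark~\ref{re:redistribution}.

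One caution: your first-order bound via $\sup_k\mathcal{I}_k$ runs into the same partial-versus-total issue, since $\mathcal{I}_k$ is computed at fixed flows, and it tacitly assumes an $\ell_1$ norm. So that bound is heuristic rather than a proof, and the approximate constancy of $\rho_q$ rests on the case-study table, exactly as in the paper.
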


                \begin{proof}
                    According to Theorem~\ref{thm:scaling},
                    the scaling order of carbon emissions is fundamentally determined by the boundedness of the fuel consumption factor $h(v)$.
                    While the bounded expansion strategy modifies link capacities, it does not alter this boundedness.
                    Therefore, the scaling order remains invariant at $\Theta(\lambda)$.

                    Proposition~\ref{prop:zipf} establishes that capacity expansion leads to a reduction in carbon emissions.
                    Analogous to Corollary~\ref{cor:eco_coefficient},
                    the results in Table~\ref{tab:Expanding table} indicate the existence of a constant $\rho_q \in (0, 1]$, such that
                    $E(\lambda; \mathbf{q}') \approx \rho_q \cdot E(\lambda; \mathbf{q})$,
                    thereby retaining the local scaling order.
                \end{proof}


                \begin{proposition}[Optimal Bounded Capacity Expansion Scheme and Shadow Price]
                \label{prop:shadow_price}
                    Consider the optimal capacity expansion problem subject to a budget constraint:
                    \begin{equation}
                        \label{eq:prop10.1}
                        \operatorname*{Minimize}_{q'_a \geq q_a \ \forall a \in A} E(\mathbf{q}')
                            \quad \text{s.t.} \quad \sum_{a \in A} \kappa_a (q'_a - q_a) \leq B
                    \end{equation}
                    where $E(\mathbf{q}')$ denotes the emissions after expansion;
                    and $\kappa_a > 0$ represents the unit expansion cost of link $a$.
                    To minimize $E(\mathbf{q}')$ within the budget constraint,
                    for any expanded link $a$ $(q'_a > q_a)$ and for any non-expanded link $b$ $(q'_b = q_b)$,
                    the following optimality condition must hold:
                    \begin{equation}
                        \label{eq:prop10.2}
                        \frac{\mathcal{I}_a}{\kappa_a} = \nu, \quad
                        \frac{\mathcal{I}_b}{\kappa_b} \le \nu
                    \end{equation}
                    where $\nu \geq 0$ is the shadow price of the budget constraint.
                    Consequently, the optimal scheme prioritizes allocating the budget to links with the highest carbon influence per unit cost.
                    Upon budget exhaustion, the carbon influence per unit cost equalizes across all expanded links and corresponds to the shadow price.
                \end{proposition}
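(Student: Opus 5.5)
The plan is to treat the problem in Eq.~\eqref{eq:prop10.1} as a smooth finite-dimensional program in the expansion variables $\Delta q_a = q'_a - q_a \ge 0$ and to read off its Karush--Kuhn--Tucker conditions. The steps, in order, are as follows.

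\textbf{Step 1: the objective as a function of capacities.} I would regard $E(\mathbf{q}')$ as a differentiable function of $\mathbf{q}'$. The flows $x_a$ entering $E$ are taken either fixed at their equilibrium values, as in Theorem~\ref{thm:sensitivity}, or as the UE flows of Eq.~\eqref{eq:Time-Carbon UE}, which are unique for $\psi_1,\psi_2>0$. In the first reading, the partial derivatives $\partial E/\partial q'_a$ are exactly those computed in Theorem~\ref{thm:sensitivity}. Since each is negative, $\partial E/\partial q'_a = -\mathcal{I}_a$ with $\mathcal{I}_a$ as in Proposition~\ref{prop:zipf}.

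\textbf{Step 2: the Lagrangian.} The feasible set is polyhedral: it is defined by the nonnegativity constraints $\Delta q_a \ge 0$ and one linear budget inequality. Linear constraints satisfy a constraint qualification automatically, so KKT conditions are necessary at any local minimizer. I would form
\[
\mathcal{L}(\mathbf{q}',\nu,\mu) = E(\mathbf{q}') + \nu\Bigl(\sum_{a} \kappa_a (q'_a - q_a) - B\Bigr) - \sum_a \mu_a (q'_a - q_a),
\]
with multipliers $\nu \ge 0$ and $\mu_a \ge 0$.

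\textbf{Step 3: reading off the optimality condition.} Stationarity gives $-\mathcal{I}_a + \nu\kappa_a - \mu_a = 0$ for every link $a$. Complementary slackness then splits the links into two cases. For an expanded link ($q'_a > q_a$) we have $\mu_a = 0$, hence $\mathcal{I}_a/\kappa_a = \nu$. For a non-expanded link $b$ we have $\mu_b \ge 0$, hence $\mathcal{I}_b/\kappa_b = \nu - \mu_b/\kappa_b \le \nu$. This is exactly Eq.~\eqref{eq:prop10.2}.

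\textbf{Step 4: the budget and the interpretation.} Because every $\mathcal{I}_a > 0$, stationarity forces $\nu \ge \mathcal{I}_a/\kappa_a > 0$ on any expanded link. Complementary slackness then makes the budget bind, so the budget is exhausted. Standard sensitivity theory identifies $\nu = -\mathrm{d}E^*/\mathrm{d}B$, which justifies calling $\nu$ the shadow price. The prioritization statement follows from the inequality: no non-expanded link can have a strictly higher marginal reduction per dollar than an expanded one. A greedy ordering by $\mathcal{I}_a/\kappa_a$ therefore describes the allocation, and the ratios equalize across expanded links at $\nu$.

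\textbf{Main obstacle.} The delicate part is Step 1, namely justifying differentiability of $E$ in $\mathbf{q}'$ and the identification of its gradient with $-\mathcal{I}_a$ once flow redistribution is allowed. The UE flows depend on capacities only through a parametric variational problem. I would first argue in the fixed-flow (partial-derivative) setting of Theorem~\ref{thm:sensitivity}, where everything is explicit. I would then remark that, under strict convexity and a nondegeneracy assumption that the set of used paths is locally constant, the implicit function theorem gives differentiable UE flows. In that case the same KKT argument goes through with $\mathcal{I}_a$ replaced by the total derivative $|\mathrm{d}E/\mathrm{d}q_a|$, whose sign is supported by Remark~\ref{re:redistribution}.

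A second caveat is that $E$ need not be convex in $\mathbf{q}'$. The conditions are therefore necessary for optimality rather than sufficient, and that is precisely what the statement claims ("must hold").
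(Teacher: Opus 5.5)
Your proposal is correct and follows essentially the same route as the paper. Both form the Lagrangian of the budget-constrained problem, impose the KKT conditions for the bound constraints $q'_a \ge q_a$, and substitute $\partial E/\partial q'_a = -\mathcal{I}_a$ from Theorem~\ref{thm:sensitivity}. You handle the bounds with explicit multipliers $\mu_a \ge 0$, while the paper uses the equivalent complementarity form $\partial\mathcal{L}/\partial q'_a \ge 0$, $(q'_a-q_a)\,\partial\mathcal{L}/\partial q'_a = 0$.

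Your extras (the linear constraint qualification, the budget-binding argument, and the partial-versus-total-derivative caveat) make explicit what the paper leaves implicit. One small correction: $\mathcal{I}_a = 0$ on links with $x_a = 0$, so deduce $\nu > 0$ from any link carrying positive flow rather than from ``every $\mathcal{I}_a > 0$''.
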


                \begin{proof}
                    Construct the Lagrangian function for the constrained minimization problem:
                    \begin{equation}
                        \label{eq:prop11.1}
                        \mathcal{L}(\mathbf{q}', \nu) = E(\mathbf{q}') + \nu \left( \sum_{a \in A} \kappa_a (q'_a - q_a) - B \right)
                    \end{equation}
                    where $\nu$ is the Lagrange multiplier associated with the budget constraint.
                    The KKT conditions with respect to $q'_a$ require:
                    \begin{equation}
                        \label{eq:prop11.2}
                        \frac{\partial \mathcal{L}}{\partial q'_a} = \frac{\partial E(\mathbf{q}')}{\partial q'_a} + \nu \kappa_a \geq 0, \ \textnormal{and} \
                        (q'_a - q_a ) \frac{\partial \mathcal{L}}{\partial q'_a} = 0
                    \end{equation}

                    By definition, the carbon influence $\mathcal{I}_a = \left| \partial E / \partial q'_a \right|$.
                    Since $ \partial E / \partial q'_a < 0$,
                    we have $ \partial E / \partial q'_a = - \mathcal{I}_a$.
                    Substituting this into the conditions yields:
                    \begin{equation}
                        \label{eq:prop11.3}
                        -\mathcal{I}_a + \nu \kappa_a
                        \begin{cases}
                            = 0, & \text{if } q'_a > q_a,\\
                            \geq 0, & \text{if } q'_a = q_a.
                        \end{cases}
                        \quad \Rightarrow \quad \frac{\mathcal{I}_a}{\kappa_a}
                        \begin{cases}
                            = \nu, & \text{if } q'_a > q_a,\\
                            \le \nu, & \text{if } q'_a = q_a.
                        \end{cases}
                    \end{equation}
                    where the multiplier $\nu \geq 0$ can be interpreted as the shadow price of the budget constraint.
                \end{proof}

            \subsection{Summary: Key Findings and Theoretical Support}
            \label{sec:theory_summary}
                The core findings of this section and their theoretical support are summarized in Table~\ref{tab:theory_support}.
                \begin{table}[H]
                \vspace{-1.0em}
                \centering
                \small
                \renewcommand{\arraystretch}{1.5}
                \caption{Key findings and theoretical support}
                \label{tab:theory_support}
                \begin{tabular}{
                    >{\raggedright\arraybackslash}m{0.60\textwidth}
                    >{\raggedright\arraybackslash}m{0.35\textwidth}
                }
                    \toprule
                    \centering Key Findings & \centering Theoretical Support \tabularnewline
                    \midrule

                    Emissions scale linearly with population; \newline
                    neither eco-routing nor bounded capacity expansion fundamentally alters this scaling order.
                    & Theorem~\ref{thm:scaling}; \newline
                      Corollary~\ref{cor:eco_coefficient} \& Corollary~\ref{cor:finite_expansion} \\

                    Emission mitigation is primarily driven by shortening travel distance rather than maintaining travel speed;
                    & Theorem~\ref{thm:short_distance} \\

                    Expanding a few bottlenecks with high $\mathrm{VOC}$ yields significant emission reduction benefits; \newline
                    the optimal expansion scheme under budget constraints is governed by the shadow price.
                    & Theorem~\ref{thm:sensitivity} $\to$ Proposition~\ref{prop:zipf} \newline
                      $\to$ Proposition~\ref{prop:shadow_price} \\

                    Policies should extend beyond routing and construction strategies to manage the rising demand driven by population growth.
                    & Corollary~\ref{cor:eco_coefficient} \& Corollary~\ref{cor:finite_expansion} \\
                    \bottomrule
                \end{tabular}
                \vspace{-1.0em}
                \end{table}


                \begin{remark}[Theoretical Limitations]
                    This study is subject to the following theoretical limitations:
                    \vspace{-0.5em}
                    \begin{enumerate}[label=(\arabic*), itemsep=0.5pt, parsep=0pt]
                        \item All theorems are derived assuming the \textnormal{BPR} link performance function.
                        \item The fuel consumption factor $h(v)$ is assumed to follow a quadratic form,
                            whereas operational emission models in practice, such as the Motor Vehicle Emission Simulator $(\mathrm{MOVES})$,
                            exhibit greater complexity.
                    \end{enumerate}
                \end{remark}

    \section{Study Area and Data}
    \label{sec:data}
        \subsection{Study Area}
        \label{sec:study area}
            To ensure the generalizability of the findings, we considered factors such as
            geographic location, urban scale, network topology, and demographic characteristics to select six representative U.S. cities:
            San Francisco, Dallas, Pittsburgh, Miami, Denver, and Philadelphia.
            Their geographical locations and basic information are presented in Fig.~\ref{fig:1} and Table~\ref{tab:basic city},
            with population and land area data derived from 2020 Census statistics~\citep{USCensus2024}.
            Specifically, these cities are distributed across the United States,
            including larger cities such as San Francisco, Dallas, Miami, and Philadelphia,
            as well as smaller cities like Denver and Pittsburgh.
            Among these cities, Dallas has the largest land area and total population,
            while San Francisco exhibits the highest population density.
            The network topologies of these cities are also distinct.
            For example, San Francisco is divided by bay terrain,
            forcing traffic to rely on key bridges and tunnels.
            In contrast, Dallas features relatively flat and continuous terrain,
            forming grid-like transport systems dominated by surface roads.

            \begin{figure}[!htb]
                \centering
                \includegraphics[width=0.98\linewidth]{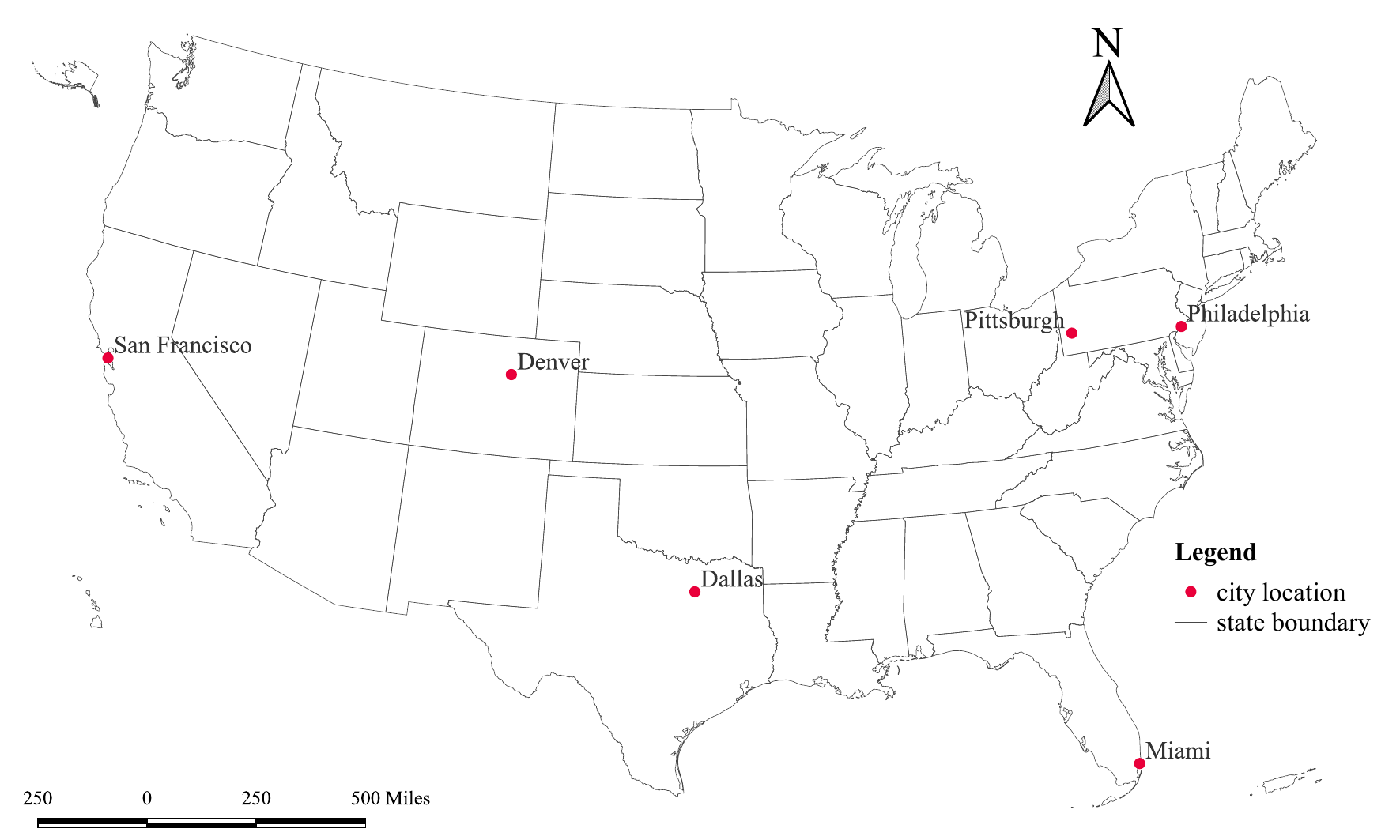}
                \caption{The geospatial distribution of the six representative U.S. cities.}
                \label{fig:1}
                \vspace{-1.0em}
            \end{figure}

            \begin{table}[!htb]
                \centering
                \footnotesize
                \caption{Basic information for the six representative U.S. cities.}
                \label{tab:basic city}
                \begin{tabularx}{0.8\textwidth}{l c*{3}{C}}
                    \toprule
                    City & State & Land Area ($\mathrm{km}^2$) & 2020 Census & Population Density \\
                    \midrule
                    San Francisco   & California & 121.5 & 873,965 & 7,195 \\
                    Dallas          & Texas & 879.6 & 1,304,379 & 1,483 \\
                    Pittsburgh      & Pennsylvania & 143.5 & 302,971 & 2,112 \\
                    Miami           & Florida & 93.2 & 442,241 & 4,743 \\
                    Denver          & Colorado & 396.5 & 715,522 & 1,805 \\
                    Philadelphia    & Pennsylvania & 348.1 & 1,603,797 & 4,607 \\
                    \bottomrule
                \end{tabularx}
                \vspace{-1.0em}
            \end{table}

        \subsection{OD Demand and Road Network Data}
        \label{sec:od dmand data}
            In this study, we constructed OD demand data using the LODES dataset,
            which has been widely adopted in existing studies~\citep{mckinney2021total}.
            This dataset is derived from administrative records where employers report employee work and home locations,
            thereby offering comprehensive block-level workforce commuting patterns across all U.S. states from 2002 to 2019.
            The U.S. Census Bureau collaborates with the state agencies to process this information into detailed OD pairs,
            making the LODES dataset an excellent basis for the initial OD demand matrix~\citep{LODES7}.

            The urban road network data required for traffic assignment are derived from the OpenStreetMap database~\citep{OpenStreetMap},
            an open-source mapping platform providing comprehensive road network data.
            This allows us to extract the nodes, links, network topologies, and road attributes of the six selected U.S. cities.
            As shown in Table~\ref{tab:city parameters},
            Dallas possesses the most complex road network, featuring 21,389 nodes and 77,818 links,
            while Pittsburgh has the smallest, with 3,532 nodes and 13,662 links.

            Notably, the origins and destinations within the initial OD demand matrix do not directly correspond to the nodes in the road network.
            To address this spatial mismatch, we aggregate the OD demand data from the fine-grained block level to the coarser tract level,
            where a tract comprises multiple blocks~\citep{smith2021census}.
            Consistent with existing research~\citep{abdel2013geographical},
            we adopt tracts as traffic analysis zones (TAZs) to serve as the fundamental units for traffic assignment,
            which ensures computational feasibility while minimizing the loss of spatial resolution.
            The number of TAZs across the six U.S. cities is presented in Table~\ref{tab:city parameters},
            where cities with a greater number of nodes and links typically possess more TAZs.

            Regarding the UE model parameters, following the work of \citet{xu2024unified},
            the BPR parameters, road capacities, and free flow speeds for the six selected U.S. cities are summarized in Table~\ref{tab:city parameters}.
            \begin{table}[!htb]
                \centering
                \footnotesize
                \begin{threeparttable}
                    \caption{Parameters of the BPR function, road capacity, free flow speed, and road network data.}
                    \label{tab:city parameters}
                    \setlength{\tabcolsep}{3pt}
                    \begin{tabularx}{\textwidth}{l c *{12}{C}}
                        \toprule
                        \multirow{2}{*}{City} & \multicolumn{2}{c}{BPR function} & \multicolumn{3}{c}{Capacity (veh/h)} & \multicolumn{3}{c}{Free flow speed (km/h)} & \multicolumn{3}{c}{Road network data}\\
                        \addlinespace[-3pt]
                        \cmidrule(lr){2-3} \cmidrule(lr){4-6} \cmidrule(lr){7-9} \cmidrule(lr){10-12}
                        \addlinespace[-3pt]
                        & \hspace{2mm}$\alpha$\hspace{3mm} & $\beta$ & road1 & road2 & road3 & road1 & road2 & road3 & TAZs & nodes & links\\
                        \midrule
                        San Francisco   & 0.5 & 1.8 & 2,200 & 2,000 & 1,400 & 90 & 60 & 40  & 194 & 4,986  & 18,002  \\
                        Dallas          & 0.6 & 1.3 & 2,200 & 2,000 & 1,400 & 90 & 65 & 45  & 328 & 21,389 & 77,818  \\
                        Pittsburgh      & 0.5 & 2.0 & 2,200 & 2,000 & 1,200 & 90 & 60 & 30  & 149 & 3,532  & 13,662  \\
                        Miami           & 0.5 & 1.5 & 1,800 & 1,500 &   900 & 65 & 50 & 35  & 108 & 4,121  & 15,108  \\
                        Denver          & 0.5 & 1.5 & 2,000 & 1,800 & 1,300 & 90 & 60 & 35  & 175 & 9,205  & 34,724  \\
                        Philadelphia    & 0.5 & 1.2 & 2,000 & 1,800 & 1,200 & 90 & 60 & 30  & 389 & 10,410 & 38,641  \\
                        \bottomrule
                    \end{tabularx}
                    \begin{tablenotes}[flushleft]
                        \footnotesize
                        \item where road 1, road 2, and road 3 correspond to Expressway, Highway, and Local roads, respectively.
                    \end{tablenotes}
                \end{threeparttable}
                \vspace{-1.0em}
            \end{table}

        \subsection{SSPs and Population Data}
        \label{sec:population data}
            The population projection data used in this study are grounded in the SSPs framework presented by \citet{o2017roads}.
            Unlike most global scenario exercises driven by inputs or elements,
            the SSPs were developed based on the challenges space, defined primarily by the nature of the outcomes.
            To span this space, the pathways were constructed using a backcasting scenario approach,
            where a particular outcome was first identified,
            and the key elements of society required to reach those states were subsequently determined.
            This process resulted in five distinct narratives (SSP1 to SSP5),
            with their core characteristics and impacts on future population outlined in Table~\ref{tab:SSPs}.
            \begin{table}[!htb]
                \centering
                \small
                \caption{Core characteristics of SSPs and their impacts on future population.}
                \label{tab:SSPs}
                \begin{tabularx}{0.98\textwidth}{l >{\raggedright\arraybackslash}X >{\raggedright\arraybackslash}X >{\raggedright\arraybackslash}X}
                    \toprule 
                    SSP & Alias & Core Characteristics & Impacts on Population \\
                    \midrule 
                    SSP1 & Sustainable Development & Green, inclusive, global cooperation & Low growth: lowest population peak, fastest decline \\
                    SSP2 & Middle of the Road & Continuation of historical trends & Medium growth: medium peak population, slow decline \\
                    SSP3 & Regional Rivalry & Nationalism, conflicts, fragmentation & High growth: highest population peak, longest duration of growth \\
                    SSP4 & Inequality & Unequal wealth, social division & Medium growth but highly unequal: extreme internal disparities \\
                    SSP5 & Fossil-fueled Development & High carbon, globalization, growth maximized & Rapid growth with earlier transition: relatively early peak, faster decline \\
                    \bottomrule
                \end{tabularx}
                \vspace{-1.0em}
            \end{table}

            The future population data for U.S. cities were derived from the dataset developed by \citet{sutradhar2024depopulation},
            providing population projections from 2020 to 2100 under five SSPs.
            The dataset integrates two primary sources:
            Hauer’s data (U.S. County-Level Population Projections, Total and by Sex, Race and Age, based on the SSPs, v1~\citep{hauer2019population}) and
            NCAR data (Global 1-km Downscaled Population Base Year and Projection Grids Based on the SSPs, v1.01~\citep{gao2017downscaling}).
            Specifically, for Hauer’s data, the county-level projections were redistributed to the city level based on the intersected areas between cities and counties.
            For the NCAR data, city-level projections were derived using area weighted zonal statistics from the overlapping areas of the 1-km gridded projected population
            and city administrative boundaries.
            The two city-level projection datasets were finally combined by weighting them based on
            their error contribution in predicting the 2020 Census population.

            To capture a broad range of future uncertainties, we selected SSP1 (Sustainable Development), SSP2 (Middle of the Road),
            and SSP5 (Fossil-fueled Development) as the research scenarios and
            extracted population data across the six selected cities for the period 2030 to 2050 under these SSPs,
            with a specific focus on the decadal milestones of 2030, 2040, and 2050, as shown in Table~\ref{tab:population}.
            The (baseline, 2025) population is forecasted using an ARIMA model calibrated
            on historical population statistics~\citep{CensusIntercensal2010,CensusVINT2019,USCensus2024}.
            In subsequent analysis, for each given city under each SSP scenario,
            we consistently apply the OD demand forecasting method and the UE models established in Section~\ref{sec:3}.
            \begin{table}[!htb]
                \centering
                \footnotesize
                \caption{Population data for the six representative U.S. cities under various SSPs (100,000).}
                \label{tab:population}
                \begin{tabularx}{0.98\textwidth}{l c *{9}{C}}
                    \toprule
                    \multirow{2}{*}{City} & baseline & \multicolumn{3}{c}{SSP1} & \multicolumn{3}{c}{SSP2} & \multicolumn{3}{c}{SSP5} \\
                    \addlinespace[-3pt]
                    \cmidrule(lr){2-2} \cmidrule(lr){3-5} \cmidrule(lr){6-8} \cmidrule(lr){9-11}
                    \addlinespace[-3pt]
                    & 2025 & 2030 & 2040 & 2050 & 2030 & 2040 & 2050 & 2030 & 2040 & 2050 \\
                    \midrule
                    San Francisco   & 8.27 & 9.78 & 10.72 & 11.59 & 9.71 & 10.56 & 11.34 & 10.28 & 11.68 & 13.20 \\
                                    & 0\%  & 18.2\% & 29.5\% & 40.0\% & 17.3\% & 27.7\% & 37.0\% & 24.2\% & 41.2\% & 59.4\% \\[2pt]
                    Dallas          & 13.35 & 15.38 & 17.64 & 19.82 & 15.29 & 17.41 & 19.41 & 16.25 & 19.38 & 22.80 \\
                                    & 0\%   & 15.2\% & 32.1\% & 48.4\% & 14.5\% & 30.4\% & 45.3\% & 21.7\% & 45.2\% & 70.7\% \\[2pt]
                    Pittsburgh      & 3.08 & 3.11 & 3.14 & 3.13 & 3.09 & 3.10 & 3.08 & 3.29 & 3.46 & 3.63 \\
                                    & 0\%  & 1.2\% & 2.1\% & 1.8\% & 0.5\% & 0.8\% & 0.2\% & 6.8\% & 12.3\% & 17.9\% \\[2pt]
                    Miami           & 4.93 & 4.96 & 5.44 & 5.88 & 4.92 & 5.36 & 5.74 & 5.21 & 5.93 & 6.68 \\
                                    & 0\%  & 0.6\% & 10.3\% & 19.1\% & -0.2\% & 8.7\% & 16.4\% & 5.7\% & 20.2\% & 35.5\% \\[2pt]
                    Denver          & 7.31 & 7.98 & 8.72 & 9.40 & 7.92 & 8.60 & 9.22 & 8.38 & 9.51 & 10.73 \\
                                    & 0\%  & 9.0\% & 19.2\% & 28.5\% & 8.3\% & 17.6\% & 26.0\% & 14.6\% & 30.0\% & 46.6\% \\[2pt]
                    Philadelphia    & 15.74 & 17.17 & 18.06 & 18.81 & 17.05 & 17.82 & 18.45 & 18.07 & 19.75 & 21.53 \\
                                    & 0\%   & 9.1\% & 14.8\% & 19.5\% & 8.3\% & 13.2\% & 17.2\% & 14.8\% & 25.5\% & 36.8\% \\
                    \bottomrule
                \end{tabularx}
                \vspace{-1.0em}
            \end{table}

    \section{Case Study}
    \label{sec:4}
        \subsection{Urban Metrics}
        \label{sec:4.1}
            In this section, we utilize urban metrics for the six U.S. cities across all years, SSPs, and UE models
            to analyze the urban dynamics under different routing strategies and population growth from a macroscopic city perspective.
            \subsubsection{Urban Transport Emissions}
            \label{sec:4.1.1}
                Carbon emissions under the Time-Carbon UE models exhibit a clear reduction compared to those under the Time-Only UE model
                across all cities, years, and SSPs, as illustrated in Fig.~\ref{fig:Carbon emission reduction}.
                Within a given year and SSP scenario, the magnitude of emission reduction is inversely
                related to weight ratio of $\psi_{1}:\psi_{2}$.
                A smaller $\psi_{1}:\psi_{2}$ (i.e., a higher relative weight of carbon cost),
                yields a greater emission reduction,
                with the maximum occurring at $\psi_{1}=0$, $\psi_{2}=10$.
                \begin{figure}[!htb]
                    \captionsetup[subfigure]{skip=1pt}
                    \centering
                    \begin{subfigure}[b]{0.33\textwidth}
                        \centering
                        \includegraphics[width=\linewidth, keepaspectratio]{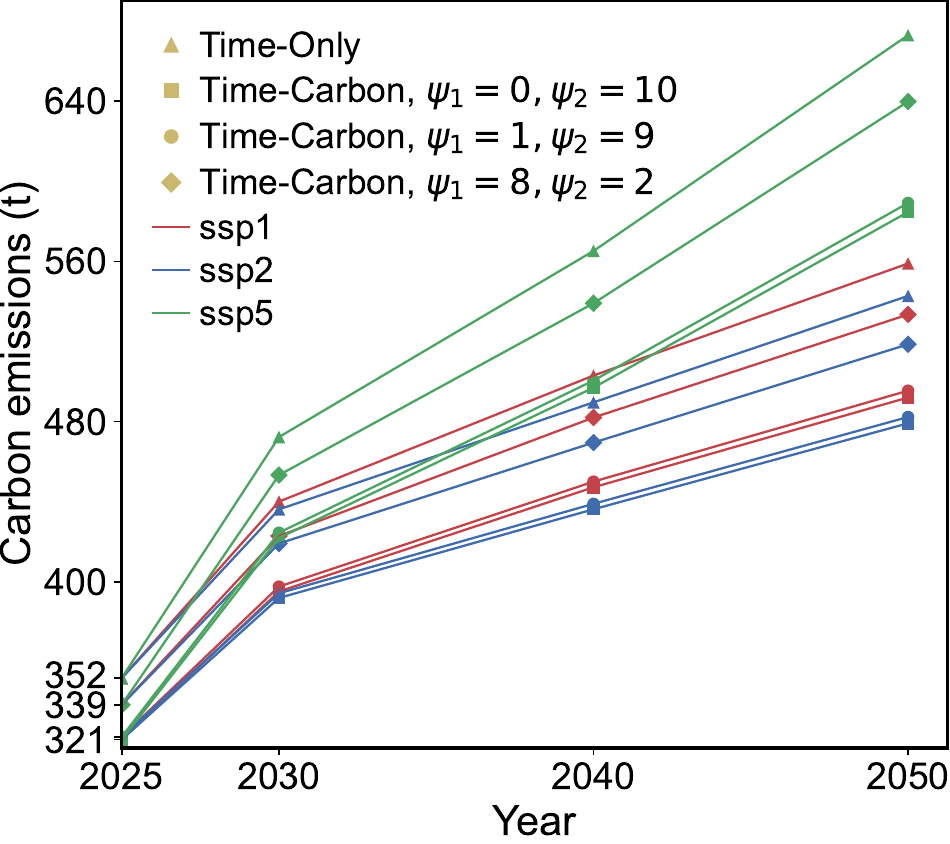}
                        \caption{San Francisco}
                        \label{fig:carbon_sf}
                    \end{subfigure}
                    \hfill
                    \begin{subfigure}[b]{0.33\textwidth}
                        \centering
                        \includegraphics[width=\linewidth, keepaspectratio]{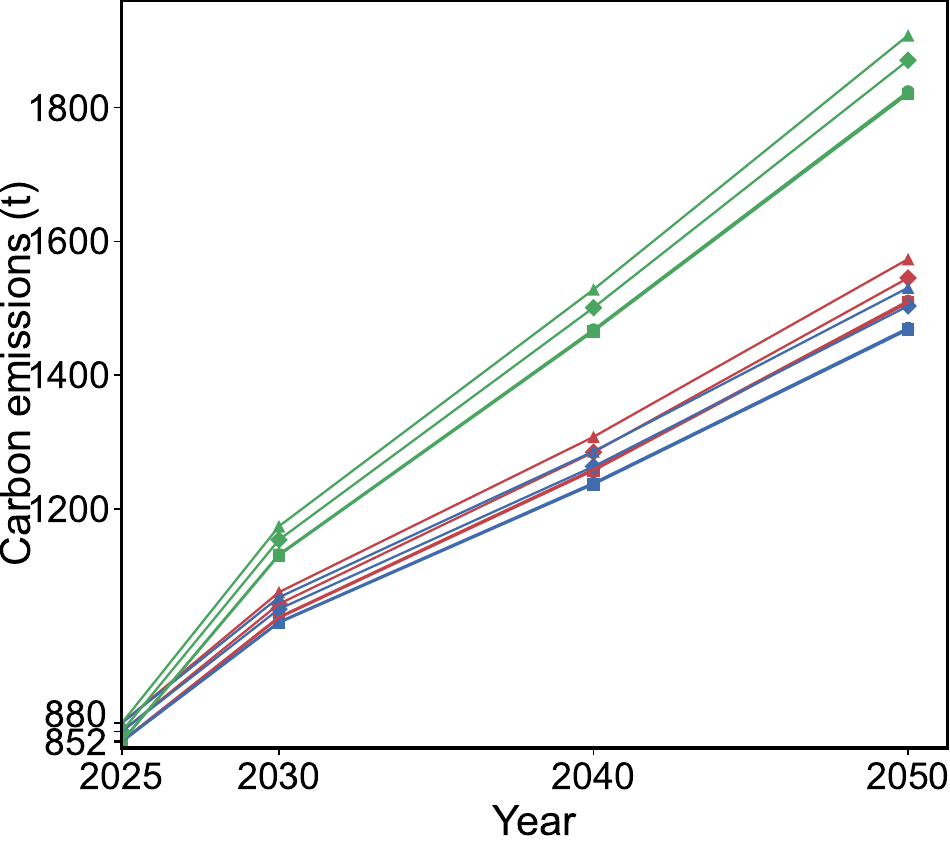}
                        \caption{Dallas}
                        \label{fig:carbon_dallas}
                    \end{subfigure}
                    \hfill
                    \begin{subfigure}[b]{0.33\textwidth}
                        \centering
                        \includegraphics[width=\linewidth, keepaspectratio]{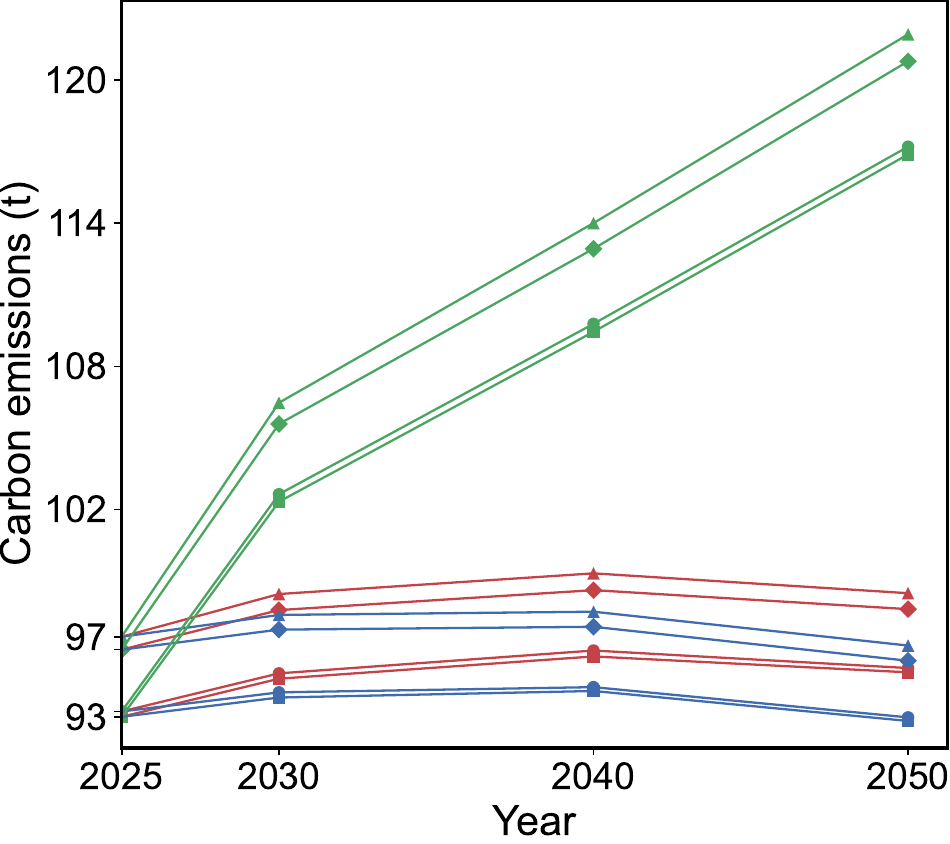}
                        \caption{Pittsburgh}
                        \label{fig:carbon_pittsburgh}
                    \end{subfigure}
                    \\[-1.5pt]
                    \begin{subfigure}[b]{0.33\textwidth}
                        \centering
                        \includegraphics[width=\linewidth, keepaspectratio]{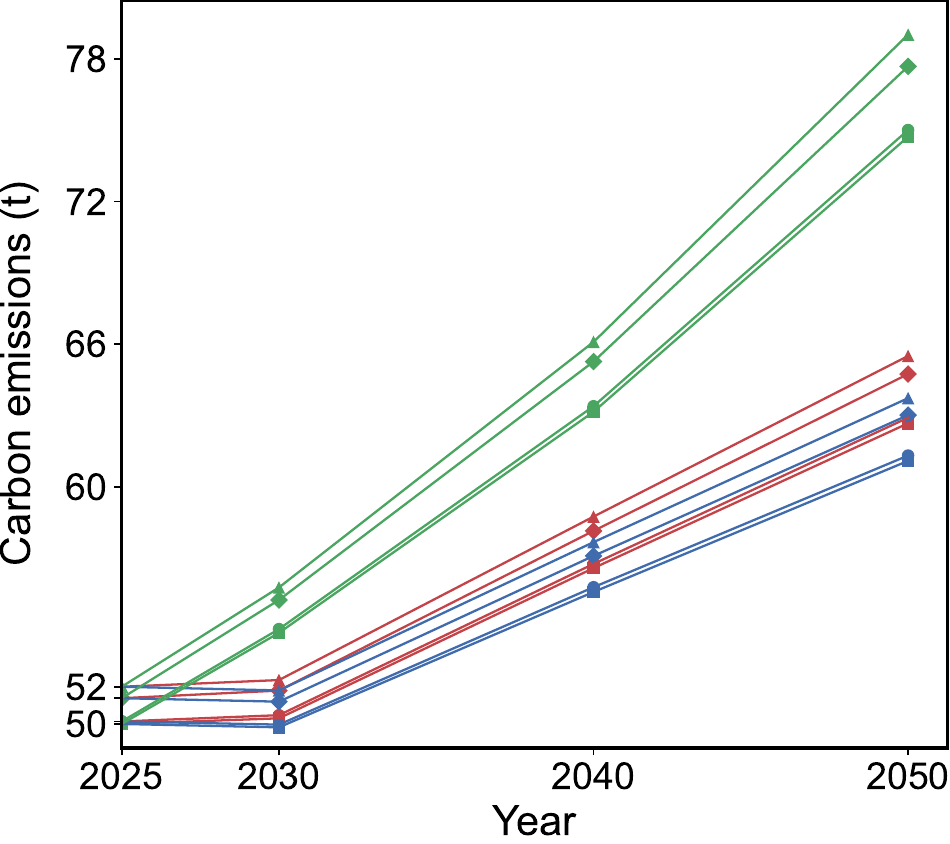}
                        \caption{Miami}
                        \label{fig:carbon_miami}
                    \end{subfigure}
                    \hfill
                    \begin{subfigure}[b]{0.33\textwidth}
                        \centering
                        \includegraphics[width=\linewidth, keepaspectratio]{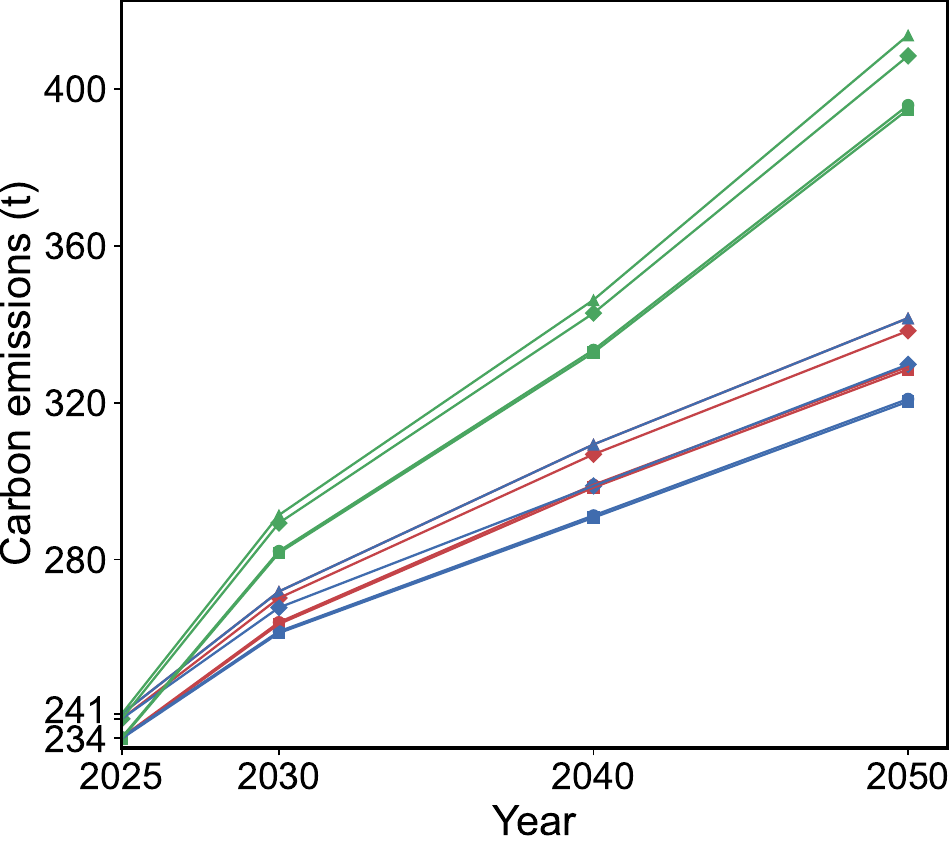}
                        \caption{Denver}
                        \label{fig:carbon_denver}
                    \end{subfigure}
                    \hfill
                    \begin{subfigure}[b]{0.33\textwidth}
                        \centering
                        \includegraphics[width=\linewidth, keepaspectratio]{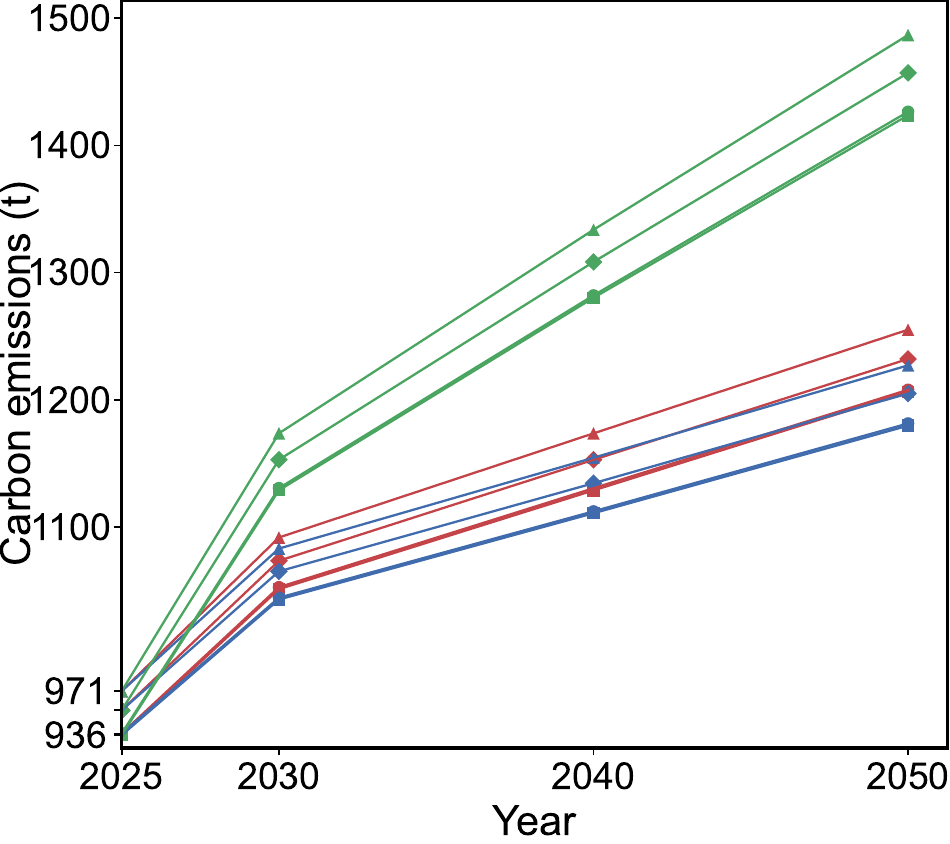}
                        \caption{Philadelphia}
                        \label{fig:carbon_philadelphia}
                    \end{subfigure}
                    \vspace{-0.7em}
                    \caption{Carbon emissions (t) for the six representative U.S. cities across all years, SSPs, and UE models.}
                    \label{fig:Carbon emission reduction}
                    \vspace{-1.0em}
                \end{figure}

                For any city, (SSPs, years) with a larger population size invariably lead to higher emissions across all UE models;
                notably, in Pittsburgh (SSP1/SSP2, 2050) and Miami (SSP2, 2030),
                emissions decrease as their population declines.
                Consequently, although both OD demand distribution $\pi_{y}^{s, t}$ and population size influence carbon emissions,
                emissions consistently exhibit a positive scaling with annual population (Table~\ref{tab:population}) across all UE models,
                indicating that population growth is the dominant driver.
                This observation fully aligns with Theorem~\ref{thm:scaling} and Corollary~\ref{cor:eco_coefficient},
                which state that emissions scale linearly with population regardless of routing strategies.

                San Francisco exhibits the most significant emission reduction across all cities.
                As shown in Table~\ref{tab:Transport emissions}, when $\psi_{1}=1$ and $\psi_{2}=9$,
                its optimization percentage $P_{c}$ ranges from a minimum of 8.4\% in (baseline, 2025)
                to a maximum of 12.4\% in (SSP5, 2050).
                Moreover, for this city in 2030, the Time-Carbon UE model ($\psi_{1}=0$, $\psi_{2}=10$ or $\psi_{1}=1$, $\psi_{2}=9$) under the SSP5 scenario
                yields lower emissions than the Time-Only UE model under the SSP1 and SSP2 scenarios.
                However, in 2040 and 2050, this advantage diminishes and eventually reverses,
                as the emission reduction benefits from eco-routing are overwhelmed by the rapid population growth under SSP5 scenario.
                For all other cities in any given year, regardless of which weight ratio $\psi_{1}:\psi_{2}$ is applied,
                the Time-Carbon UE model under SSP5 scenario consistently yields higher emissions than the Time-Only UE model under any other SSPs.
                These results suggest that the implementation of eco-routing
                cannot reduce the high emissions under SSP5 scenario
                to the lower levels under SSP1 or SSP2 scenarios with time-routing.
                Therefore, low-carbon urban development fundamentally
                requires following a sustainable SSP characterized by slower population growth.
                \begin{table}[!htb]
                \centering
                \footnotesize
                \begin{threeparttable}
                    \caption{Carbon emissions (t) for the six representative U.S. cities under the Time-Only and Time-Carbon ($\psi_{1}=1$ and $\psi_{2}=9$) UE models.}
                    \label{tab:Transport emissions}
                    \setlength{\tabcolsep}{3pt}
                    \begin{tabularx}{0.99\textwidth}{c @{} c *{10}{C}}
                        \toprule
                        \multirow{2}{*}{City} & & baseline & \multicolumn{3}{c}{SSP1} & \multicolumn{3}{c}{SSP2} & \multicolumn{3}{c}{SSP5} \\
                        \addlinespace[-3pt]
                        \cmidrule(lr){3-3} \cmidrule(lr){4-6} \cmidrule(lr){7-9} \cmidrule(lr){10-12}
                        \addlinespace[-3pt]
                         & & 2025 & 2030 & 2040 & 2050 & 2030 & 2040 & 2050 & 2030 & 2040 & 2050 \\
                        \midrule
                        \multirow{3}{*}{San Francisco}  & TO    & 352  & 440 & 503 & 559 & 436 & 490 & 543 & 472 & 565 & 673  \\
                                                        & TC  & 323  & 398 & 450 & 495 & 394 & 439 & 482 & 425 & 501 & 589  \\
                                                        & $P_{c}$      &  8.4\%  & 9.6\% & 10.5\% & 11.4\% & 9.6\% & 10.3\% & 11.1\% & 10.1\% & 11.5\% & 12.4\%  \\[3pt]
                        \multirow{3}{*}{Dallas}         & TO    & 880  & 1076 & 1308 & 1574 & 1068 & 1286 & 1531 & 1174 & 1528 & 1908  \\
                                                        & TC  & 853  & 1039 & 1260 & 1511 & 1031 & 1239 & 1471 & 1132 & 1468 & 1824  \\
                                                        & $P_{c}$      &  3.0\%  & 3.4\% & 3.7\% & 4.0\% & 3.4\% & 3.7\% & 3.9\% & 3.6\% & 3.9\% & 4.4\%  \\[3pt]
                        \multirow{3}{*}{Pittsburgh}     & TO    & 96.7  & 98.5 & 99.3 & 98.5 & 97.6 & 97.7 & 96.3 & 106 & 114 & 122 \\
                                                        & TC  & 93.5  & 95.1 & 96.1 & 95.4 & 94.3 & 94.6 & 93.3 & 103 & 110 & 117  \\
                                                        & $P_{c}$      &  3.2\%  & 3.4\% & 3.3\% & 3.2\% & 3.3\% & 3.2\% & 3.1\% & 3.6\% & 3.7\% & 3.9\%  \\[3pt]
                        \multirow{3}{*}{Miami}          & TO    & 51.6  & 51.9 & 58.7 & 65.5 & 51.4 & 57.7 & 63.7 & 55.8 & 66.1 & 79.0  \\
                                                        & TC  & 50.1  & 50.4 & 56.8 & 62.9 & 50.0 & 55.8 & 61.3 & 54.0 & 63.4 & 75.0  \\
                                                        & $P_{c}$      &  2.8\%  & 2.9\% & 3.4\% & 4.0\% & 2.8\% & 3.3\% & 3.8\% & 3.1\% & 4.1\% & 5.1\%  \\[3pt]
                        \multirow{3}{*}{Denver}         & TO    & 241  & 272 & 309 & 342 & 269 & 301 & 333 & 291 & 346 & 414  \\
                                                        & TC  & 235  & 264 & 299 & 329 & 262 & 291 & 321 & 282 & 333 & 396  \\
                                                        & $P_{c}$      &  2.5\%  & 2.9\% & 3.3\% & 3.6\% & 2.9\% & 3.3\% & 3.5\% & 3.2\% & 3.7\% & 4.3\%  \\[3pt]
                        \multirow{3}{*}{Philadelphia}   & TO    & 971  & 1092 & 1174 & 1255 & 1083 & 1154 & 1227 & 1174 & 1334 & 1487  \\
                                                        & TC  & 937  & 1053 & 1130 & 1209 & 1044 & 1112 & 1181 & 1130 & 1282 & 1427  \\
                                                        & $P_{c}$      &  3.4\%  & 3.6\% & 3.7\% & 3.8\% & 3.6\% & 3.7\% & 3.7\% & 3.7\% & 3.9\% & 4.1\%  \\
                        \bottomrule
                    \end{tabularx}
                    \begin{tablenotes}[flushleft]
                        \footnotesize
                        \item
                        where the $\mathrm{TO}$ and $\mathrm{TC}$ denote metrics under Time-Only and Time-Carbon UE models, respectively.
                    \end{tablenotes}
                \end{threeparttable}
                \vspace{-1.0em}
                \end{table}

                When $\psi_{1}=1$, $\psi_{2}=9$, as shown in Table~\ref{tab:Transport emissions},
                for a given year, SSPs with a larger population size (i.e., higher emissions)
                exhibit a higher optimization percentage $P_{c}$;
                similarly, under a given SSP scenario, years with a larger population size also exhibit a higher $P_{c}$.
                Moreover, as shown in Fig.~\ref{fig:Carbon emission reduction},
                these findings are robust across all Time-Carbon UE models,
                indicating that population growth also exerts a dominant influence on $P_{c}$.
                Therefore, it is essential to further investigate the impact of population growth on $P_{c}$,
                thereby evaluating the efficiency of eco-routing.

            \subsubsection{UETT}
            \label{sec:4.1.2}
                UETT under the Time-Carbon UE models is consistently higher than that under the Time-Only UE model, as shown in Fig.~\ref{fig:UETT}.
                This confirms that while eco-routing can reduce emissions, it generally comes at the expense of increased travel time.
                Moreover, for a given year and SSP scenario, as the weight ratio $\psi_{1}:\psi_{2}$ decreases,
                the Time-Carbon UE model yields a greater emission reduction but incurs a corresponding increase in travel time.
                An exception is observed in Pittsburgh (SSP5),
                where the Time-Carbon UE model ($\psi_{1}=8$, $\psi_{2}=2$) yields both lower UETT and emissions compared to the Time-Only UE model.
                This is possible because the relative weight of the carbon cost is small,
                and the equilibrium state of Time-Only UE model is suboptimal.
                Consequently, this rare case does not undermine the analysis above;
                rather, it demonstrates that cases where both UETT and emissions decrease under the Time-Carbon UE model are extremely limited.
                \begin{figure}[!htb]
                    \captionsetup[subfigure]{skip=1pt}
                    \centering
                    \begin{subfigure}[b]{0.33\textwidth}
                        \centering
                        \includegraphics[width=\linewidth, keepaspectratio]{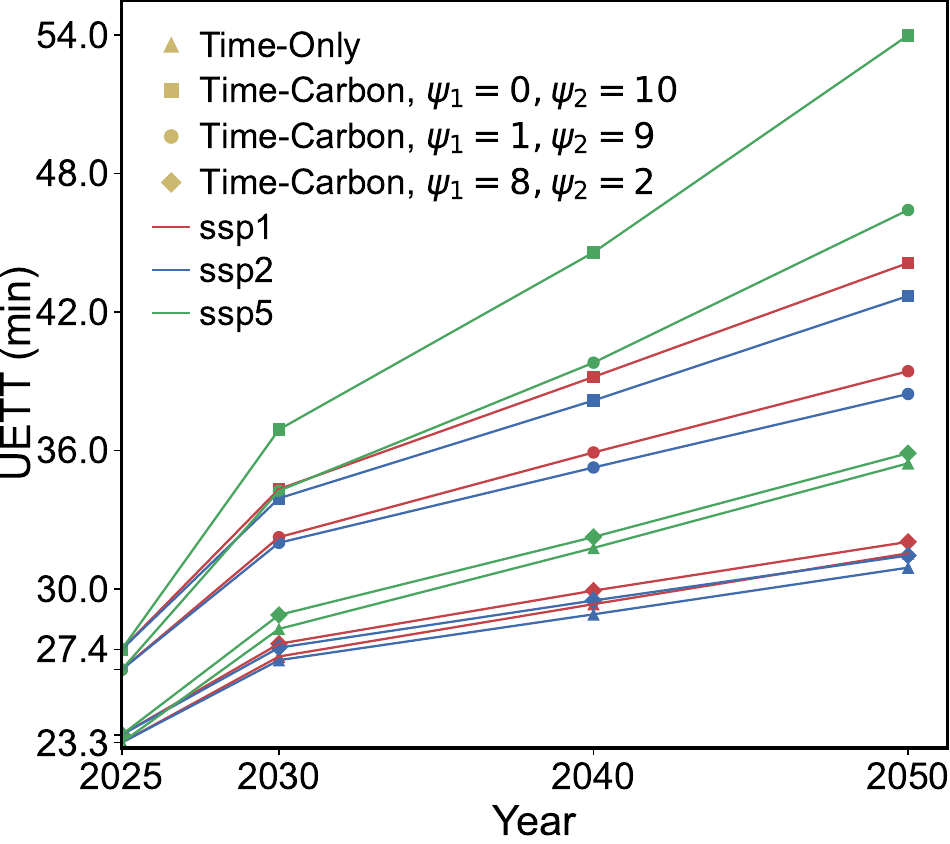}
                        \caption{San Francisco}
                        \label{fig:uett_sf}
                    \end{subfigure}%
                    \hfill%
                    \begin{subfigure}[b]{0.33\textwidth}
                        \centering
                        \includegraphics[width=\linewidth, keepaspectratio]{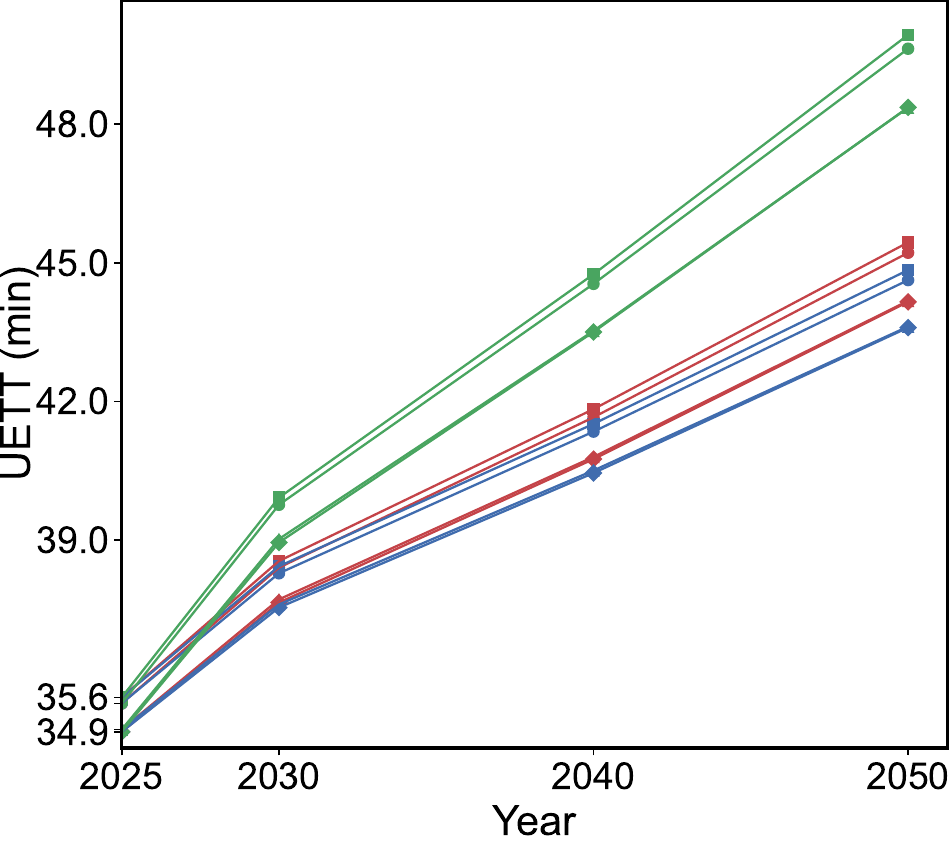}
                        \caption{Dallas}
                        \label{fig:uett_dallas}
                    \end{subfigure}%
                    \hfill%
                    \begin{subfigure}[b]{0.33\textwidth}
                        \centering
                        \includegraphics[width=\linewidth, keepaspectratio]{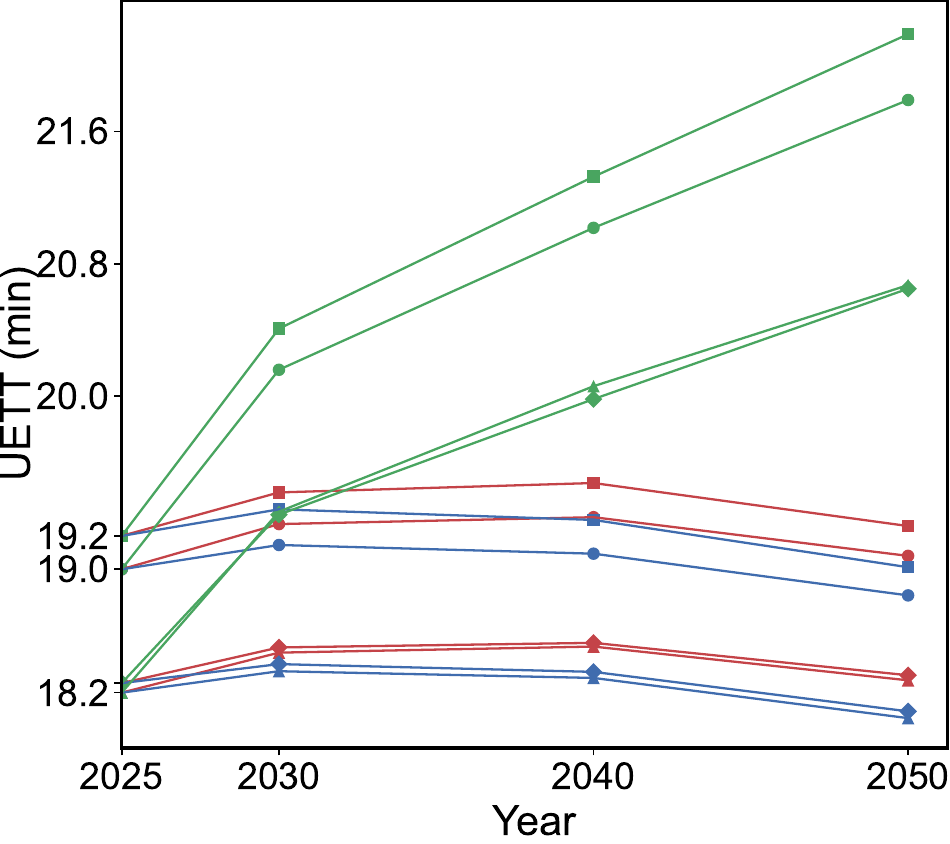}
                        \caption{Pittsburgh}
                        \label{fig:uett_pittsburgh}
                    \end{subfigure}%
                    \\[-1.5pt]
                    \begin{subfigure}[b]{0.33\textwidth}
                        \centering
                        \includegraphics[width=\linewidth, keepaspectratio]{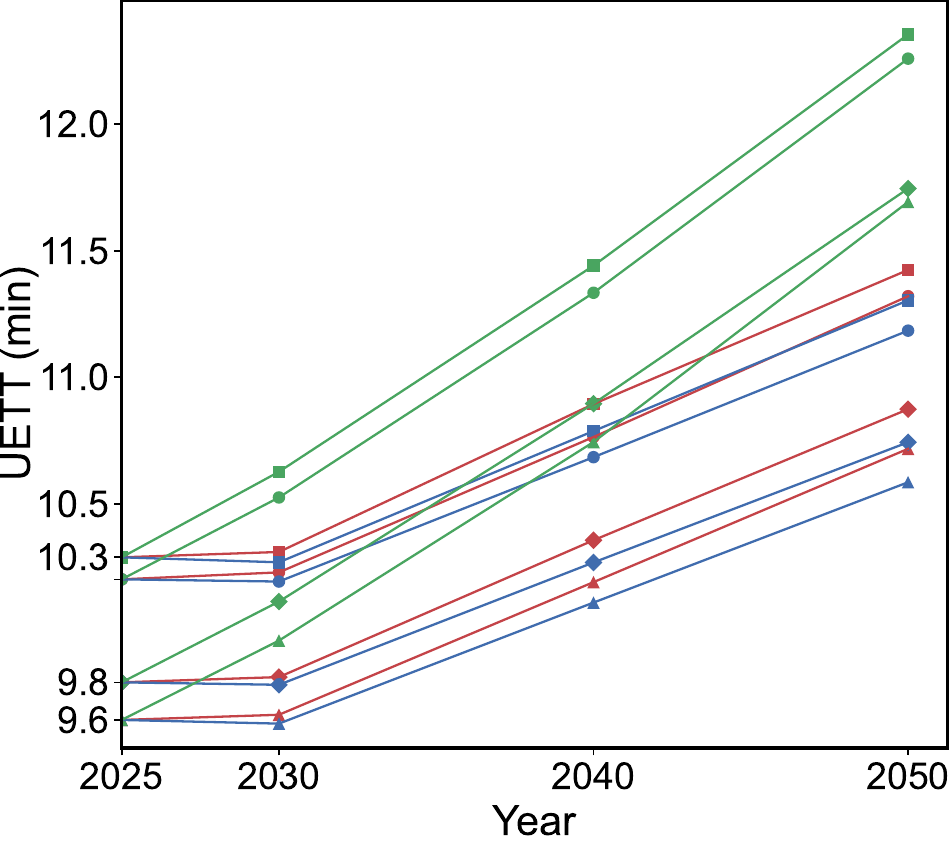}
                        \caption{Miami}
                        \label{fig:uett_miami}
                    \end{subfigure}%
                    \hfill%
                    \begin{subfigure}[b]{0.33\textwidth}
                        \centering
                        \includegraphics[width=\linewidth, keepaspectratio]{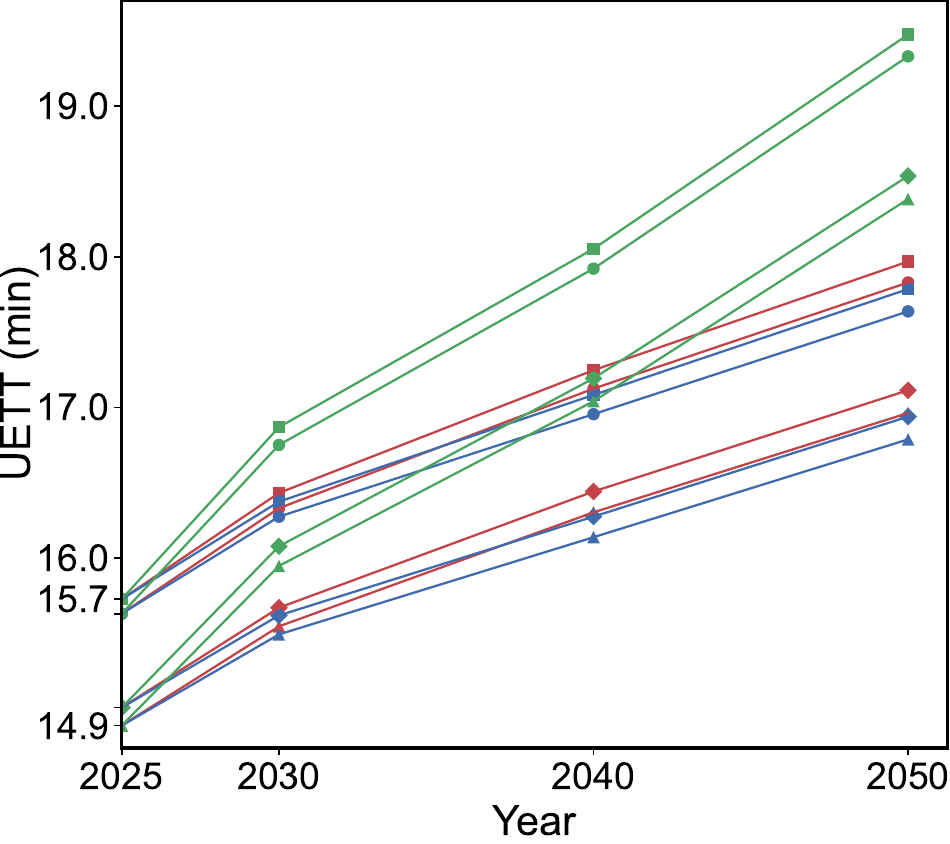}
                        \caption{Denver}
                        \label{fig:uett_denver}
                    \end{subfigure}%
                    \hfill%
                    \begin{subfigure}[b]{0.33\textwidth}
                        \centering
                        \includegraphics[width=\linewidth, keepaspectratio]{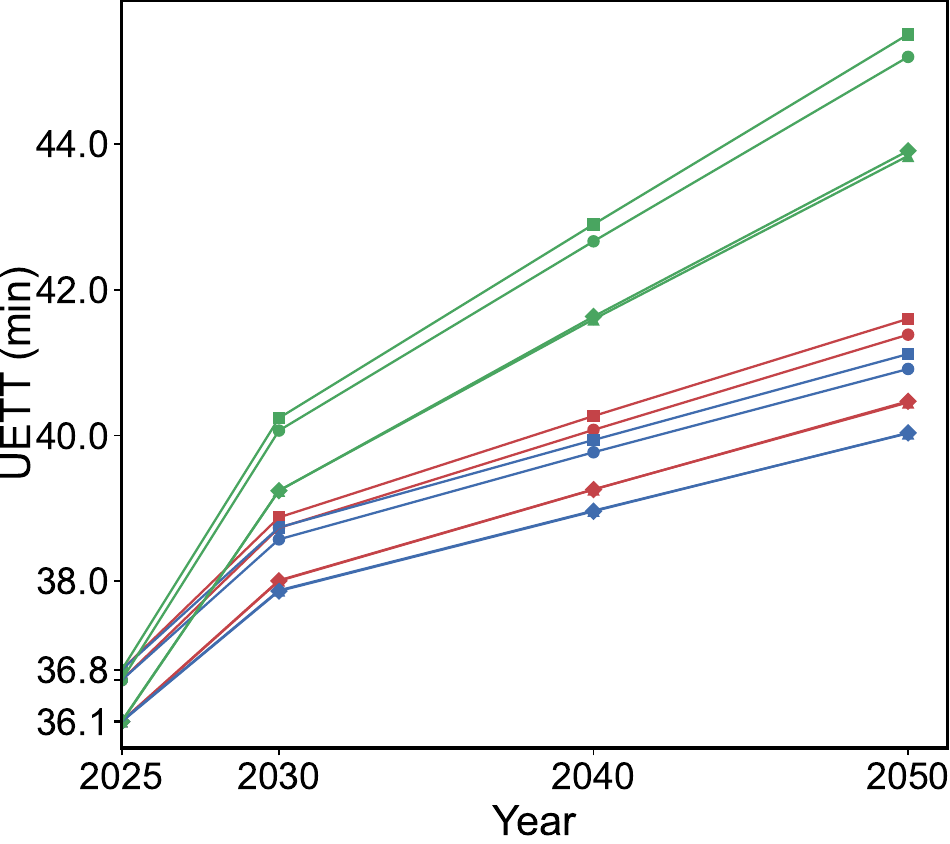}
                        \caption{Philadelphia}
                        \label{fig:uett_philadelphia}
                    \end{subfigure}%
                    \vspace{-0.7em}
                    \caption{UETT (min) for the six representative U.S. cities across all years, SSPs, and UE models.}
                    \label{fig:UETT}
                    \vspace{-1.0em} 
                \end{figure}


                UETT exhibits a positive scaling with annual population,
                aligning with the proof of Theorem~\ref{thm:short_distance},
                where population growth leads to rising traffic flows, which in turn results in increased travel time.
                Moreover, consistent with Section~\ref{sec:4.1.1}, San Francisco also shows the most significant changes in UETT.
                As shown in Table~\ref{tab:UETT table}, when $\psi_{1}=1$ and $\psi_{2}=9$,
                its trade-off percentage $P_{t}$ ranges from a minimum of 13.6\% in (baseline, 2025) to
                a maximum of 31.0\% in (SSP5, 2050).
                Additionally, for all cities, the $P_{t}$ increases as population grows,
                except for Miami, where it decreases.
                However, this does not mean that eco-routing in Miami is efficient in terms of UETT, as the CP may be increasing.
                \begin{table}[!htb]
                \centering
                \footnotesize
                \begin{threeparttable}
                    \caption{UETT (min) for the six representative U.S. cities under the Time-Only and Time-Carbon ($\psi_{1}=1$ and $\psi_{2}=9$) UE models.}
                    \label{tab:UETT table}
                    \setlength{\tabcolsep}{3pt}
                    \begin{tabularx}{0.98\textwidth}{c @{} c *{10}{C}}
                        \toprule
                        \multirow{2}{*}{City} & & baseline & \multicolumn{3}{c}{SSP1} & \multicolumn{3}{c}{SSP2} & \multicolumn{3}{c}{SSP5} \\
                        \addlinespace[-3pt]
                        \cmidrule(lr){3-3} \cmidrule(lr){4-6} \cmidrule(lr){7-9} \cmidrule(lr){10-12}
                        \addlinespace[-3pt]
                         & & 2025 & 2030 & 2040 & 2050 & 2030 & 2040 & 2050 & 2030 & 2040 & 2050 \\
                        \midrule
                        \multirow{3}{*}{San Francisco}  & TO    & 23.3  & 27.1 & 29.3 & 31.5 & 26.9 & 28.9 & 30.9 & 28.3 & 31.8 & 35.4  \\
                                                        & TC  & 26.5  & 32.2 & 35.9 & 39.4 & 32.0 & 35.3 & 38.4 & 34.2 & 39.8 & 46.4  \\
                                                        & $P_{t}$      &  13.6\%  & 19.1\% & 22.4\% & 25.0\% & 18.9\% & 22.0\% & 24.3\% & 21.2\% & 25.3\% & 31.0\%  \\[3pt]
                        \multirow{3}{*}{Dallas}         & TO    & 34.9  & 37.7 & 40.8 & 44.2 & 37.6 & 40.5 & 43.6 & 39.0 & 43.5 & 48.4  \\
                                                        & TC  & 35.5  & 38.4 & 41.7 & 45.2 & 38.3 & 41.3 & 44.6 & 39.8 & 44.5 & 49.6  \\
                                                        & $P_{t}$      &  1.6\%  & 1.8\% & 2.1\% & 2.4\% & 1.8\% & 2.1\% & 2.3\% & 1.9\% & 2.3\% & 2.6\%  \\[3pt]
                        \multirow{3}{*}{Pittsburgh}     & TO    & 18.2  & 18.5 & 18.5 & 18.3 & 18.3 & 18.3 & 18.1 & 19.3 & 20.1 & 20.7 \\
                                                        & TC  & 19.0  & 19.2 & 19.3 & 19.0 & 19.1 & 19.0 & 18.8 & 20.2 & 21.0 & 21.8  \\
                                                        & $P_{t}$      &  4.1\%  & 4.2\% & 4.2\% & 4.1\% & 4.2\% & 4.1\% & 4.1\% & 4.4\% & 4.8\% & 5.4\%  \\[3pt]
                        \multirow{3}{*}{Miami}          & TO    & 9.65  & 9.67 & 10.2 & 10.7 & 9.63 & 10.1 & 10.6 & 10.0 & 10.7 & 11.7  \\
                                                        & TC  & 10.2  & 10.2 & 10.8 & 11.3 & 10.2 & 10.7 & 11.2 & 10.5 & 11.3 & 12.3  \\
                                                        & $P_{t}$      &  5.8\%  & 5.8\% & 5.6\% & 5.6\% & 5.8\% & 5.7\% & 5.7\% & 5.7\% & 5.5\% & 4.8\%  \\[3pt]
                        \multirow{3}{*}{Denver}         & TO    & 14.9  & 15.5 & 16.3 & 17.0 & 15.5 & 16.1 & 16.8 & 15.9 & 17.0 & 18.4  \\
                                                        & TC  & 15.6  & 16.3 & 17.1 & 17.8 & 16.3 & 17.0 & 17.6 & 16.8 & 17.9 & 19.3  \\
                                                        & $P_{t}$      &  5.0\%  & 5.0\% & 5.0\% & 5.1\% & 5.0\% & 5.1\% & 5.1\% & 5.0\% & 5.2\% & 5.2\%  \\[3pt]
                        \multirow{3}{*}{Philadelphia}   & TO    & 36.1  & 38.0 & 39.3 & 40.5 & 37.9 & 39.0 & 40.0 & 39.2 & 41.6 & 43.8  \\
                                                        & TC  & 36.6  & 38.7 & 40.1 & 41.4 & 38.6 & 39.8 & 40.9 & 40.1 & 42.7 & 45.2  \\
                                                        & $P_{t}$      &  1.6\%  & 1.9\% & 2.1\% & 2.3\% & 1.8\% & 2.1\% & 2.2\% & 2.1\% & 2.6\% & 3.1\%  \\
                        \bottomrule
                    \end{tabularx}
                \end{threeparttable}
                \vspace{-1.0em}
                \end{table}

            \subsubsection{UETL}
            \label{sec:4.1.3}
                UETL is significantly different under various UE models, as shown in Fig.~\ref{fig:UETL},
                indicating that traveler route choices are fundamentally distinct.
                Under the Time-Carbon UE models, UETL is lower than that under the Time-Only UE model,
                suggesting that travelers using eco-routing choose shorter routes to minimize the time-carbon cost.
                Moreover, for a given year and SSP scenario, UETL decreases as the weight ratio $\psi_{1}:\psi_{2}$ declines.
                All of these findings align with Theorem~\ref{thm:short_distance},
                which states that reducing travel distance is critical for emission reduction when transitioning from time-routing to eco-routing.
                \begin{figure}[!htb]
                    \captionsetup[subfigure]{skip=1pt}
                    \centering
                    \begin{subfigure}[b]{0.33\textwidth}
                        \centering
                        \includegraphics[width=\linewidth, keepaspectratio]{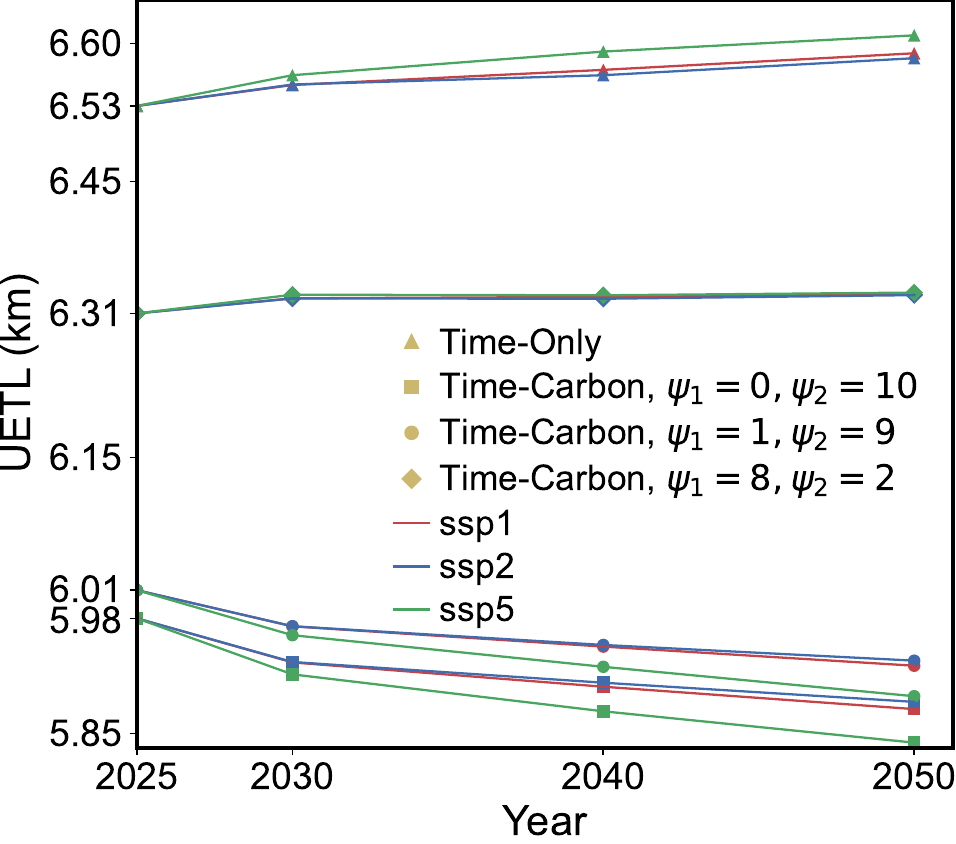}
                        \caption{San Francisco}
                        \label{fig:uetl_sf}
                    \end{subfigure}%
                    \hfill%
                    \begin{subfigure}[b]{0.33\textwidth}
                        \centering
                        \includegraphics[width=\linewidth, keepaspectratio]{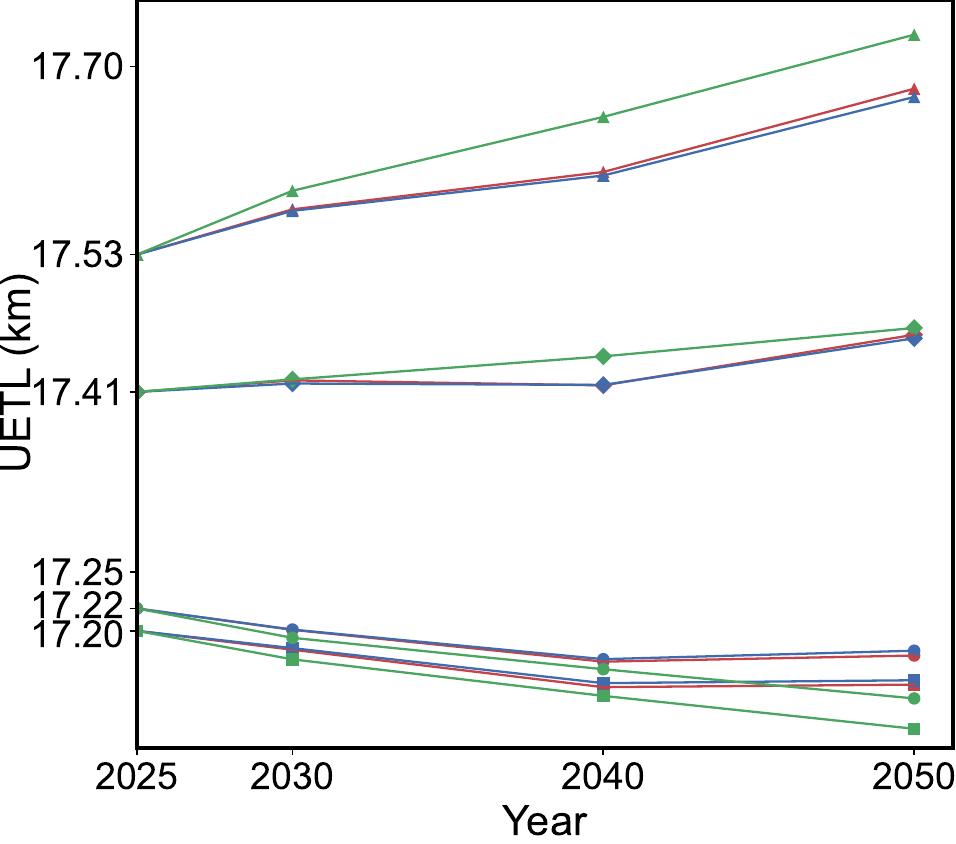}
                        \caption{Dallas}
                        \label{fig:uetl_dallas}
                    \end{subfigure}%
                    \hfill%
                    \begin{subfigure}[b]{0.33\textwidth}
                        \centering
                        \includegraphics[width=\linewidth, keepaspectratio]{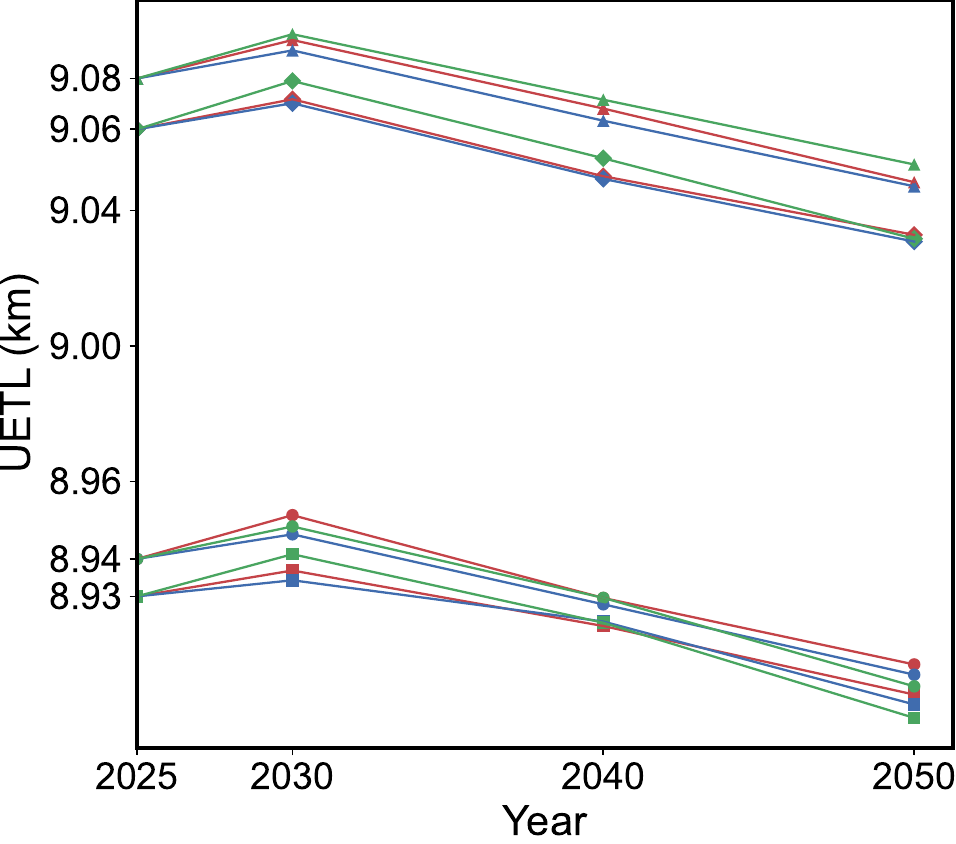}
                        \caption{Pittsburgh}
                        \label{fig:uetl_pittsburgh}
                    \end{subfigure}%
                    \\[-1.5pt]
                    \begin{subfigure}[b]{0.33\textwidth}
                        \centering
                        \includegraphics[width=\linewidth, keepaspectratio]{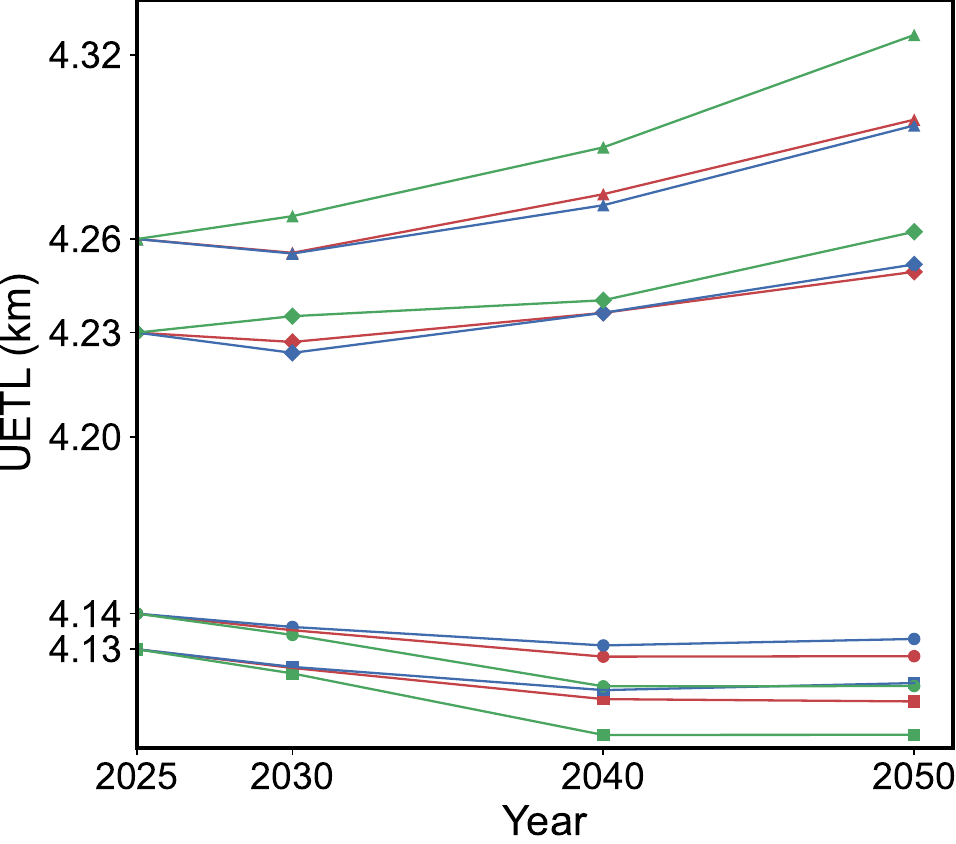}
                        \caption{Miami}
                        \label{fig:uetl_miami}
                    \end{subfigure}%
                    \hfill%
                    \begin{subfigure}[b]{0.33\textwidth}
                        \centering
                        \includegraphics[width=\linewidth, keepaspectratio]{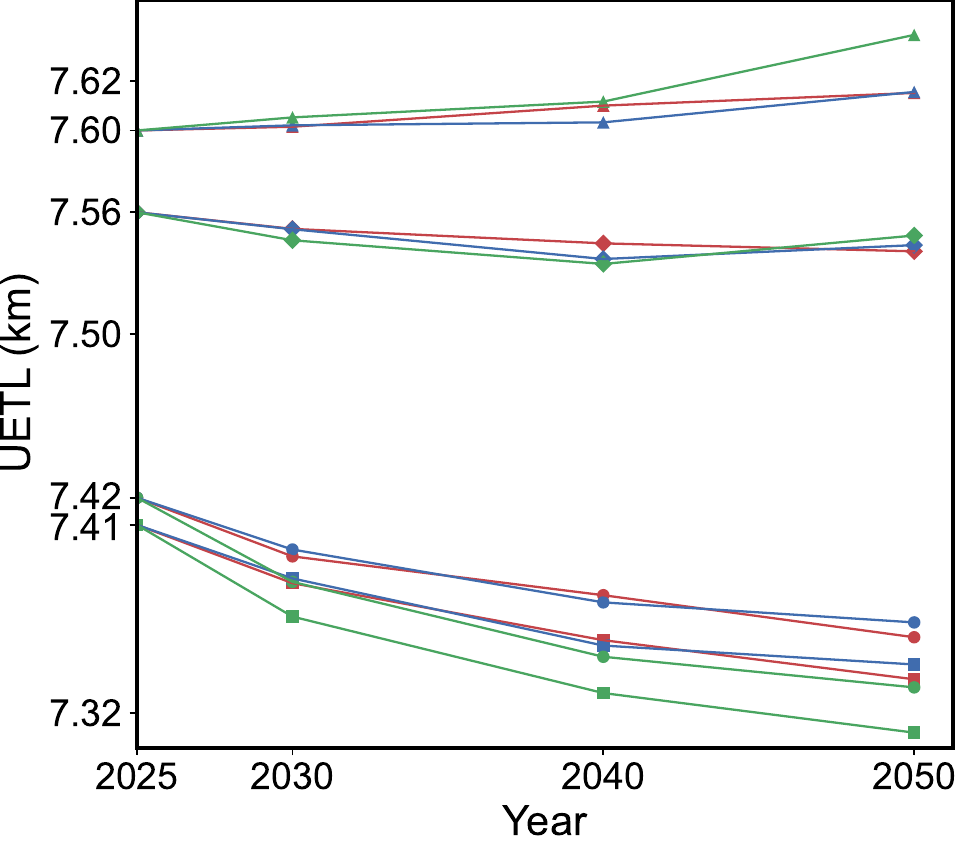}
                        \caption{Denver}
                        \label{fig:uetl_denver}
                    \end{subfigure}%
                    \hfill%
                    \begin{subfigure}[b]{0.33\textwidth}
                        \centering
                        \includegraphics[width=\linewidth, keepaspectratio]{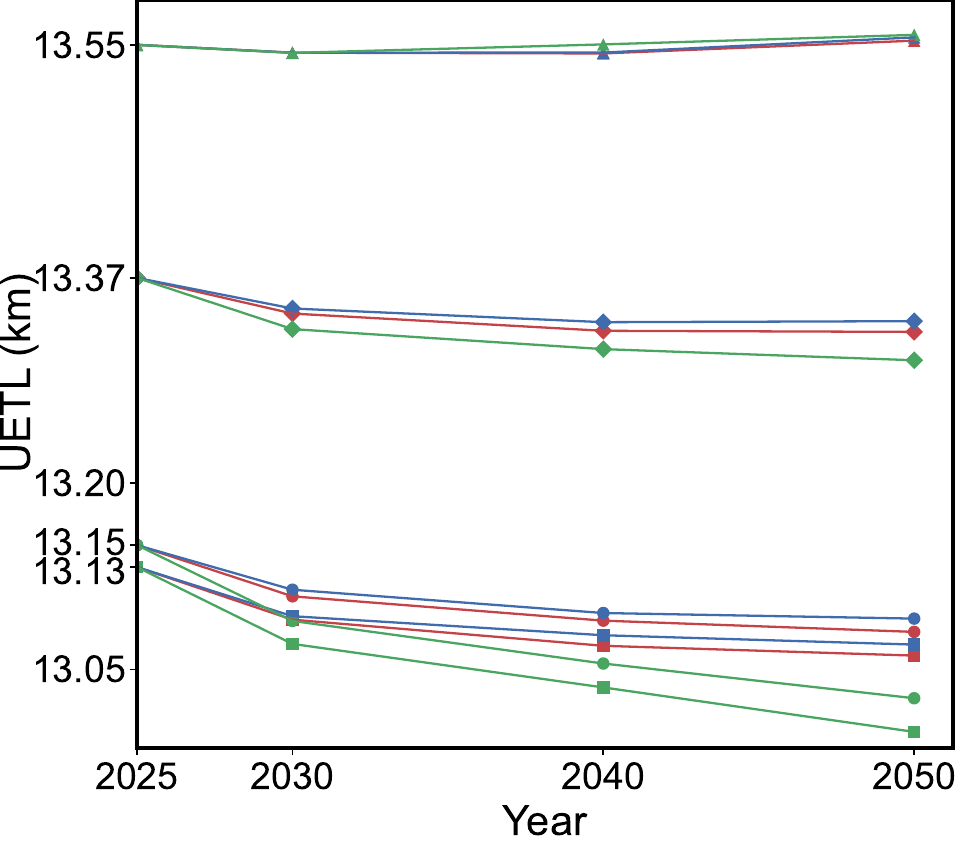}
                        \caption{Philadelphia}
                        \label{fig:uetl_philadelphia}
                    \end{subfigure}%
                    \vspace{-0.7em}
                    \caption{UETL (km) for the six representative U.S. cities across all years, SSPs, and UE models.}
                    \label{fig:UETL}
                    \vspace{-1.0em}
                \end{figure}

                For Time-Carbon UE models, the trends of UETL over time are different across various weight ratios.
                We first analyze the results with ratios of $\psi_{1}=1$, $\psi_{2}=9$ and $\psi_{1}=0$, $\psi_{2}=10$.
                Within a given city and SSP scenario, as population grows over time,
                UETL increases under the Time-Only UE model,
                whereas it decreases under the Time-Carbon UE models.
                This indicates that with population growth,
                travelers using time-routing increasingly choose longer routes;
                conversely, travelers using eco-routing increasingly select shorter routes,
                consistent with Theorem~\ref{thm:short_distance}
                that reducing travel distance is also critical for emission mitigation when employing eco-routing under population growth.
                Additionally, for the Time-Only UE model, minor deviations from the general UETL trend are observed,
                such as in Pittsburgh (2040) and (SSP5, 2050), Miami (SSP1, 2030), and Philadelphia (2030).
                These deviations typically occur in (SSPs, years) with small population growth,
                and are attributed to this slight growth and changes in the OD demand distribution $\pi_{y}^{s, t}$.
                With greater population growth, these exceptions are no longer observed;
                for example, Miami (SSP5, 2030), having a higher population size than (SSP1, 2030), follows the general UETL trend.
                Similarly, the few exceptions under the Time-Carbon UE model are attributed to the same factors,
                such as in Dallas (SSP1/2, 2050), where the exception is eliminated under (SSP5, 2050).
                These exceptions indicate that while changes in $\pi_{y}^{s, t}$ do influence UETL,
                their impacts are typically masked by the effects of population growth.
                The influence of $\pi_{y}^{s, t}$ becomes discernible only when population growth is minimal,
                implying that population acts as the dominant factor driving UETL and travel route choice.
                \begin{table}[!htb]
                \centering
                \footnotesize
                \begin{threeparttable}
                    \caption{UETL (km) for the six representative U.S. cities under Time-Only and Time-Carbon ($\psi_{1}=1$ and $\psi_{2}=9$) UE models.}
                    \label{tab:UETL table}
                    \setlength{\tabcolsep}{3pt}
                    \begin{tabularx}{0.98\textwidth}{c @{} c *{10}{C}}
                        \toprule
                        \multirow{2}{*}{City} & & baseline & \multicolumn{3}{c}{SSP1} & \multicolumn{3}{c}{SSP2} & \multicolumn{3}{c}{SSP5} \\
                        \addlinespace[-3pt]
                        \cmidrule(lr){3-3} \cmidrule(lr){4-6} \cmidrule(lr){7-9} \cmidrule(lr){10-12}
                        \addlinespace[-3pt]
                         & & 2025 & 2030 & 2040 & 2050 & 2030 & 2040 & 2050 & 2030 & 2040 & 2050 \\
                        \midrule
                        \multirow{2}{*}{San Francisco}  & TO    & 6.53  & 6.55 & 6.57 & 6.59 & 6.56 & 6.57 & 6.58 & 6.57 & 6.59 & 6.61  \\
                                                        & TC  & 6.01  & 5.97 & 5.94 & 5.92 & 5.97 & 5.95 & 5.93 & 5.96 & 5.92 & 5.89  \\[3pt]
                        \multirow{2}{*}{Dallas}         & TO    & 17.53  & 17.57 & 17.61 & 17.68 & 17.57 & 17.60 & 17.67 & 17.59 & 17.66 & 17.73  \\
                                                        & TC  & 17.22  & 17.20 & 17.17 & 17.18 & 17.20 & 17.17 & 17.18 & 17.19 & 17.16 & 17.14  \\[3pt]
                        \multirow{2}{*}{Pittsburgh}     & TO    & 9.08  & 9.09 & 9.07 & 9.05 & 9.09 & 9.07 & 9.05 & 9.09 & 9.07 & 9.05 \\
                                                        & TC  & 8.94  & 8.95 & 8.93 & 8.91 & 8.94 & 8.92 & 8.90 & 8.95 & 8.93 & 8.90  \\[3pt]
                        \multirow{2}{*}{Miami}          & TO    & 4.262  & 4.258 & 4.276 & 4.300 & 4.258 & 4.273 & 4.298 & 4.269 & 4.291 & 4.326  \\
                                                        & TC  & 4.145  & 4.139 & 4.131 & 4.131 & 4.140 & 4.135 & 4.136 & 4.138 & 4.122 & 4.122  \\[3pt]
                        \multirow{2}{*}{Denver}         & TO    & 7.60  & 7.60 & 7.61 & 7.61 & 7.60 & 7.60 & 7.61 & 7.60 & 7.61 & 7.64  \\
                                                        & TC  & 7.42  & 7.39 & 7.38 & 7.36 & 7.40 & 7.37 & 7.36 & 7.38 & 7.35 & 7.33  \\[3pt]
                        \multirow{2}{*}{Philadelphia}   & TO    & 13.553  & 13.547 & 13.546 & 13.557 & 13.547 & 13.547 & 13.559 & 13.547 & 13.554 & 13.561  \\
                                                        & TC  & 13.150  & 13.109 & 13.089 & 13.080 & 13.114 & 13.095 & 13.091 & 13.089 & 13.055 & 13.027  \\
                        \bottomrule
                    \end{tabularx}
                \end{threeparttable}
                \vspace{-1.0em}
                \end{table}
                When $\psi_{1}=8$ and $\psi_{2}=2$, it is difficult to identify a general UETL trend,
                as it is unstable across cities and years.
                This phenomenon arises because the relative weight of the carbon cost is too small,
                causing the dominance of time cost versus carbon cost in the generalized cost to vary by city and year.
                Within some cities and year intervals, the UETL increases as population grows,
                aligning with the Time-Only UE model (time cost dominates);
                within others, it decreases, consistent with the Time-Carbon UE model (carbon cost dominates).
                Consequently, given the behavioral inertia of travelers,
                the practical implementation of eco-routing necessitates appropriate $\psi_{1}$ and $\psi_{2}$ to ensure carbon cost dominance,
                thereby guaranteeing the stable evolution of the lowest carbon routes under population growth.


        \subsection{Impact of Population Growth}
        \label{sec:4.3}
            In this section, we aggregate population data for the six U.S. cities across all years and SSPs
            to analyze urban dynamics with respect to overall population change and evaluate the efficiency of eco-routing.
            To ensure the dominance of carbon cost while accounting for time cost,
            subsequent analysis primarily considers the cost weights $\psi_{1}=1$ and $\psi_{2}=9$, unless otherwise noted.


            \subsubsection{Impact of Population Growth on Urban Metrics}
            \label{sec:4.3.1}
                By selecting the (baseline, 2025) population and urban metrics as the reference,
                we obtain $G_{\mathrm{baseline}, \mathrm{2025}}$ to represent the percentage population growth, and calculate the change ratios of urban metrics.
                As shown in Fig.~\ref{fig:population change}, the change ratios of carbon emissions, UETL, and UETT all exhibit expected trends.
                Moreover, although the urban metrics are influenced by changes in population and OD demand distribution $\pi_{y}^{s, t}$,
                their respective trends remain consistent across most cities,
                indicating that population growth is the dominant driver of urban dynamics.
                These observations collectively corroborate the analysis in Section~\ref{sec:4.1}.
                \begin{figure}[!htb]
                    \captionsetup[subfigure]{skip=1pt}
                    \centering
                    \begin{subfigure}[b]{0.33\textwidth}
                        \centering
                        \includegraphics[width=\linewidth, keepaspectratio]{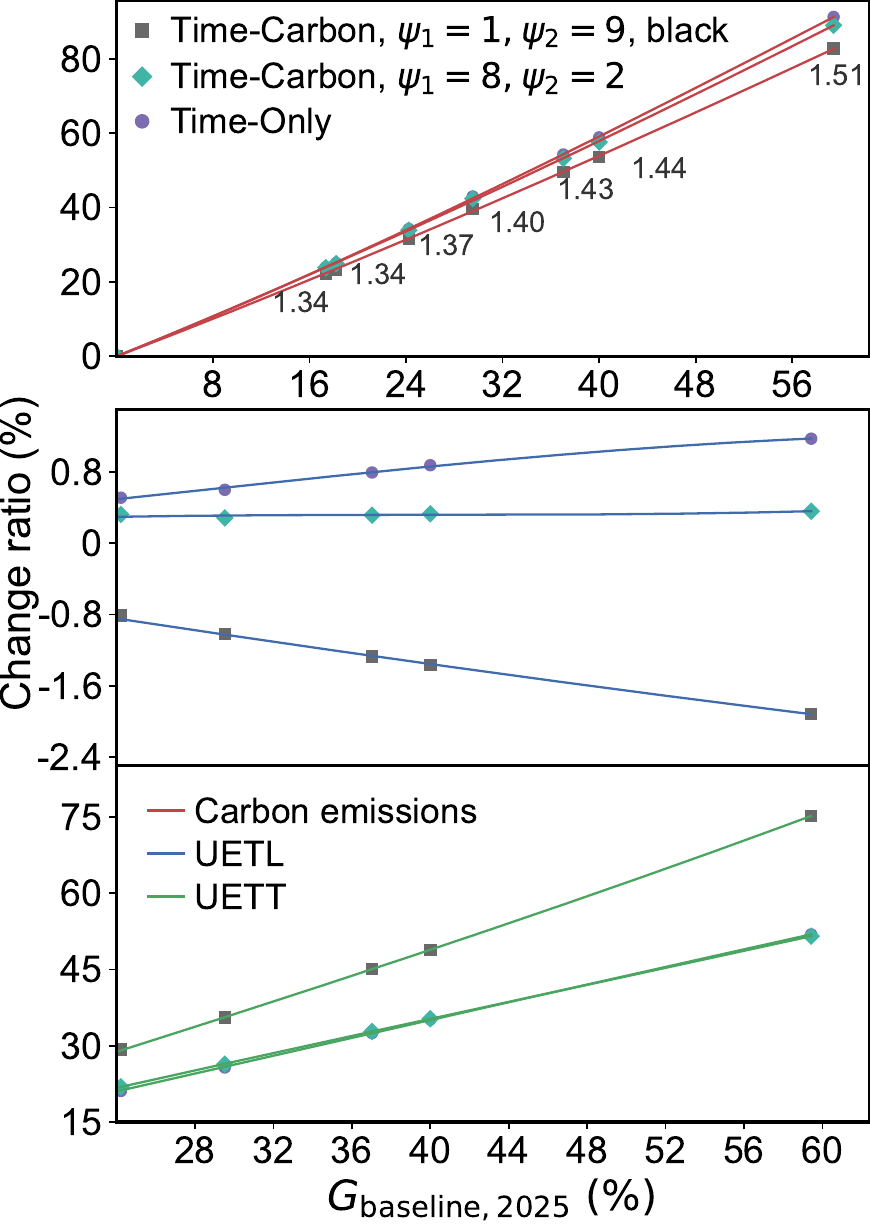}
                        \caption{San Francisco}
                        \label{fig:pop_sf}
                    \end{subfigure}%
                    \hfill%
                    \begin{subfigure}[b]{0.33\textwidth}
                        \centering
                        \includegraphics[width=\linewidth, keepaspectratio]{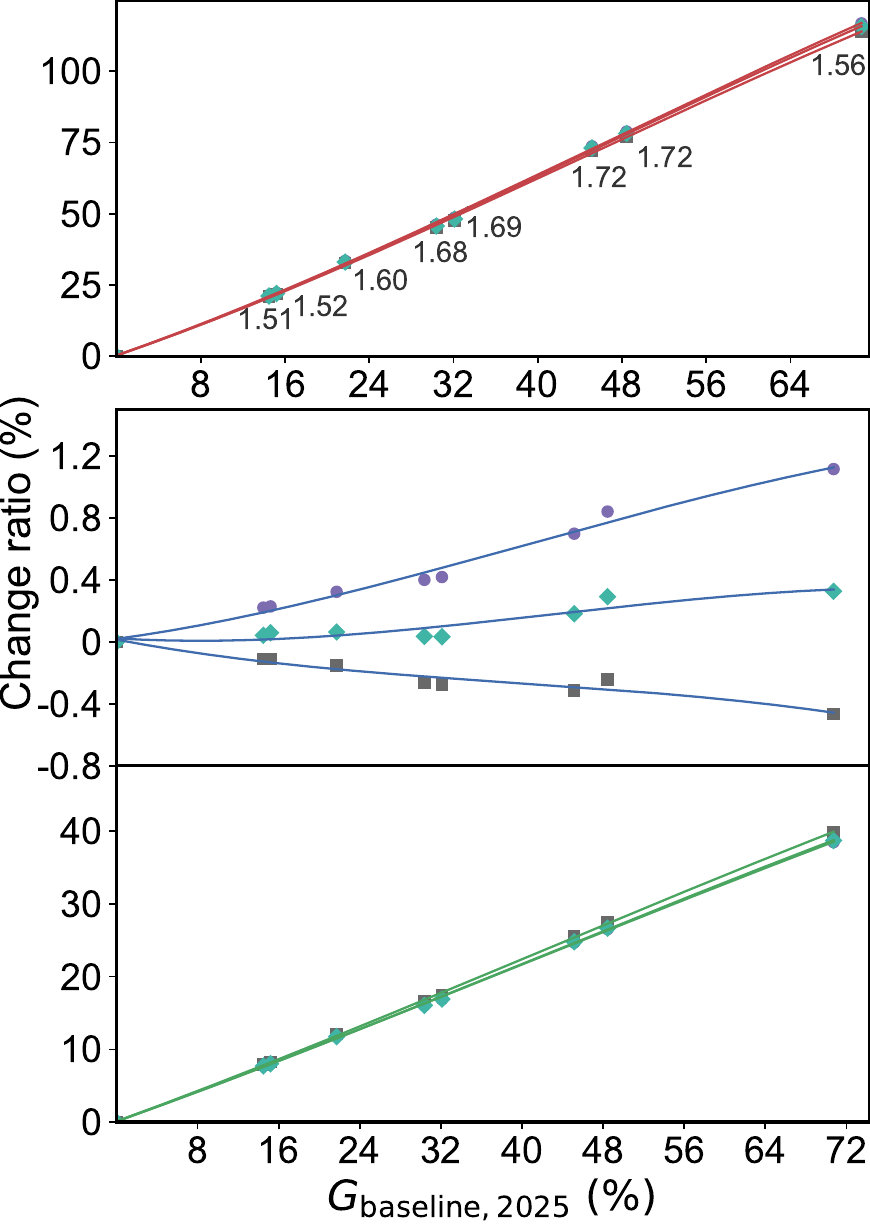}
                        \caption{Dallas}
                        \label{fig:pop_dallas}
                    \end{subfigure}%
                    \hfill%
                    \begin{subfigure}[b]{0.33\textwidth}
                        \centering
                        \includegraphics[width=\linewidth, keepaspectratio]{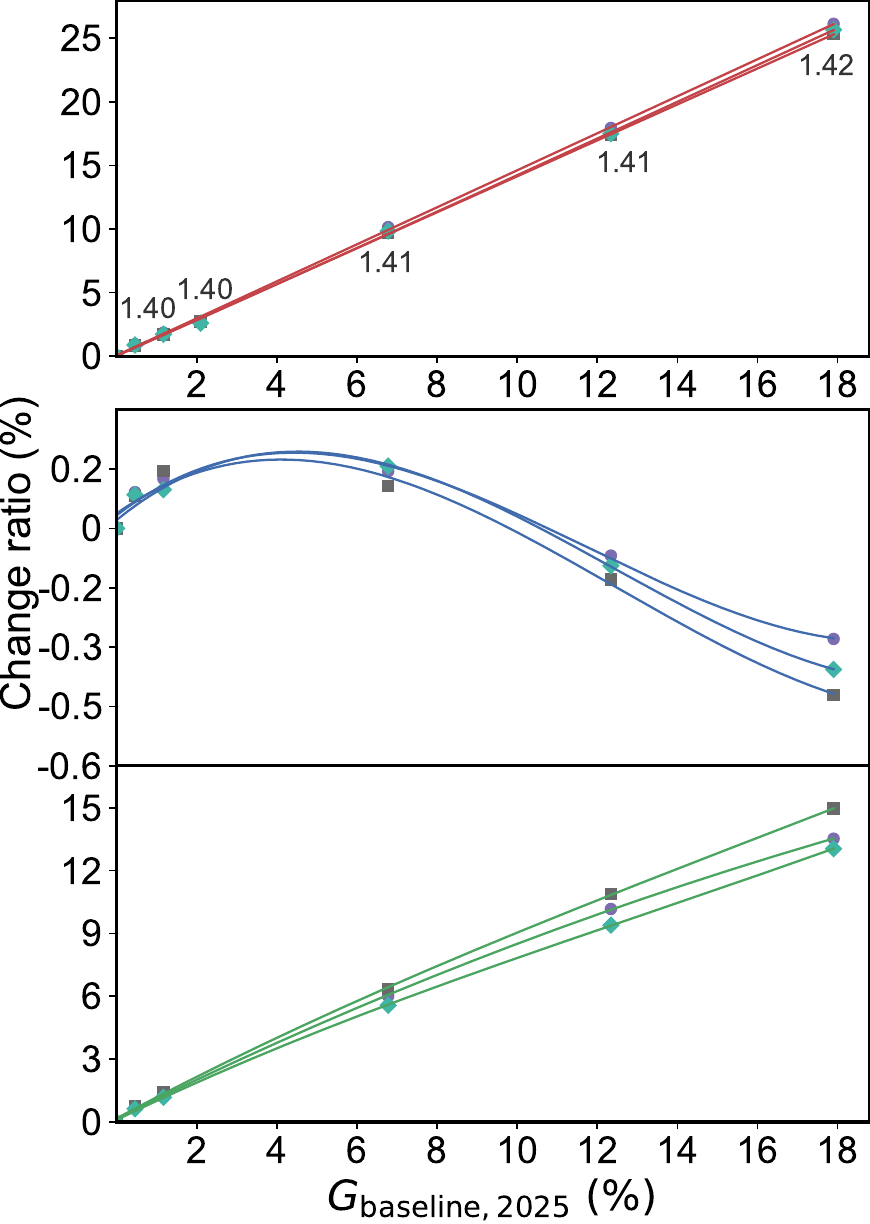}
                        \caption{Pittsburgh}
                        \label{fig:pop_pittsburgh}
                    \end{subfigure}%
                    \\[-1.5pt]
                    \begin{subfigure}[b]{0.33\textwidth}
                        \centering
                        \includegraphics[width=\linewidth, keepaspectratio]{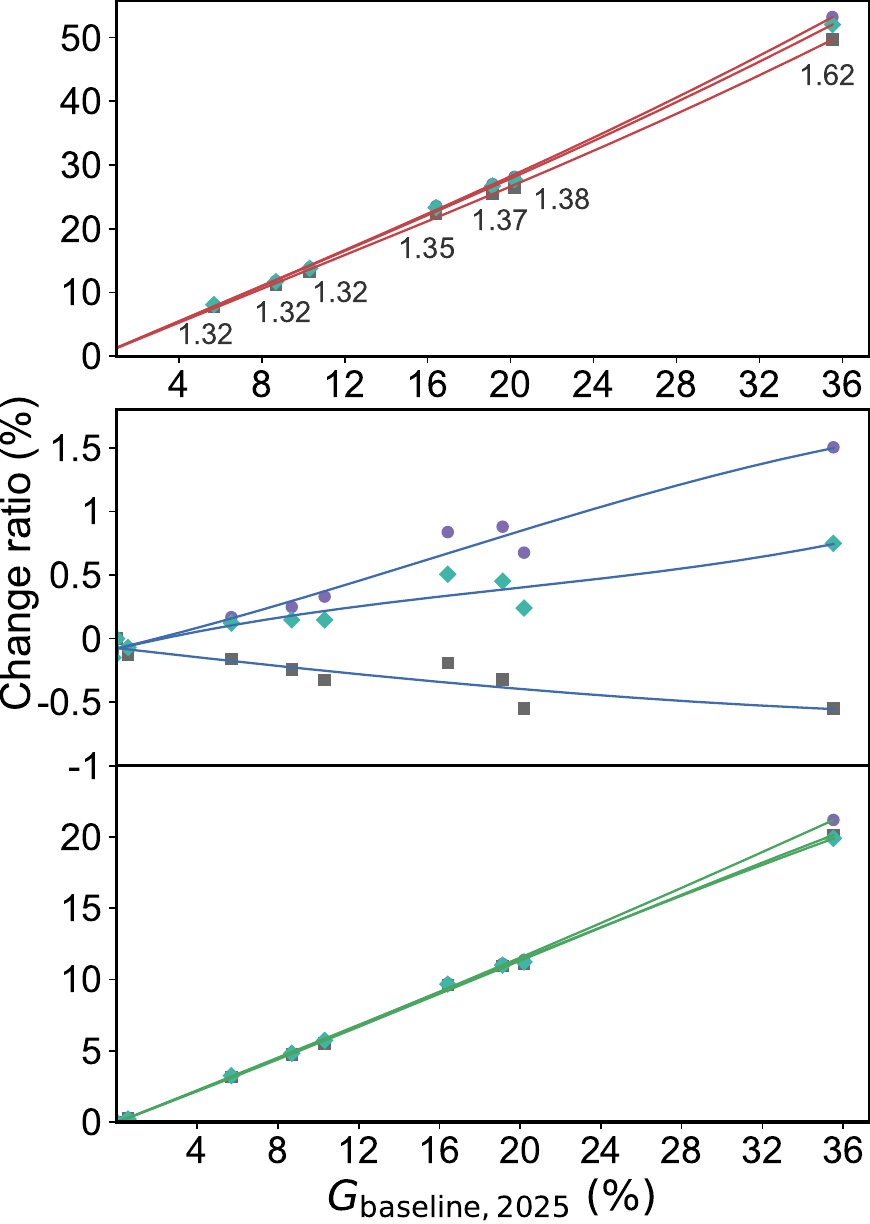}
                        \caption{Miami}
                        \label{fig:pop_miami}
                    \end{subfigure}%
                    \hfill%
                    \begin{subfigure}[b]{0.33\textwidth}
                        \centering
                        \includegraphics[width=\linewidth, keepaspectratio]{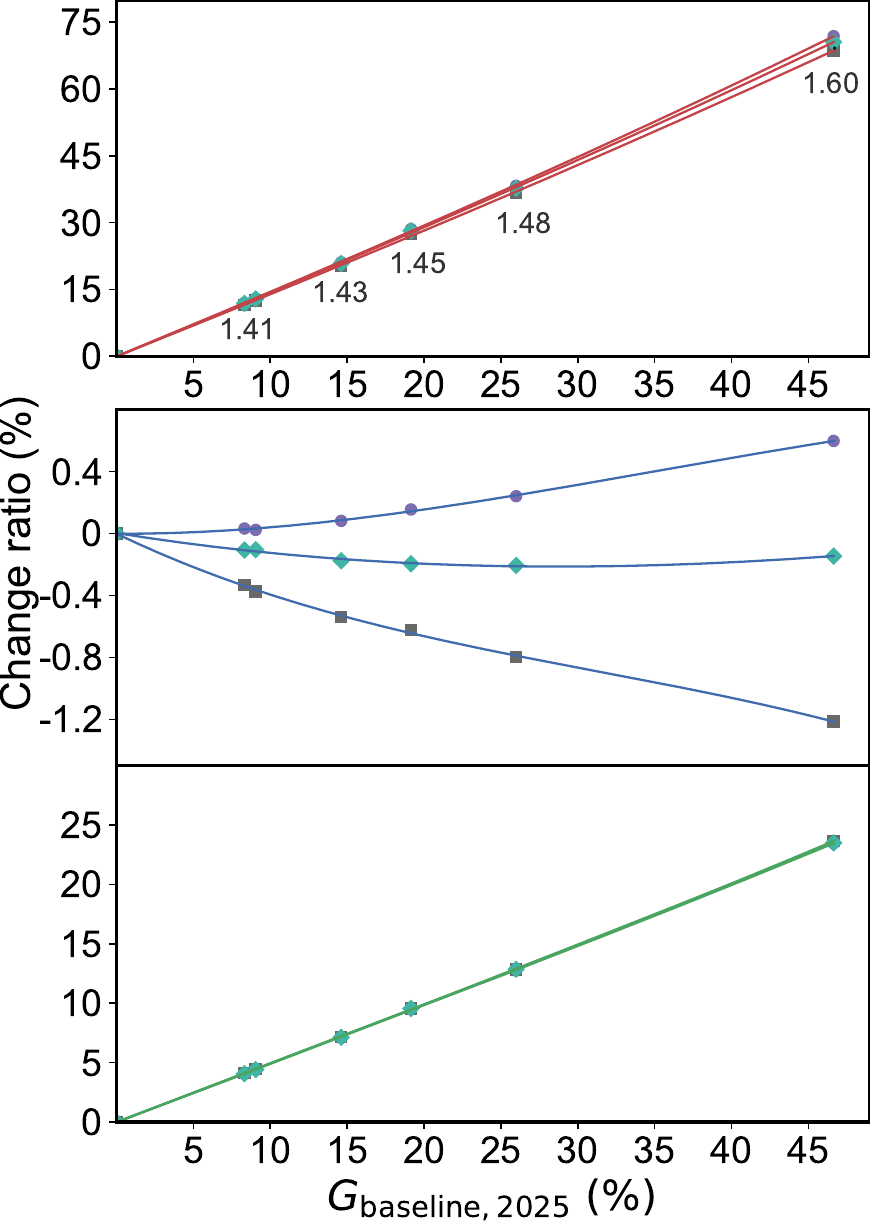}
                        \caption{Denver}
                        \label{fig:pop_denver}
                    \end{subfigure}%
                    \hfill%
                    \begin{subfigure}[b]{0.33\textwidth}
                        \centering
                        \includegraphics[width=\linewidth, keepaspectratio]{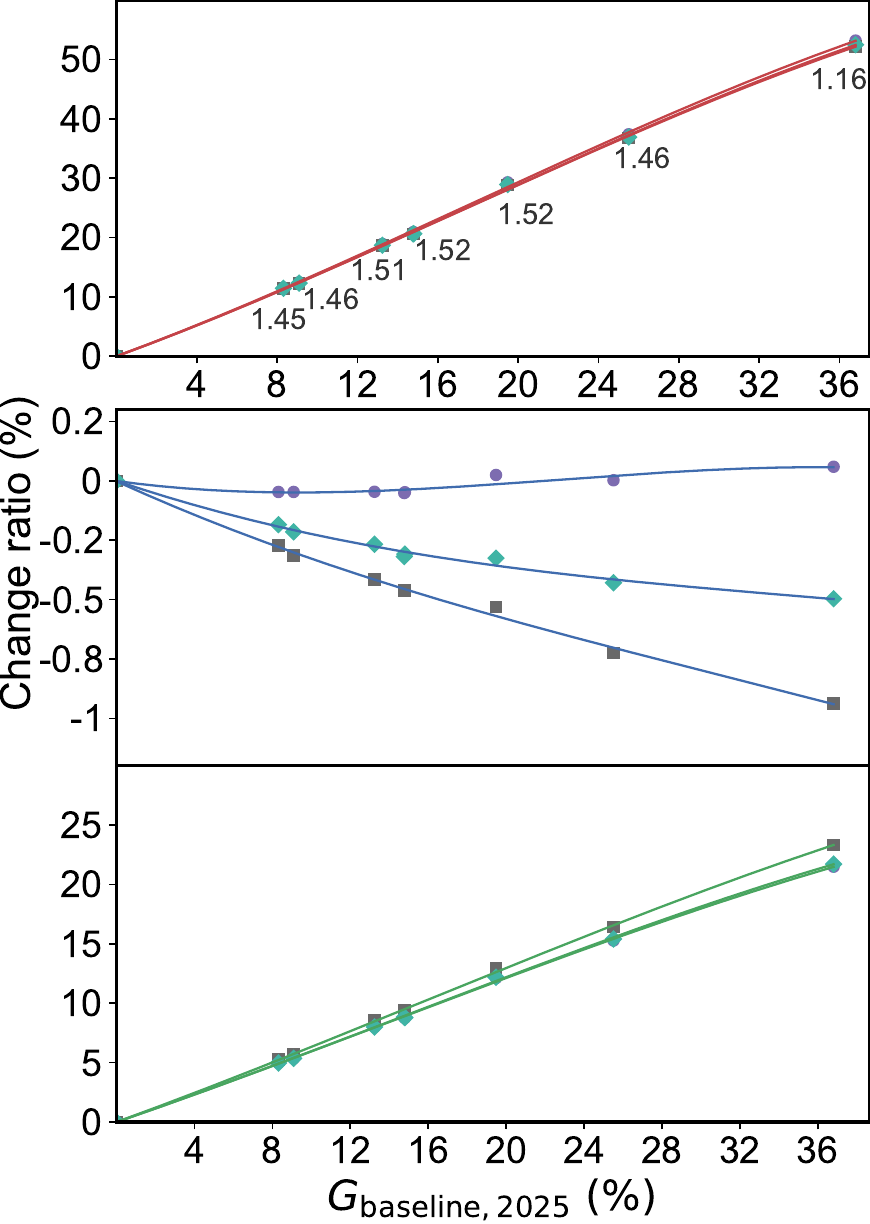}
                        \caption{Philadelphia}
                        \label{fig:pop_philadelphia}
                    \end{subfigure}%
                    \vspace{-0.7em}
                    \caption{Change ratios of carbon emissions (top), UETT (middle), and UETL (bottom)
                        with respect to population growth percentage $G_{\mathrm{baseline}, \mathrm{2025}}$ for the six U.S. cities.
                   }
                    \label{fig:population change}
                    \vspace{-1.0em} 
                \end{figure}

                Across most cities, under the Time-Only UE model, both UETL and UETT increase as the population grows.
                In contrast, under the Time-Carbon UE model, UETL decreases while UETT increases,
                with the percentage increase in UETT greater than that under the Time-Only UE model.
                Comparing these results demonstrates that under population growth,
                travelers using time-routing prioritize maintaining travel speed despite an increase in travel distance,
                whereas those using eco-routing prioritize reducing travel distance despite a significant decline in travel speed.
                These findings indicate that the carbon reduction benefits from reducing travel length outweigh the benefits from maintaining travel speed,
                aligning with Theorem~\ref{thm:short_distance}
                that reducing travel distance is significantly more critical than maintaining travel speed for mitigating emissions driven by population growth.
                However, this does not imply that travel speed is unimportant;
                rather, it reveals that under population growth, links along routes selected by eco-routing,
                especially those on short routes, will inevitably become increasingly congested,
                thereby constraining carbon reduction potential, as is visually evident in Section~\ref{sec:4.2}.
                Therefore, to reduce emissions and construct low-carbon road networks,
                it is crucial to improve the jobs-housing balance, enhance spatial proximity, and promote polycentric development
                to counteract urban sprawl, thereby shortening travel distance.
                Simultaneously, strategies should also focus on improving the short routes to relieve the increasing congestion
                (Remark~\ref{re:construction strategies}).

                For all cities, the slopes of the emission change ratio curves are similar across all UE models,
                indicating that eco-routing cannot fundamentally slow down the upward trend of emissions driven by population growth.
                This finding corroborates Corollary~\ref{cor:eco_coefficient}
                that the scaling order of emissions with population remains invariant regardless of routing strategies.
                Moreover, as established in Corollary~\ref{cor:finite_expansion}, the implementation of bounded capacity expansion also fails to alter the scaling order.
                In addition, while emissions and population exhibit a global linear relationship (Theorem~\ref{thm:scaling}),
                from a local perspective within the examined population range,
                emissions in most cities scale superlinearly with population (Remark~\ref{re:Local scaling}).
                Only a minority of cases, such as Dallas and Philadelphia within specific population intervals, display a sublinear relationship.
                Given the generality of these findings, the results hold true even for networks resulting from new road construction.
                Therefore, although routing and construction strategies can reduce absolute emissions,
                they cannot alter the superlinear scaling order of emissions with population.
                To fundamentally mitigate future emissions,
                policies should extend beyond them to manage the increase in total demand,
                and to follow a sustainable SSP characterized by slower population growth.



            \subsubsection{Impact of Population Growth on Eco-routing Efficiency}
            \label{sec:4.3.3}
                In this section, we evaluate the efficiency of eco-routing based on the optimization levels.
                As shown in Fig.~\ref{fig:pop_carbon},
                San Francisco transitions from Level 1 through 3 to 4;
                Dallas shifts from Level 4 to 1;
                Pittsburgh remains in Level 4;
                Miami moves from Level 2 to 1 and subsequently to 3;
                Denver transitions from Level 4 through 3 to 1;
                and Philadelphia shifts from Level 2 to 4.
                Furthermore, the variations in optimization level for the six U.S. cities under different SSPs are illustrated in Fig.~\ref{fig:op_ssp}.
                Some cities, such as Dallas, Pittsburgh, and Denver, are already in inefficient optimization levels across all SSPs in 2025.
                Other cities, including San Francisco, Miami, and Philadelphia,
                begin in an efficient Level 1 or 2 but drop to Level 3 or 4 as population grows;
                notably, Miami and Philadelphia are projected to sustain Level 1 or 2 through 2050 under the SSP1 or SSP2 scenarios.
                \begin{figure}[!htb]
                    \captionsetup[subfigure]{skip=1pt}
                    \centering
                    \begin{subfigure}[b]{0.33\textwidth}
                        \centering
                        \includegraphics[width=\linewidth, keepaspectratio]{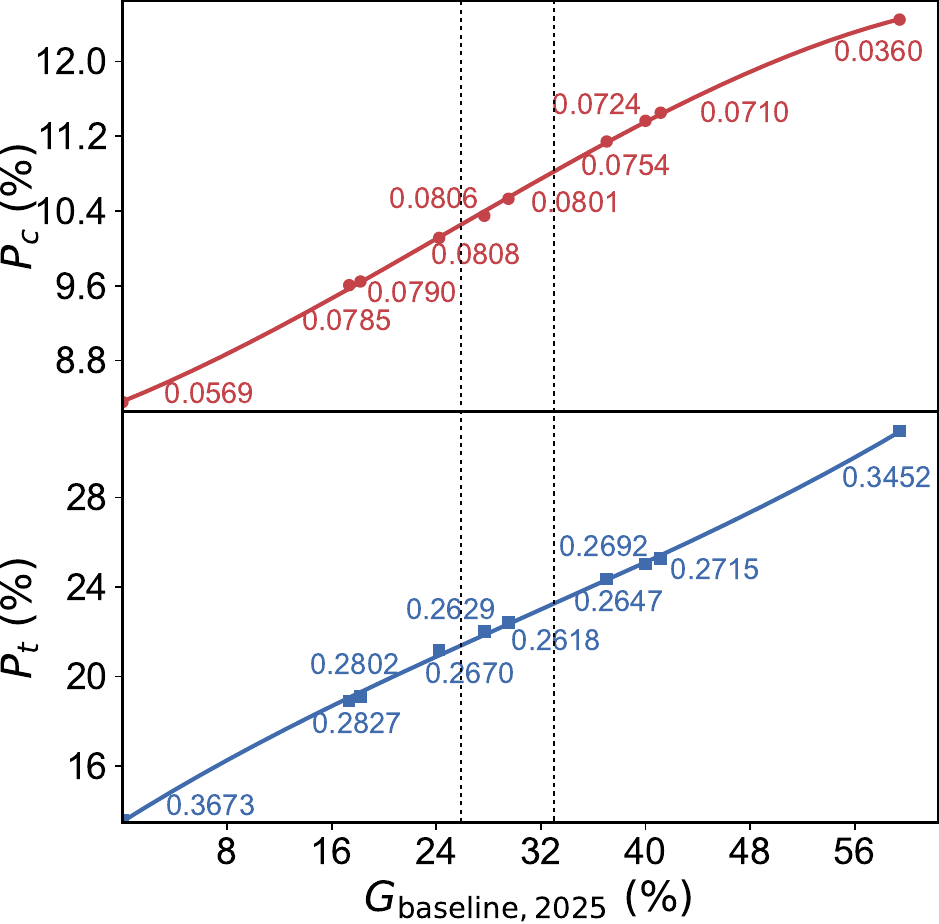}
                        \caption{San Francisco}
                        \label{fig:better_sf}
                    \end{subfigure}%
                    \hfill%
                    \begin{subfigure}[b]{0.33\textwidth}
                        \centering
                        \includegraphics[width=\linewidth, keepaspectratio]{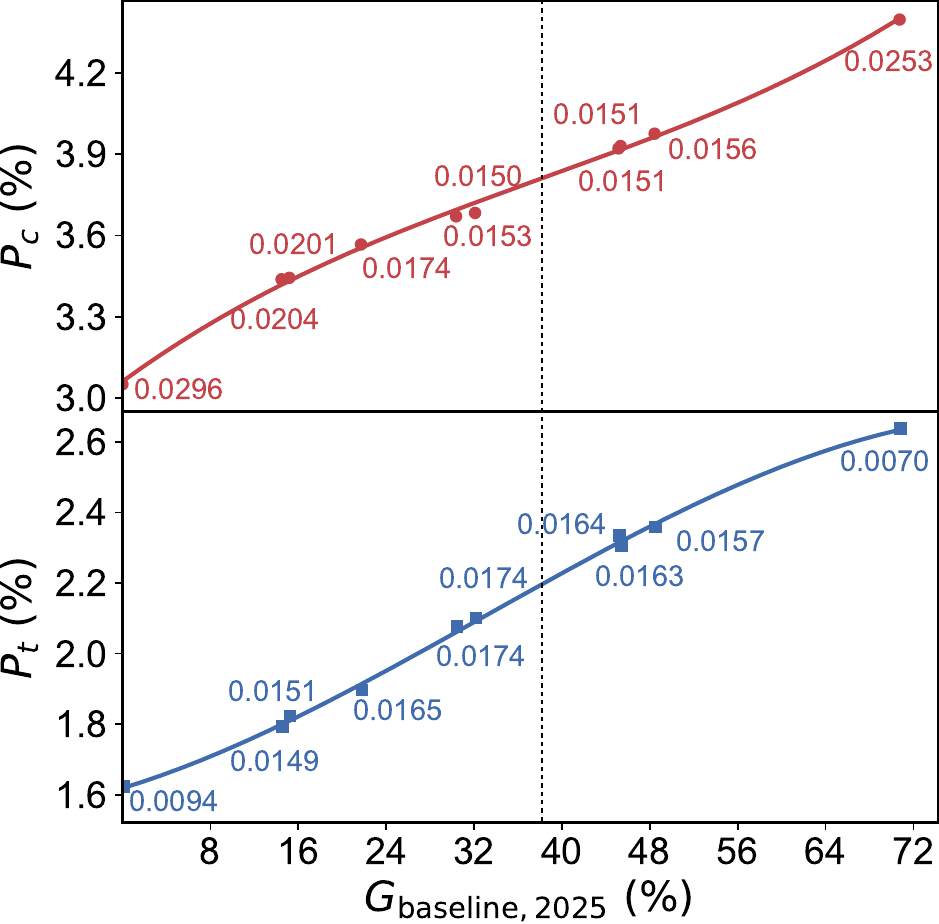}
                        \caption{Dallas}
                        \label{fig:better_dallas}
                    \end{subfigure}%
                    \hfill%
                    \begin{subfigure}[b]{0.33\textwidth}
                        \centering
                        \includegraphics[width=\linewidth, keepaspectratio]{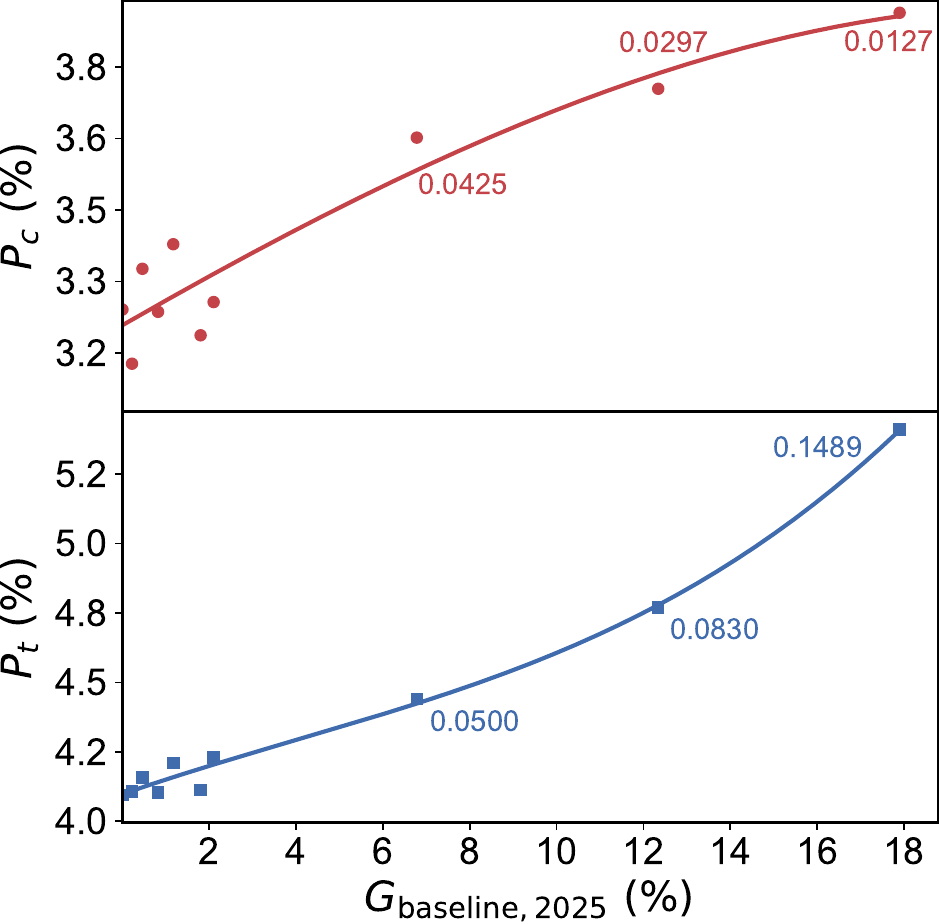}
                        \caption{Pittsburgh}
                        \label{fig:better_pittsburgh}
                    \end{subfigure}%
                    \\[-1.5pt]
                    \begin{subfigure}[b]{0.33\textwidth}
                        \centering
                        \includegraphics[width=\linewidth, keepaspectratio]{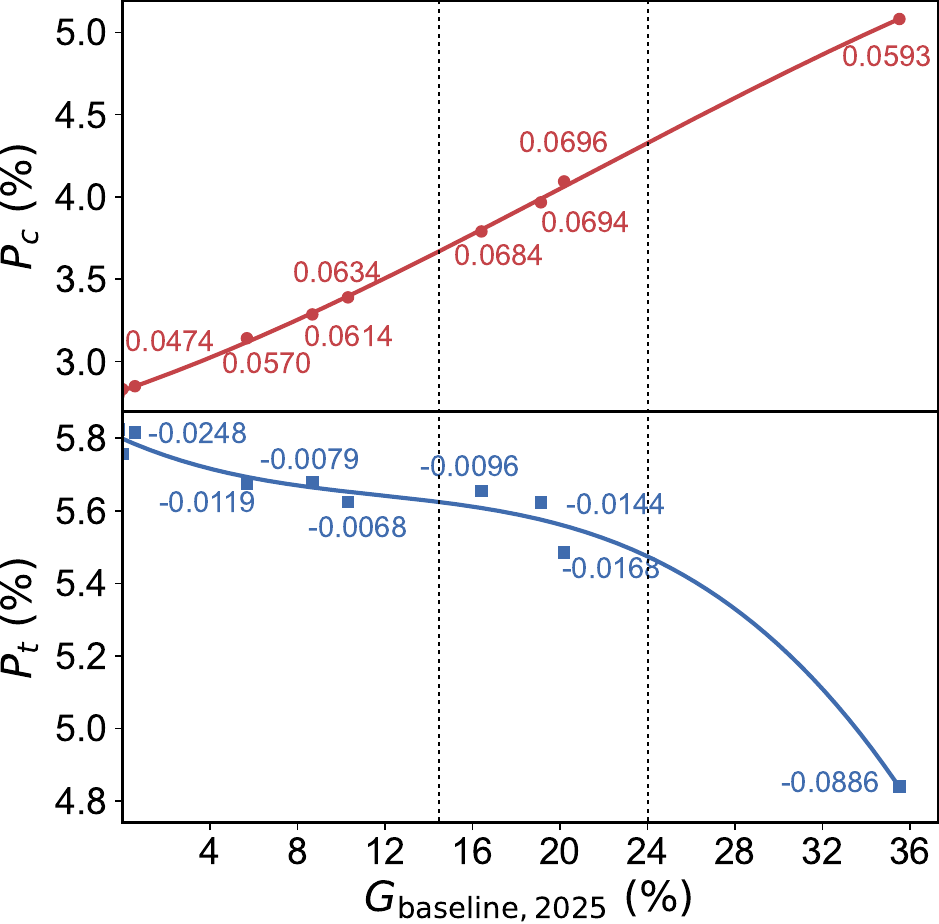}
                        \caption{Miami}
                        \label{fig:better_miami}
                    \end{subfigure}%
                    \hfill%
                    \begin{subfigure}[b]{0.33\textwidth}
                        \centering
                        \includegraphics[width=\linewidth, keepaspectratio]{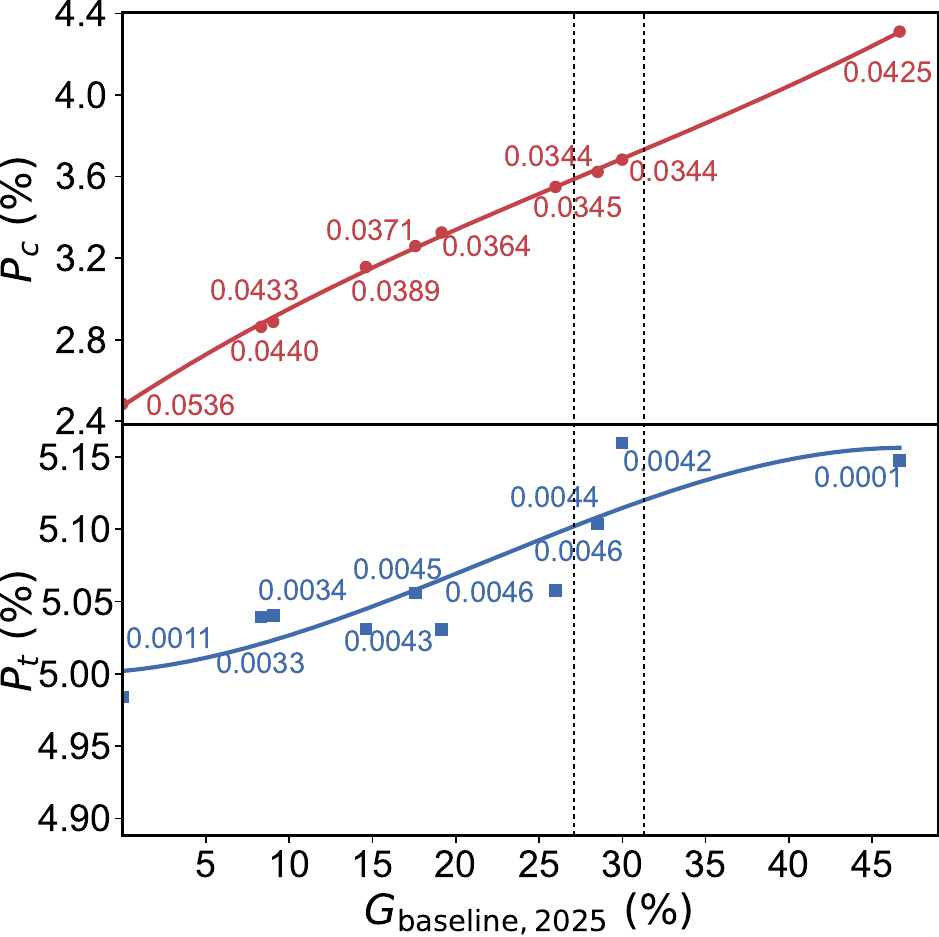}
                        \caption{Denver}
                        \label{fig:better_denver}
                    \end{subfigure}%
                    \hfill%
                    \begin{subfigure}[b]{0.33\textwidth}
                        \centering
                        \includegraphics[width=\linewidth, keepaspectratio]{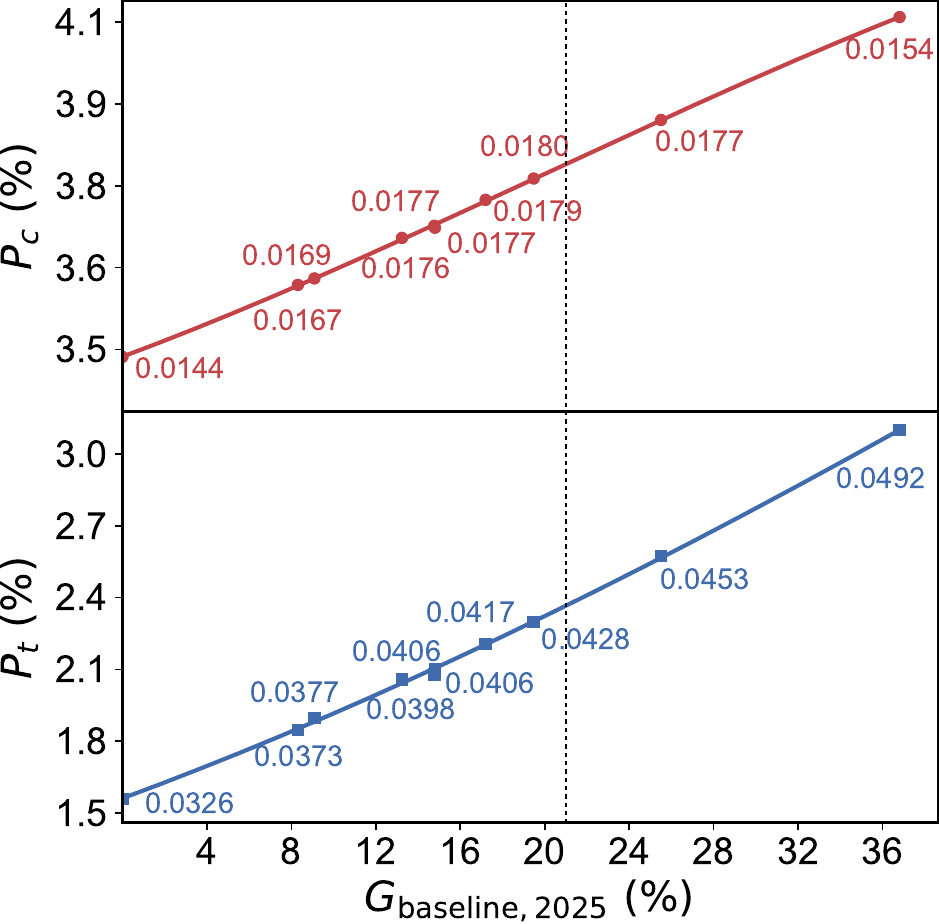}
                        \caption{Philadelphia}
                        \label{fig:better_philadelphia}
                    \end{subfigure}%
                    \vspace{-0.7em}
                    \caption{Optimization levels for the six U.S. cities
                        (vertical dashed lines indicate the boundaries between optimization levels).}
                    \label{fig:pop_carbon}
                    \vspace{-1.0em}
                \end{figure}

                \begin{figure}[!htb]
                    \captionsetup[subfigure]{skip=1pt}
                    \centering
                    \begin{subfigure}[b]{0.33\textwidth}
                        \centering
                        \includegraphics[width=\linewidth, keepaspectratio]{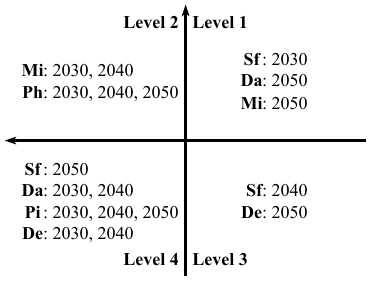}
                        \caption{SSP1 scenario}
                        \label{fig:op_ssp1}
                    \end{subfigure}%
                    \hfill%
                    \begin{subfigure}[b]{0.33\textwidth}
                        \centering
                        \includegraphics[width=\linewidth, keepaspectratio]{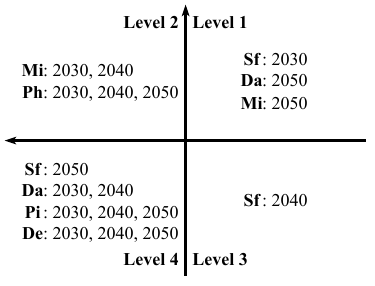}
                        \caption{SSP2 scenario}
                        \label{fig:op_ssp2}
                    \end{subfigure}%
                    \hfill%
                    \begin{subfigure}[b]{0.33\textwidth}
                        \centering
                        \includegraphics[width=\linewidth, keepaspectratio]{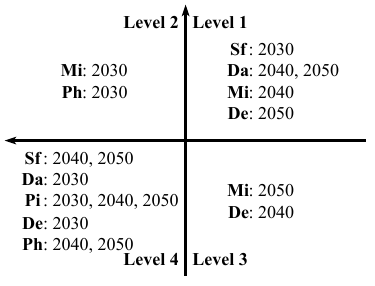}
                        \caption{SSP5 scenario}
                        \label{fig:op_ssp5}
                    \end{subfigure}%
                    \vspace{-0.7em}
                    \caption{Variations in optimization level for the six U.S. cities under different SSPs.
                        (The classification of optimization levels is based on Fig.~\ref{fig:Optimization Level}.
                        Sf: San Francisco; Da: Dallas; Pi: Pittsburgh; Mi: Miami; De: Denver; and Ph: Philadelphia.)}
                    \label{fig:op_ssp}
                    \vspace{-1.0em}
                \end{figure}

                These observations show that the eco-routing efficiency exhibits three distinct patterns:
                in some cities, eco-routing is already inefficient if the city resides in Level 3 or 4;
                in others, eco-routing is currently efficient but is projected to become inefficient under population growth,
                as its efficiency declines once the optimization level deviates from Level 1 or 2;
                and in a few cities, eco-routing maintains long-term high efficiency,
                provided that future development follows the sustainable SSP1 or SSP2 scenarios.
                Therefore, given that most cities either already operate at or tend to shift toward inefficient optimization levels,
                sole reliance on eco-routing for emission mitigation is insufficient under population growth.
                Complementary strategies should be adopted to directly reduce emissions and travel time, as well as to maintain efficient optimization levels.

        \subsection{Carbon Bottlenecks}
        \label{sec:4.4}
            \subsubsection{VOC}
            \label{sec:4.2}

                    In this section, we utilize VOC to analyze the urban dynamics from a microscopic road network perspective.
                    When travelers shift from time-routing to eco-routing,
                    as illustrated in Fig.~\ref{fig:01 VOCs map b},
                    links along shorter routes experience an increase in VOC;
                    conversely, links on longer routes experience a decrease.
                    For other cities, similar patterns are observed in Fig.~\ref{fig:2050 ssp2 six cities},
                    all of which corroborate Theorem~\ref{thm:short_distance} and the findings in Section~\ref{sec:4.1.3}.
                    Furthermore, there are a set of specific links that exhibit a significant increase in VOC,
                    and their spatial distribution remains nearly invariant across the three analysis years.
                    When using eco-routing under population growth,
                    as shown in Fig.~\ref{fig:01 VOCs map a},
                    in addition to a general increase in VOC observed across almost all road segments,
                    there are also specific links on short paths exhibiting a significant increase
                    while their spatial distribution remains highly consistent across years.
                    Moreover, as illustrated in Fig.~\ref{fig:more VOCs maps},
                    although these specific links are already characterized by high VOC and low speed,
                    their traffic flow continues to surge as the population grows.
                    These observations corroborate our analysis in Section~\ref{sec:4.3.1},
                    confirming that under population growth, links along routes selected by eco-routing,
                    especially those on short paths, will inevitably become increasingly congested.
                    \begin{figure}[!htb]
                        \captionsetup[subfigure]{skip=1pt}
                        \centering
                        \begin{subfigure}[b]{\textwidth}
                            \centering
                            \begin{minipage}[b]{0.33\textwidth}
                                \includegraphics[width=\linewidth]{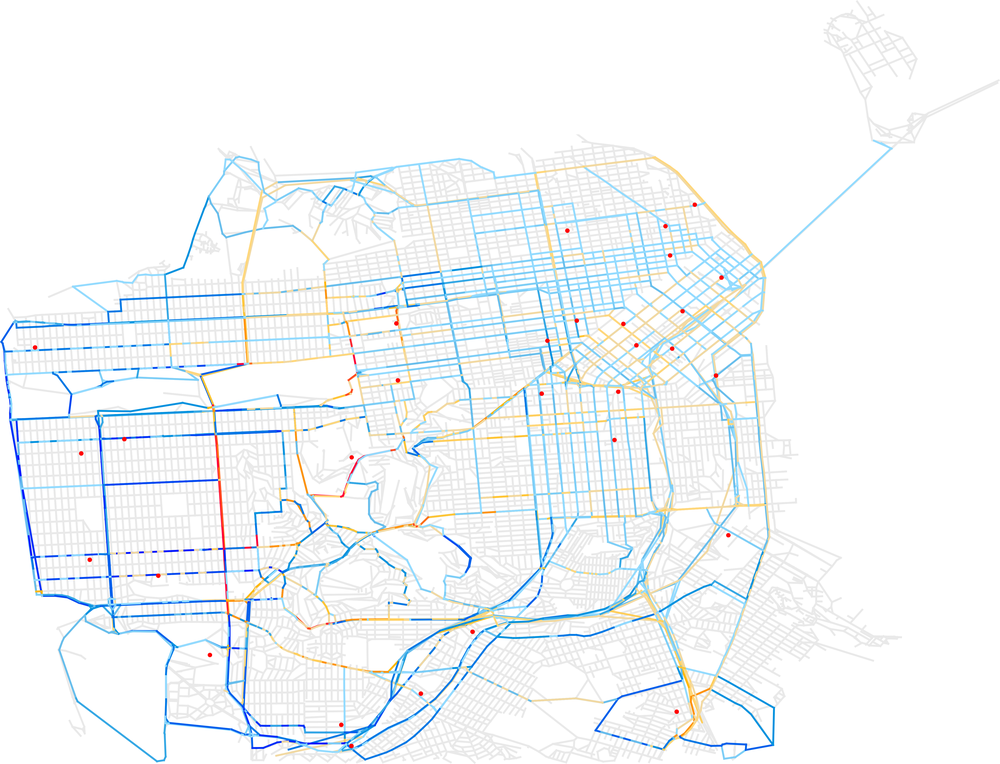}
                            \end{minipage}%
                            \hfill%
                            \begin{minipage}[b]{0.33\textwidth}
                                \includegraphics[width=\linewidth]{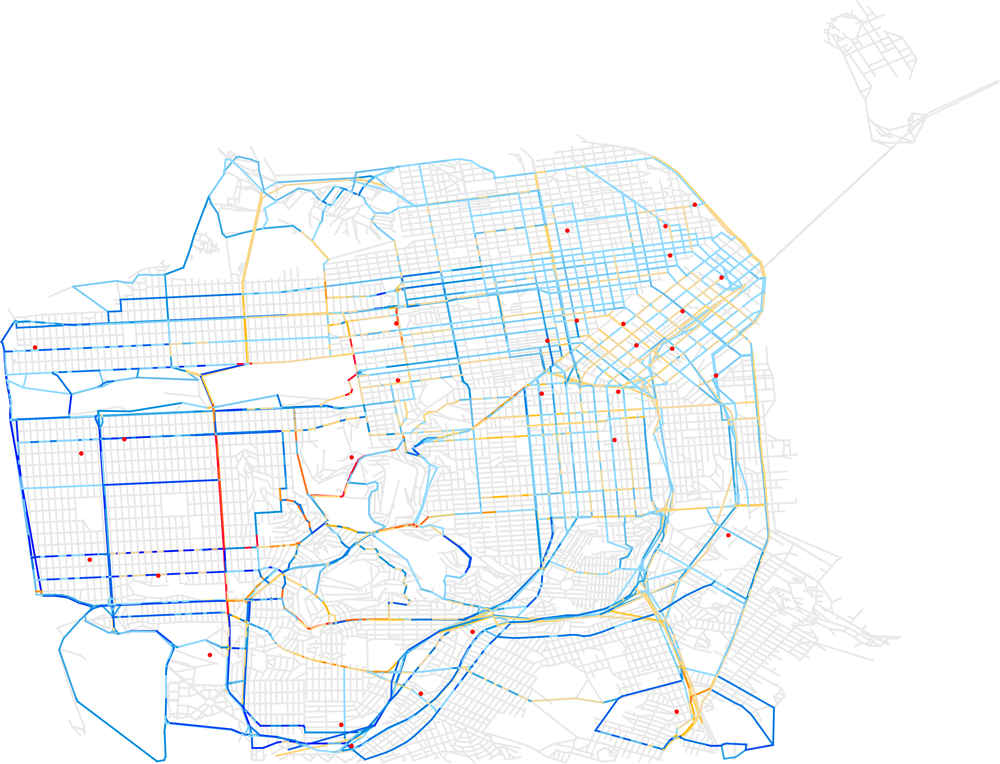}
                            \end{minipage}%
                            \hfill%
                            \begin{minipage}[b]{0.33\textwidth}
                                \includegraphics[width=\linewidth]{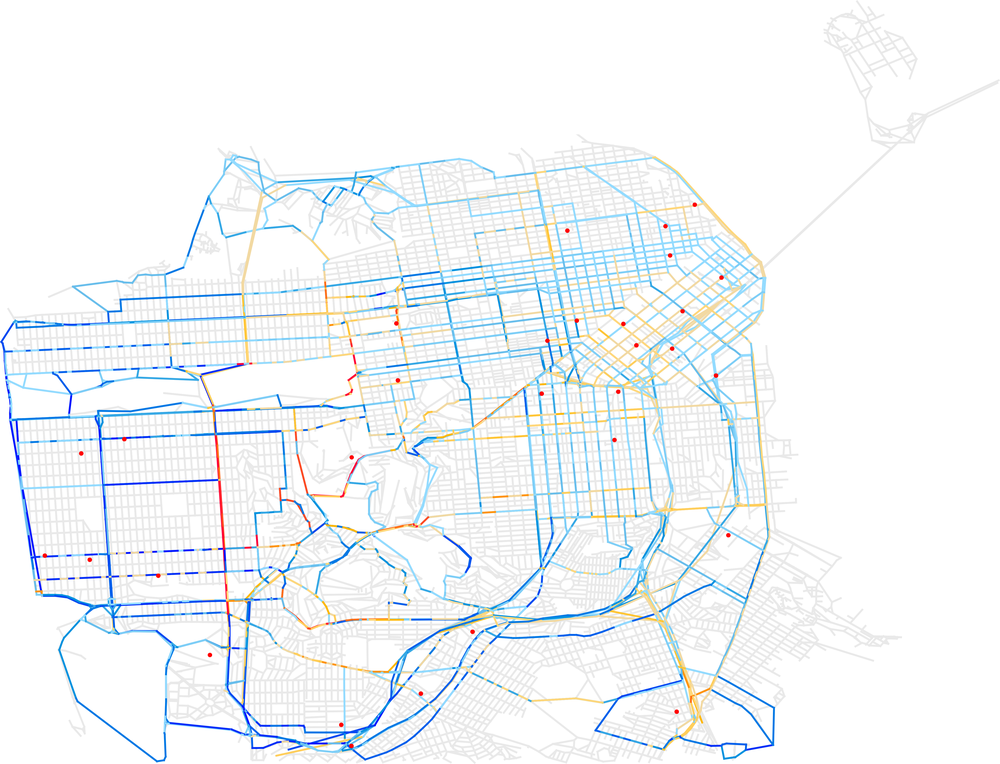}
                            \end{minipage}%
                        \end{subfigure}%
                        \par\vspace{-0.1pt}
                        \begin{subfigure}[b]{0.90\textwidth}
                            \centering
                            \includegraphics[width=\linewidth]{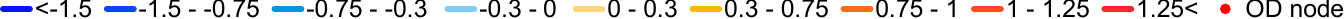}
                            \caption{2030 (left), 2040 (middle), and 2050 (right)}
                            \label{fig:01 VOCs map b}
                        \end{subfigure}%
                        \\
                        \begin{subfigure}[b]{\textwidth}
                            \centering
                            \begin{minipage}[b]{0.33\textwidth}
                                \includegraphics[width=\linewidth]{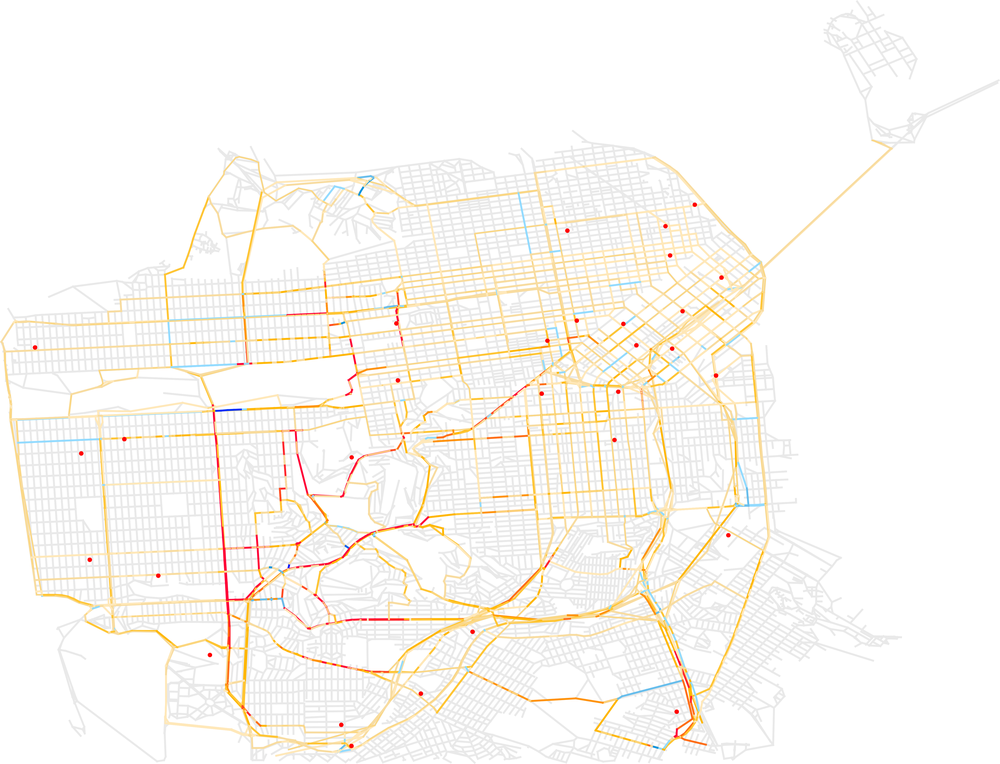}
                            \end{minipage}%
                            \hfill%
                            \begin{minipage}[b]{0.33\textwidth}
                                \includegraphics[width=\linewidth]{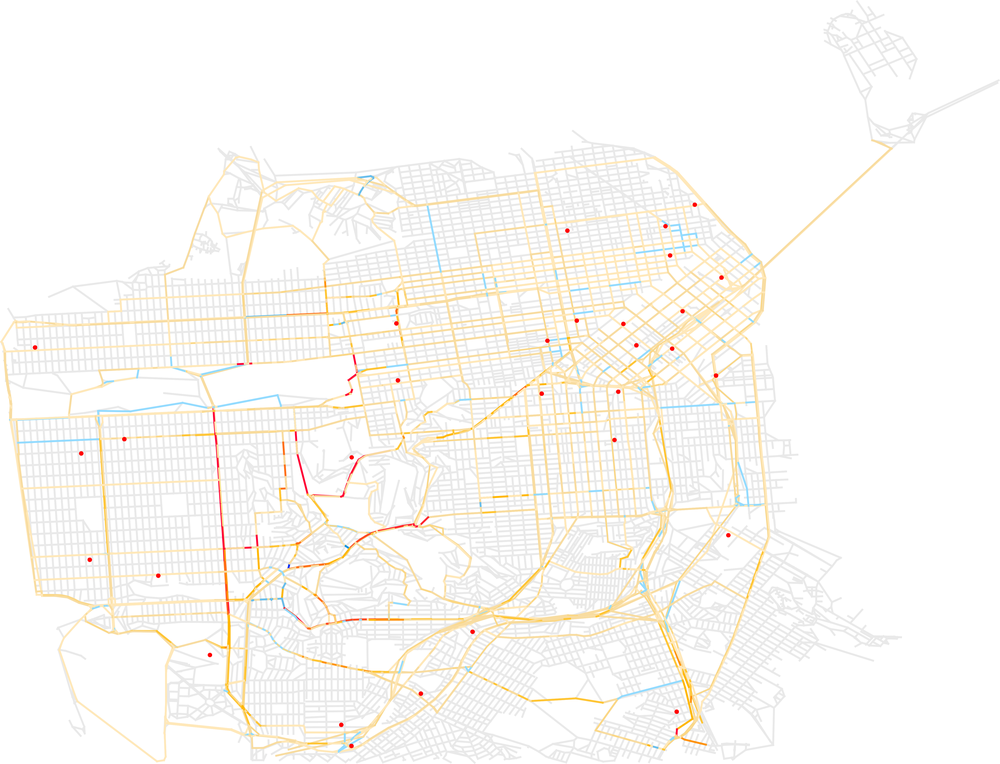}
                            \end{minipage}%
                            \hfill%
                            \begin{minipage}[b]{0.33\textwidth}
                                \includegraphics[width=\linewidth]{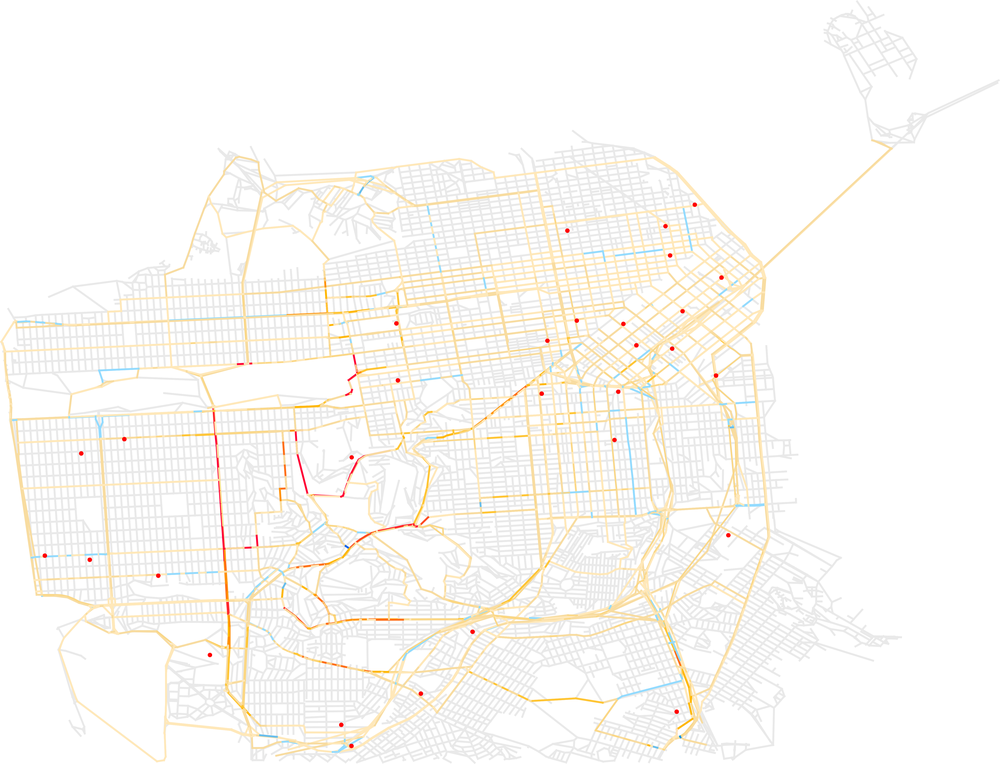}
                            \end{minipage}%
                        \end{subfigure}%
                        \par\vspace{-0.1pt}
                        \begin{subfigure}[b]{0.95\textwidth}
                            \centering
                            \includegraphics[width=\linewidth]{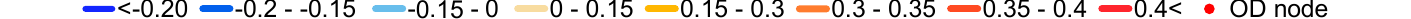}
                            \caption{2030 (left), 2040 (middle), and 2050 (right)}
                            \label{fig:01 VOCs map a}
                        \end{subfigure}%
                        \vspace{-0.7em}
                        \caption{Maps of VOC difference in San Francisco under the SSP2 scenario.
                                (a) The VOC difference between different UE models,
                                calculated as the VOC under the Time-Carbon UE model ($\psi_{1}=1$, $\psi_{2}=9$) minus that under the Time-Only UE model.
                                (b) The VOC difference between different years under the Time-Carbon UE model ($\psi_{1}=1$, $\psi_{2}=9$),
                                calculated as the VOC in specific year minus that in corresponding previous analysis years.
                                (OD nodes represent the top 15 origins and destinations ranked by demand volume.)}
                        \label{fig:VOCs map}
                        \vspace{-1.0em}
                    \end{figure}

                    \begin{figure}[!htb]
                        \captionsetup[subfigure]{skip=1pt}
                        \centering
                        \begin{subfigure}[b]{0.33\textwidth}
                            \centering
                            \includegraphics[width=\linewidth, keepaspectratio]{01_2050_ssp2}
                            \caption{San Francisco}
                            \label{fig:voc_sf}
                        \end{subfigure}%
                        \hfill%
                        \begin{subfigure}[b]{0.33\textwidth}
                            \centering
                            \includegraphics[width=\linewidth, keepaspectratio]{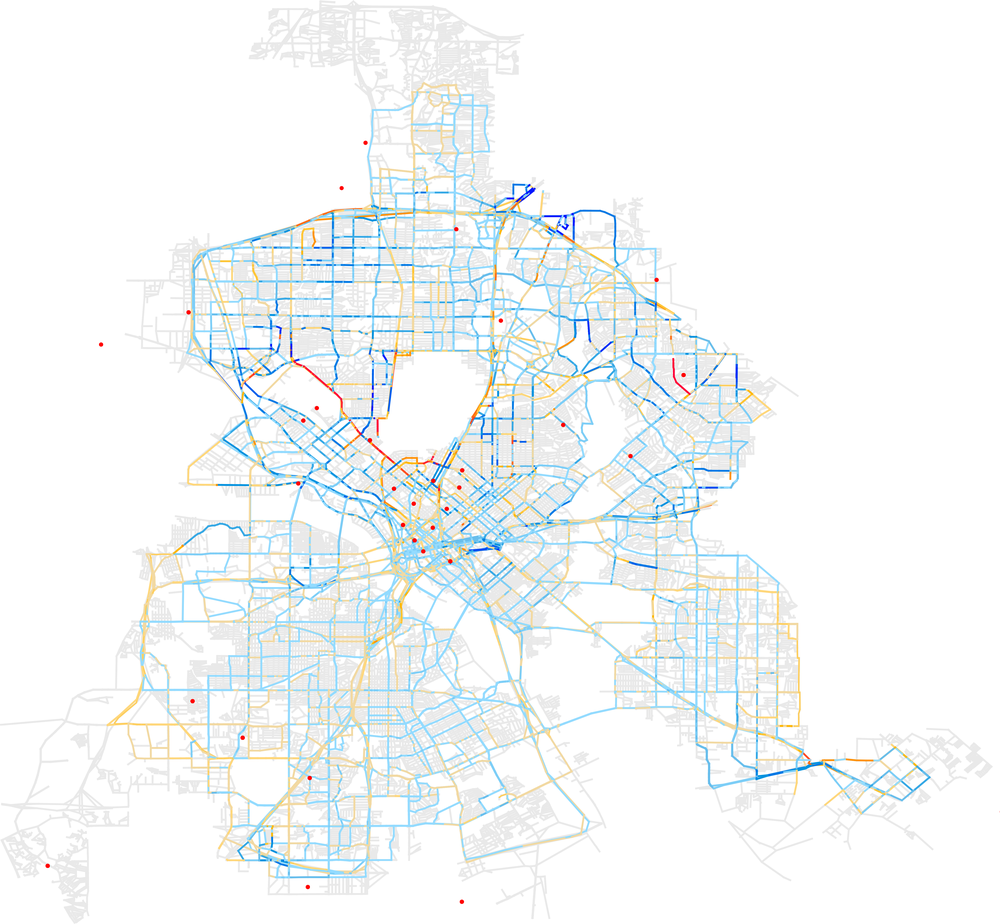}
                            \caption{Dallas}
                            \label{fig:voc_dallas}
                        \end{subfigure}%
                        \hfill%
                        \begin{subfigure}[b]{0.33\textwidth}
                            \centering
                            \includegraphics[width=\linewidth, keepaspectratio]{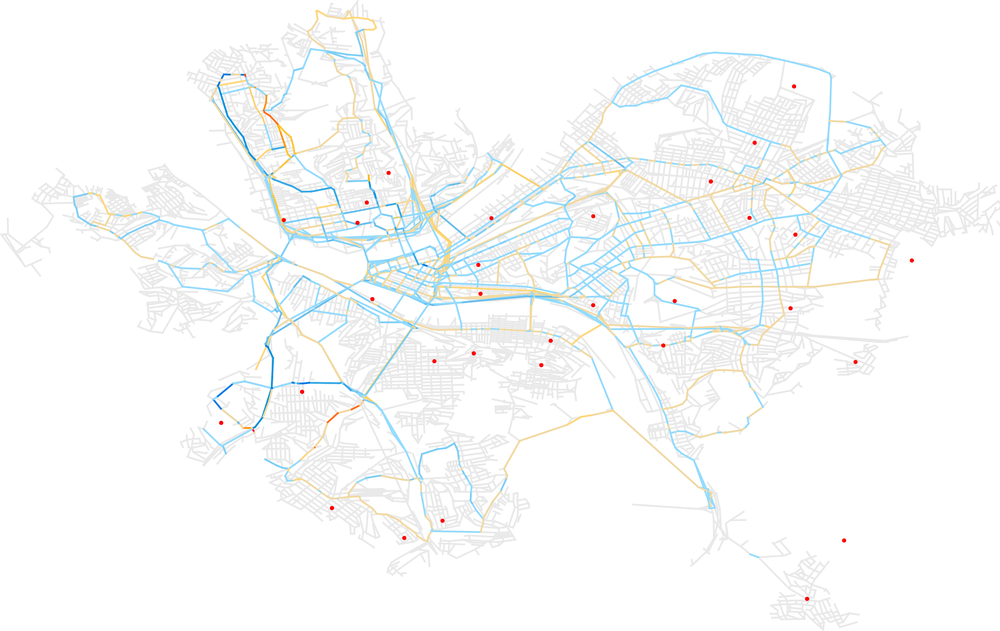}
                            \caption{Pittsburgh}
                            \label{fig:voc_pittsburgh}
                        \end{subfigure}%
                        \\[-1.5pt]
                        \begin{subfigure}[b]{0.33\textwidth}
                            \centering
                            \includegraphics[width=\linewidth, keepaspectratio]{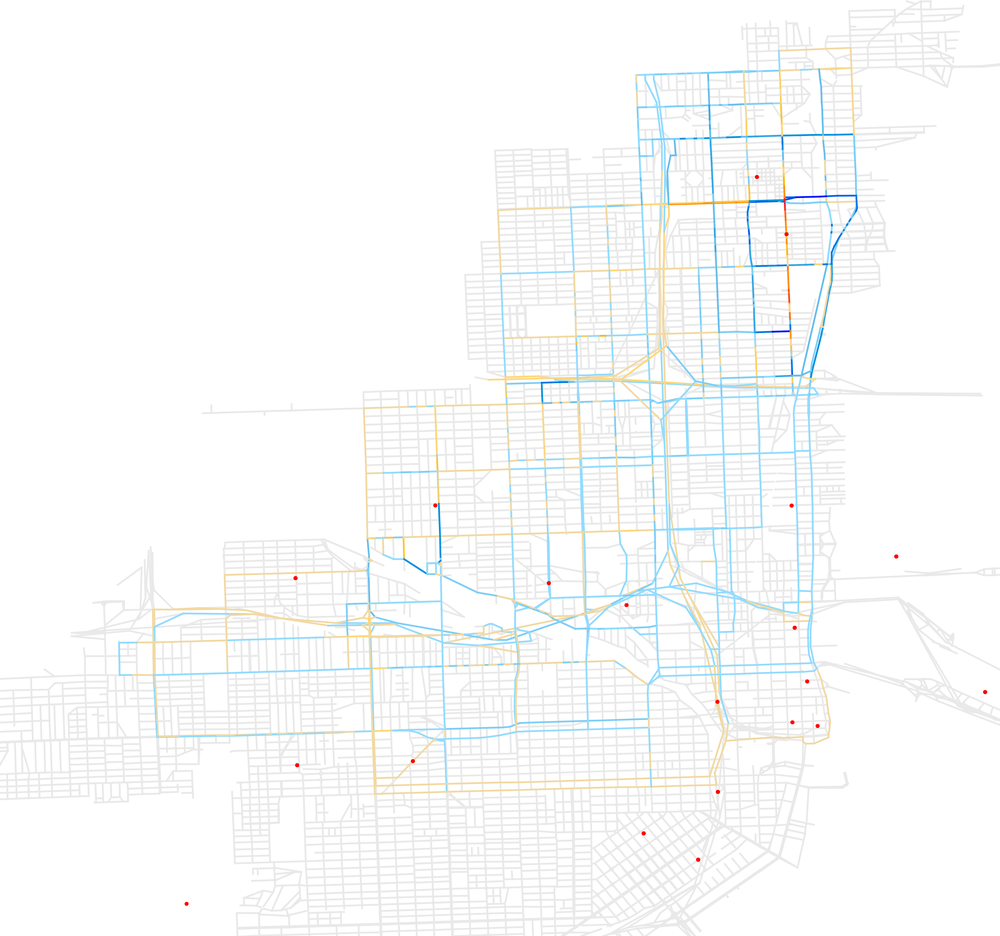}
                            \caption{Miami}
                            \label{fig:voc_miami}
                        \end{subfigure}%
                        \hfill%
                        \begin{subfigure}[b]{0.33\textwidth}
                            \centering
                            \includegraphics[width=\linewidth, keepaspectratio]{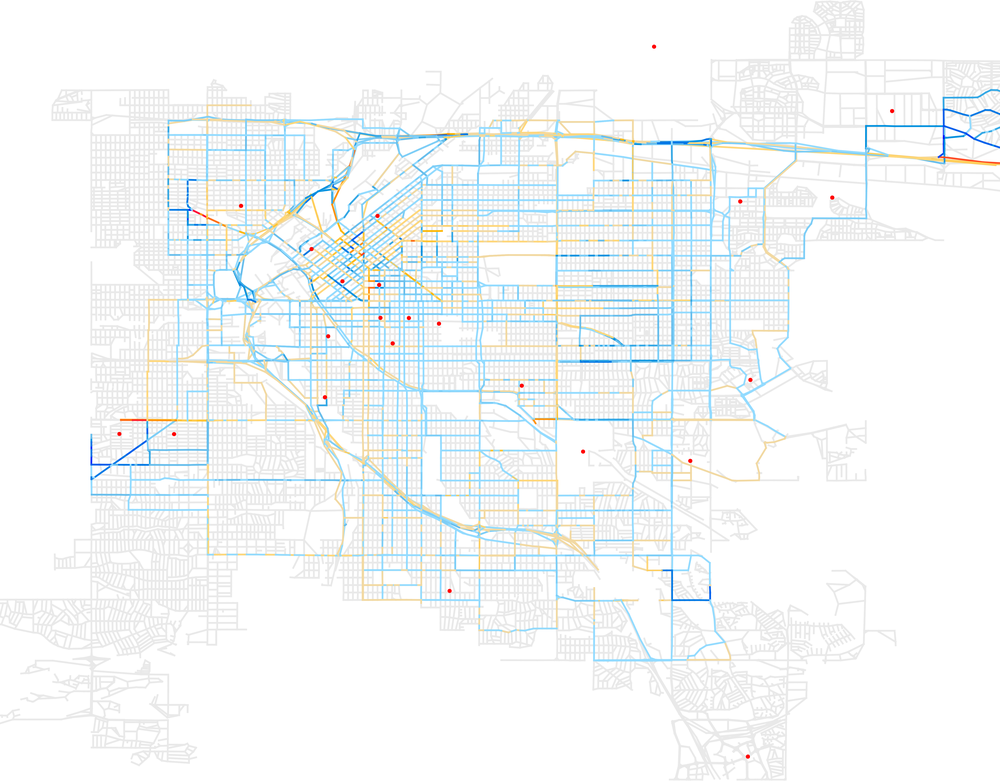}
                            \caption{Denver}
                            \label{fig:voc_denver}
                        \end{subfigure}%
                        \hfill%
                        \begin{subfigure}[b]{0.33\textwidth}
                            \centering
                            \includegraphics[width=\linewidth, keepaspectratio]{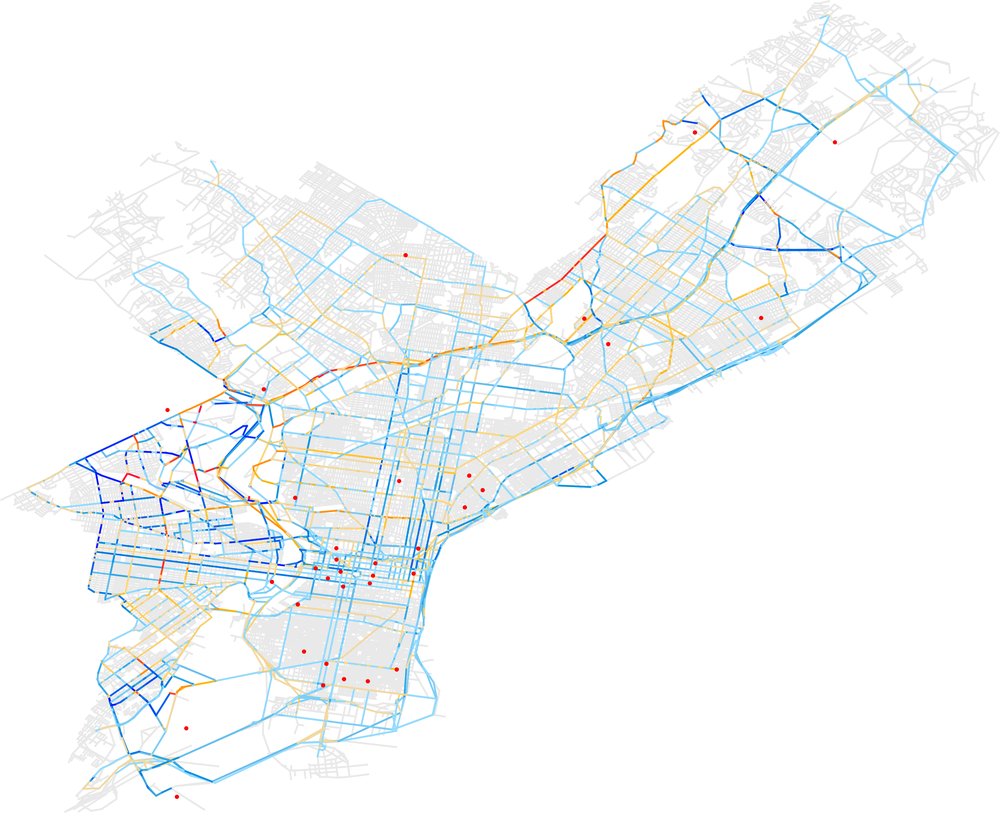}
                            \caption{Philadelphia}
                            \label{fig:voc_philadelphia}
                        \end{subfigure}%
                        \par\vspace{-7.6pt}
                        \begin{subfigure}[b]{0.90\textwidth}
                            \centering
                            \includegraphics[width=\linewidth]{legend1}
                        \end{subfigure}%
                        \vspace{-0.7em}
                        \caption{Maps of VOC difference between Time-Only and Time-Carbon UE ($\psi_{1}=1$, $\psi_{2}=9$) models
                                for the six U.S. cities under the SSP2 scenario in 2050.}
                        \label{fig:2050 ssp2 six cities}
                        \vspace{-1.0em}
                    \end{figure}

                    The spatial distribution of the two sets of specific links observed in Fig.~\ref{fig:VOCs map} exhibit high similarity.
                    In the first set (Fig.~\ref{fig:01 VOCs map b}), under the Time-Carbon UE model,
                    the VOC on these links increases significantly relative to the Time-Only UE model,
                    driving a reduction in carbon emissions.
                    In the second set (Fig.~\ref{fig:01 VOCs map a}), the VOC increases annually,
                    contributing to the mitigation of emissions driven by population growth.
                    Both results indicate that these links are key for reducing emissions.
                    However, under eco-routing, they already exhibit high VOC and will become increasingly congested as population grows.
                    On the one hand, severe congestion results in longer travel time and lower speed,
                    which increases the fuel consumption factor $h(v)$,
                    thereby offsetting part of the emission savings derived from reducing route length.
                    On the other hand, such congestion limits further reduction in travel distance,
                    directly constraining the potential for emission mitigation.
                    Consequently, we refer to the links from the second set as carbon bottlenecks, and
                    future strategies should focus on improving these bottlenecks.

                    Identifying carbon bottlenecks requires setting a significance threshold to distinguish links exhibiting a significant increase in VOC,
                    as well as specifying the year and SSP scenario.
                    Specifically, for (SSP2, 2050) with a threshold of 0.35,
                    we first calculate the difference in VOC between 2040 and 2050 under the Time-Carbon UE model and SSP2 scenario,
                    and then select links where the difference exceeds 0.35 as the bottleneck links.
                    Carbon bottlenecks are characterized by high VOC that will inevitably increase with future population growth,
                    implying that they exert a high and rising carbon influence within the road network.
                    Therefore, as suggested by Proposition~\ref{prop:zipf},
                    strategies to expand the capacity of these links
                    can yield substantial and long-term emission reduction benefits (Remark~\ref{re:dynamic carbon influence}).

            \subsubsection{Efficient Strategy: Improving Carbon Bottlenecks}
            \label{sec:4.4.1}

                In this section, we explore the effectiveness of expanding the capacity of carbon bottlenecks.
                Specifically, for all cities, we identify carbon bottlenecks using a significance threshold of 0.35,
                based on (SSP2, 2050) and Time-Carbon UE model with $\psi_{1}=1$ and $\psi_{2}=9$.
                However, no bottlenecks are found in Pittsburgh under the defined parameters,
                even when the threshold is moderately lowered.
                In the San Francisco network, as shown in Table~\ref{tab:Expanding table}, carbon bottleneck links constitute only 0.46\% of the total links,
                yet expanding this small number of links yields significant dual benefits for travel time and emissions,
                verifying Proposition~\ref{prop:zipf}.
                On the one hand, adding capacity reduces travel time and the fuel consumption factor $h(v)$,
                mitigating the issue where congestion offsets the emission benefits gained from the reduced route length.
                On the other hand, the expanded bottlenecks attract more vehicles to the shorter routes,
                thereby reducing their travel distance and emissions (Remark~\ref{re:redistribution}).
                Specifically, with a 50\% expansion, the reduction in emissions ranges from 1.2\% to 1.9\%,
                while the reduction in UETT spans from 16.5\% to 21.9\%.
                With a 100\% expansion, emission reductions range from 2.2\% to 3.6\%,
                and UETT reductions range from 23.8\% to 31.6\%.
                In stark contrast, the other links account for 99.54\% of the network segments.
                Although expanding this vast number of links by 20\% achieves a slightly larger emission reduction than expanding bottleneck links,
                it falls significantly behind in reducing UETT.
                These results collectively confirm that
                expanding carbon bottlenecks is a long-term efficient strategy (Remark~\ref{re:dynamic carbon influence}).
                \begin{table}[!htb]
                \centering
                \footnotesize
                \begin{threeparttable}
                    \caption{Carbon emissions (t) and UETT (min) under the Time-Carbon UE model ($\psi_{1}=1$, $\psi_{2}=9$)
                        in San Francisco with different expansion strategies.}
                    \label{tab:Expanding table}
                    \setlength{\tabcolsep}{3pt}
                    \begin{tabularx}{0.98\textwidth}{c @{} c *{10}{C}}
                        \toprule
                        \multirow{2}{*}{} & \multirow{2}{*}{Expansion} & baseline & \multicolumn{3}{c}{SSP1} & \multicolumn{3}{c}{SSP2} & \multicolumn{3}{c}{SSP5} \\
                        \addlinespace[-3pt]
                        \cmidrule(lr){3-3} \cmidrule(lr){4-6} \cmidrule(lr){7-9} \cmidrule(lr){10-12}
                        \addlinespace[-3pt]
                         & & 2025 & 2030 & 2040 & 2050 & 2030 & 2040 & 2050 & 2030 & 2040 & 2050 \\
                        \midrule
                        \multirow{3}{*}{Emissions} & 0\%     & 323  & 398 & 450 & 495 & 394 & 439 & 482 & 425 & 501 & 589  \\[3pt]
                                                   & 20\%    & 305  & 377 & 428 & 472 & 374 & 418 & 459 & 404 & 477 & 563  \\
                                                   & (99.54\%) & 5.4\%  & 5.1\% & 4.9\% & 4.7\% & 5.1\% & 4.9\% & 4.7\% & 5.0\% & 4.7\% & 4.5\%  \\[3pt]
                                                   & 50\%    & 316  & 391 & 443 & 489 & 388 & 432 & 475 & 418 & 494 & 582  \\
                                                   & (0.46\%)  & 1.9\%  & 1.6\% & 1.5\% & 1.4\% & 1.6\% & 1.5\% & 1.4\% & 1.6\% & 1.4\% & 1.2\%  \\[3pt]
                                                   & 100\%   & 311  & 385 & 437 & 482 & 382 & 426 & 469 & 412 & 487 & 576  \\
                                                   & (0.46\%)  & 3.6\%  & 3.1\% & 2.9\% & 2.6\% & 3.1\% & 2.9\% & 2.7\% & 3.0\% & 2.6\% & 2.2\%  \\[4pt]
                        \multirow{3}{*}{UETT}      & 0\%     & 26.5  & 32.2 & 35.9 & 39.4 & 32.0 & 35.3 & 38.4 & 34.2 & 39.8 & 46.4 \\[3pt]
                                                   & 20\%    & 23.7  & 28.8 & 32.3 & 35.6 & 28.5 & 31.6 & 34.6 & 30.7 & 35.9 & 41.9 \\
                                                   & (99.54\%) & 10.7\%  & 10.6\% & 10.2\% & 9.8\% & 10.8\% & 10.2\% & 9.9\% & 10.4\% & 9.8\% & 9.7\%  \\[3pt]
                                                   & 50\%    & 22.1  & 26.3 & 29.0 & 31.5 & 26.1 & 28.5 & 30.8 & 27.7 & 31.7 & 36.3 \\
                                                   & (0.46\%)  & 16.5\%  & 18.4\% & 19.3\% & 20.2\% & 18.4\% & 19.3\% & 19.9\% & 19.0\% & 20.4\% & 21.9\%  \\[3pt]
                                                   & 100\%   & 20.2  & 23.6 & 25.8 & 27.8 & 23.5 & 25.4 & 27.3 & 24.8 & 28.0 & 31.7 \\
                                                   & (0.46\%)  & 23.8\%  & 26.8\% & 28.2\% & 29.4\% & 26.7\% & 28.0\% & 29.1\% & 27.7\% & 29.5\% & 31.6\%  \\
                        \bottomrule
                    \end{tabularx}
                    \begin{tablenotes}[flushleft]
                        \footnotesize
                        \item where (0.46\%) refers to expanding carbon bottlenecks;
                            (99.54\%) refers to expanding links in the road network excluding carbon bottlenecks.
                    \end{tablenotes}
                \end{threeparttable}
                \vspace{-1.0em}
                \end{table}

                In Fig.~\ref{fig:expand1.b}, the slopes of carbon emissions curves are virtually identical across all expansion percentages,
                indicating that bounded capacity expansion fails to alter the scaling order of emissions with population,
                aligning with Corollary~\ref{cor:finite_expansion}.
                Additionally, as shown in Figs.~\ref{fig:expand1.c} and~\ref{fig:expand2},
                for all cities,
                both OP and CP trends remain nearly unchanged after expansion.
                Consequently, the strategy of expanding carbon bottlenecks does not alter the optimization level.
                Managers can implement this strategy by converting on-street parking into traffic lanes,
                as these bottlenecks typically feature such parking.
                Moreover, given the lower traffic demand during off-peak periods,
                a dynamic curbside management approach can be adopted to flexibly reallocate curb space
                for either parking or traffic lanes based on temporal traffic needs.
                Regarding road capacity expansion subject to a budget constraint, Proposition~\ref{prop:shadow_price} implies that
                the optimal scheme prioritizes allocating the budget to bottlenecks with the highest carbon influence per unit cost.
                Upon budget exhaustion, the carbon influence per unit cost equalizes across all expanded links and corresponds to the shadow price.
                In summary, the strategy of expanding carbon bottlenecks is practically feasible,
                effectively reducing both emissions and travel time without compromising the efficiency of eco-routing.

                \begin{figure}[!htb]
                    \captionsetup[subfigure]{skip=1pt}
                    \centering
                    \hspace{0.12\textwidth}
                    \begin{subfigure}[b]{0.33\textwidth}
                        \centering
                        \includegraphics[width=\linewidth, keepaspectratio]{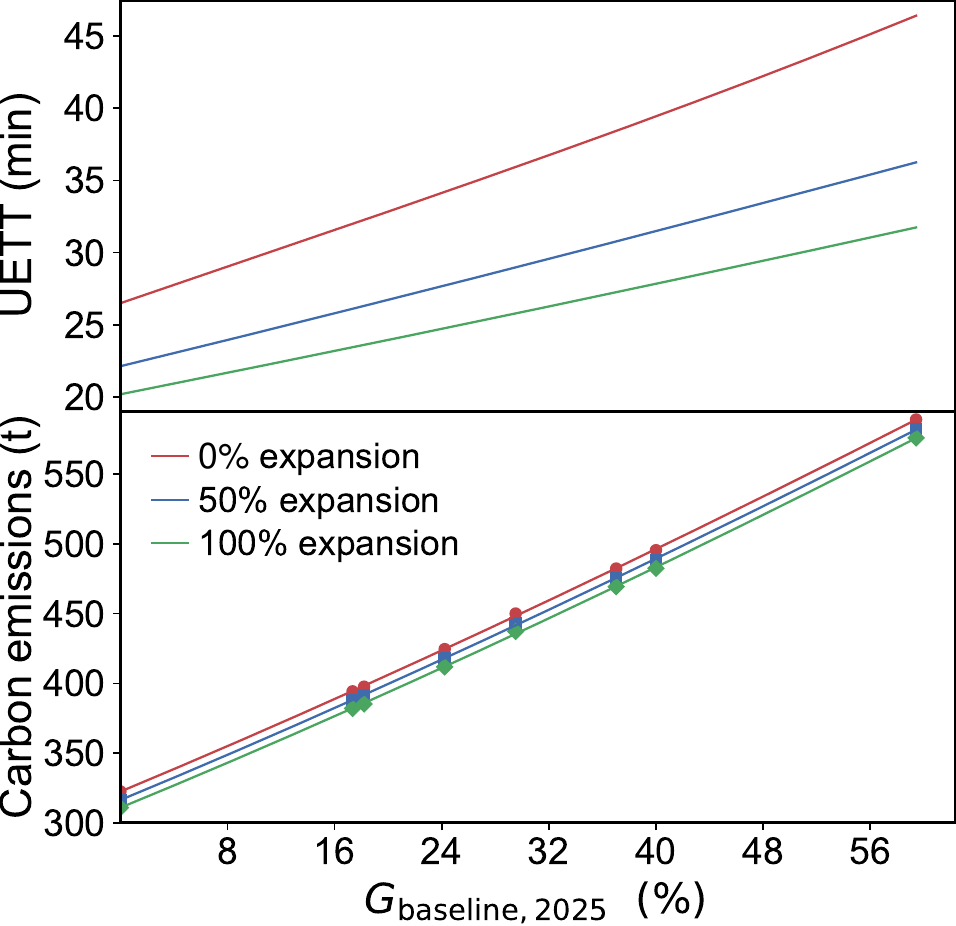}
                        \caption{}
                        \label{fig:expand1.b}
                    \end{subfigure}%
                    \hfill%
                    \begin{subfigure}[b]{0.33\textwidth}
                        \centering
                        \includegraphics[width=\linewidth, keepaspectratio]{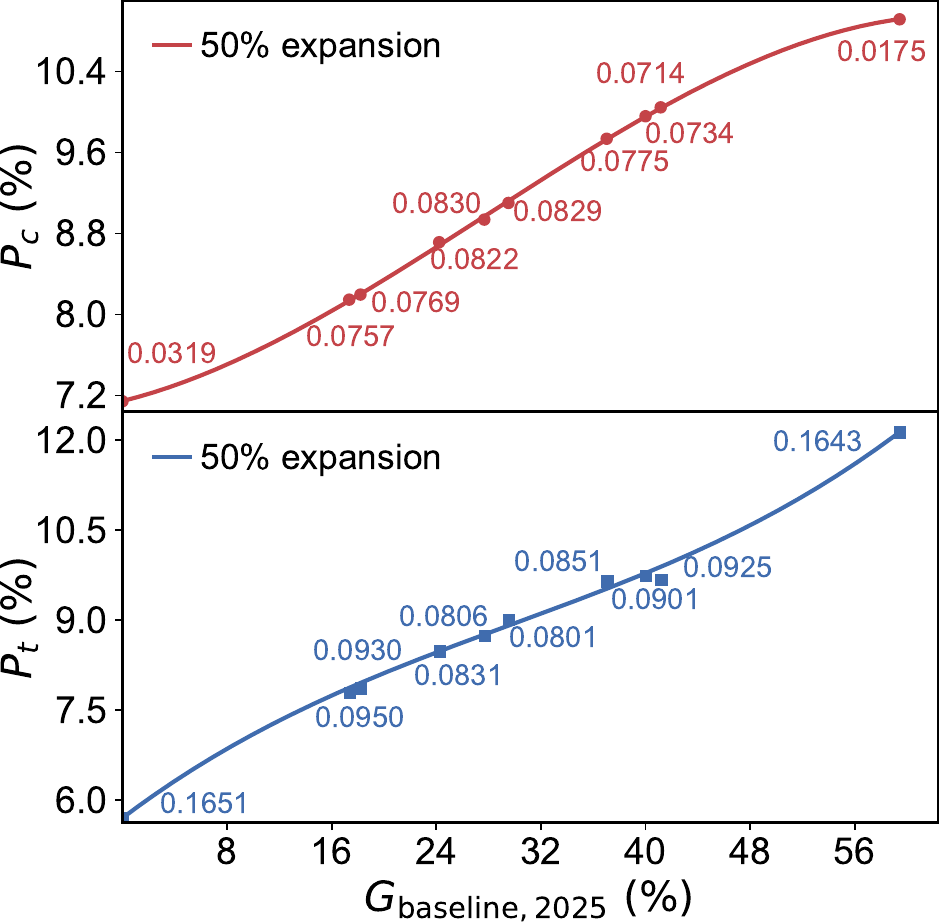}
                        \caption{}
                        \label{fig:expand1.c}
                    \end{subfigure}%
                    \hspace{0.12\textwidth}
                    \vspace{-0.7em}
                    \caption{Impact of expanding carbon bottlenecks in San Francisco.
                            (a) Carbon emissions (t) and UETT (min) with 0\%, 50\%, and 100\% expansion.
                            (b) Optimization levels with 50\% expansion.}
                    \label{fig:expand1}
                    \vspace{-1.0em}
                \end{figure}

                \begin{figure}[!htb]
                    \captionsetup[subfigure]{skip=1pt}
                    \centering
                    \begin{subfigure}[b]{0.33\textwidth}
                        \centering
                        \includegraphics[width=\linewidth, keepaspectratio]{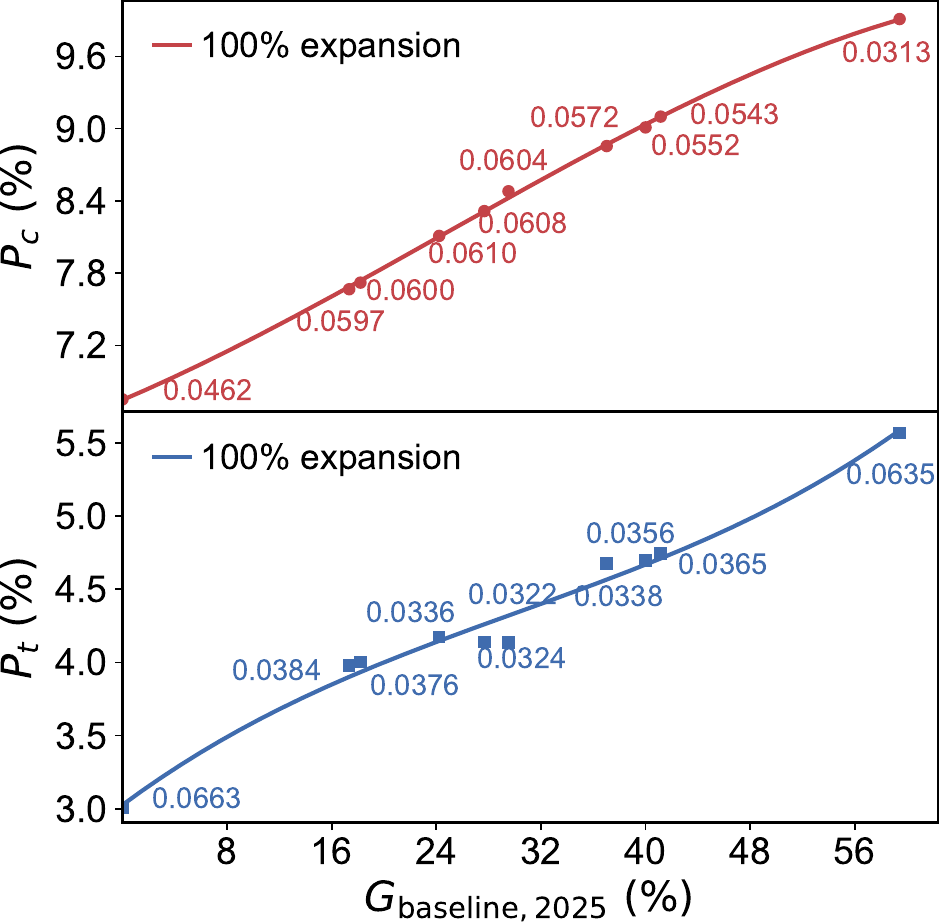}
                        \caption{San Francisco}
                        \label{fig:expand_sf}
                    \end{subfigure}%
                    \hspace{0.05\textwidth}%
                    \begin{subfigure}[b]{0.33\textwidth}
                        \centering
                        \includegraphics[width=\linewidth, keepaspectratio]{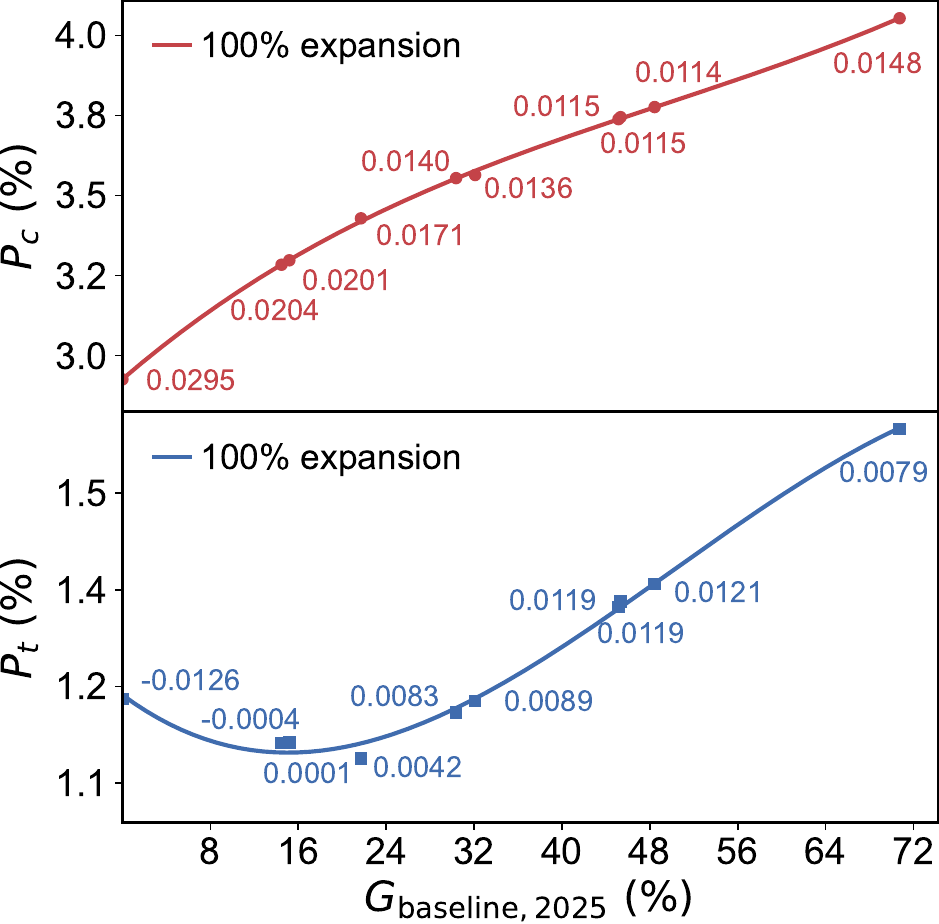}
                        \caption{Dallas}
                        \label{fig:expand_dallas}
                    \end{subfigure}%
                    \\[-1.5pt]
                    \begin{subfigure}[b]{0.33\textwidth}
                        \centering
                        \includegraphics[width=\linewidth, keepaspectratio]{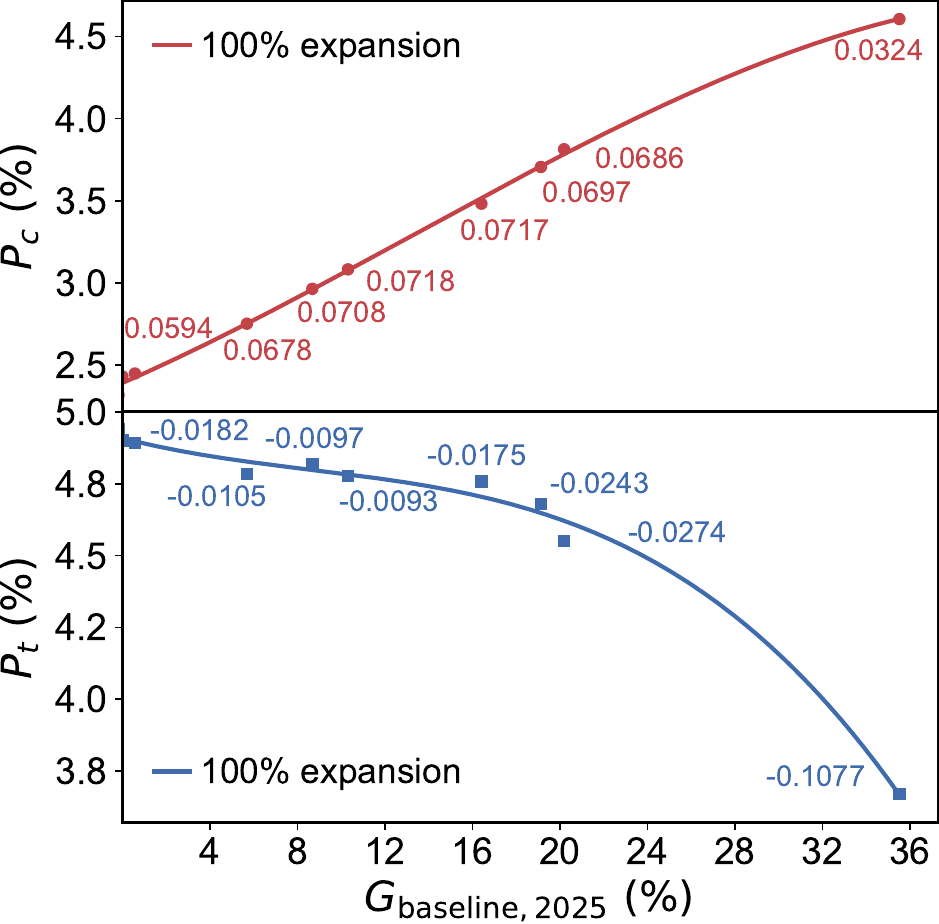}
                        \caption{Miami}
                        \label{fig:expand_miami}
                    \end{subfigure}%
                    \hfill%
                    \begin{subfigure}[b]{0.33\textwidth}
                        \centering
                        \includegraphics[width=\linewidth, keepaspectratio]{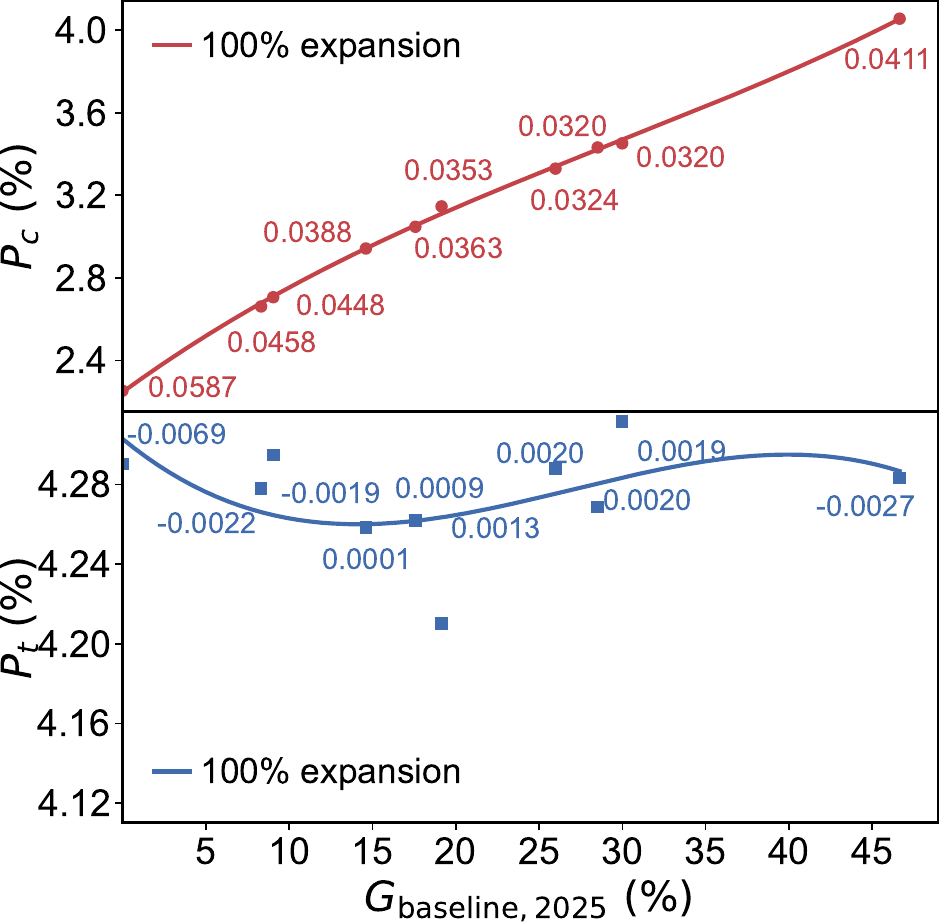}
                        \caption{Denver}
                        \label{fig:expand_denver}
                    \end{subfigure}%
                    \hfill%
                    \begin{subfigure}[b]{0.33\textwidth}
                        \centering
                        \includegraphics[width=\linewidth, keepaspectratio]{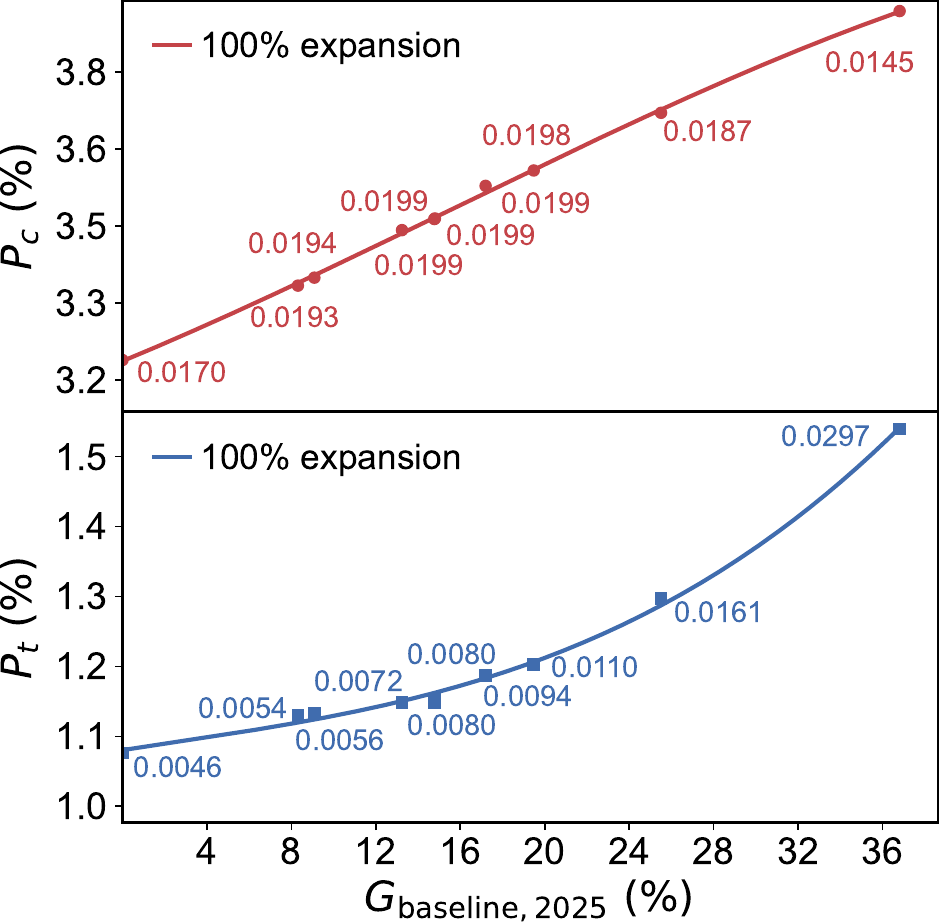}
                        \caption{Philadelphia}
                        \label{fig:expand_philadelphia}
                    \end{subfigure}%
                    \vspace{-0.7em}
                    \caption{Optimization levels for five cities with 100\% expansion.}
                    \label{fig:expand2}
                    \vspace{-1.0em}
                \end{figure}


            \subsubsection{Induced Demand}
            \label{sec:4.4.2}
                Although the strategy of expanding carbon bottlenecks is proven to be efficient,
                it inevitably leads to the issue of induced demand, which may compromise the optimization level.
                This induced demand primarily manifests as increased demand on existing OD pairs,
                which can be decomposed into variations in $\pi_{y}^{s, t}$ and population growth.
                As established in Sections~\ref{sec:4.1} and~\ref{sec:4.3.1},
                population growth acts as the dominant driver of changes in urban metrics.
                Therefore, it is reasonable to infer that the impact of induced demand on the optimization level is functionally equivalent to
                an additional increase in population, preserving the trend of optimization level.

                To validate this inference, we first calculate the changes in travel cost for each OD pair before and after expansion
                under both Time-Only and Time-Carbon UE models.
                Subsequently, we identify the top 10\% of OD pairs exhibiting the largest decrease in travel cost
                and increase their demand volume by 10\% and 20\% to simulate the induced demand.
                As shown in Fig.~\ref{fig:induce demand},
                the shapes of all corresponding curves remain similar before and after introducing the induced demand;
                specifically, the curves with 10\% and 20\% increase manifest as leftward shifts relative to the curve with 0\% increase.
                Regarding the trend of optimization level,
                this result implies that the impact of induced demand is indeed functionally equivalent to additional population growth,
                thereby corroborating the inference.
                Moreover, given that the dominance of population growth is consistent across cities,
                this finding is generalizable to diverse road networks.
                Consequently, the strategy of expanding carbon bottlenecks does not alter the trend of cities' optimization level,
                even when accounting for potential induced demand.
                \begin{figure}[!htb]
                    \captionsetup[subfigure]{skip=1pt}
                    \centering
                    \hspace{0.12\textwidth}
                    \begin{subfigure}[b]{0.33\textwidth}
                        \centering
                        \includegraphics[width=\linewidth, keepaspectratio]{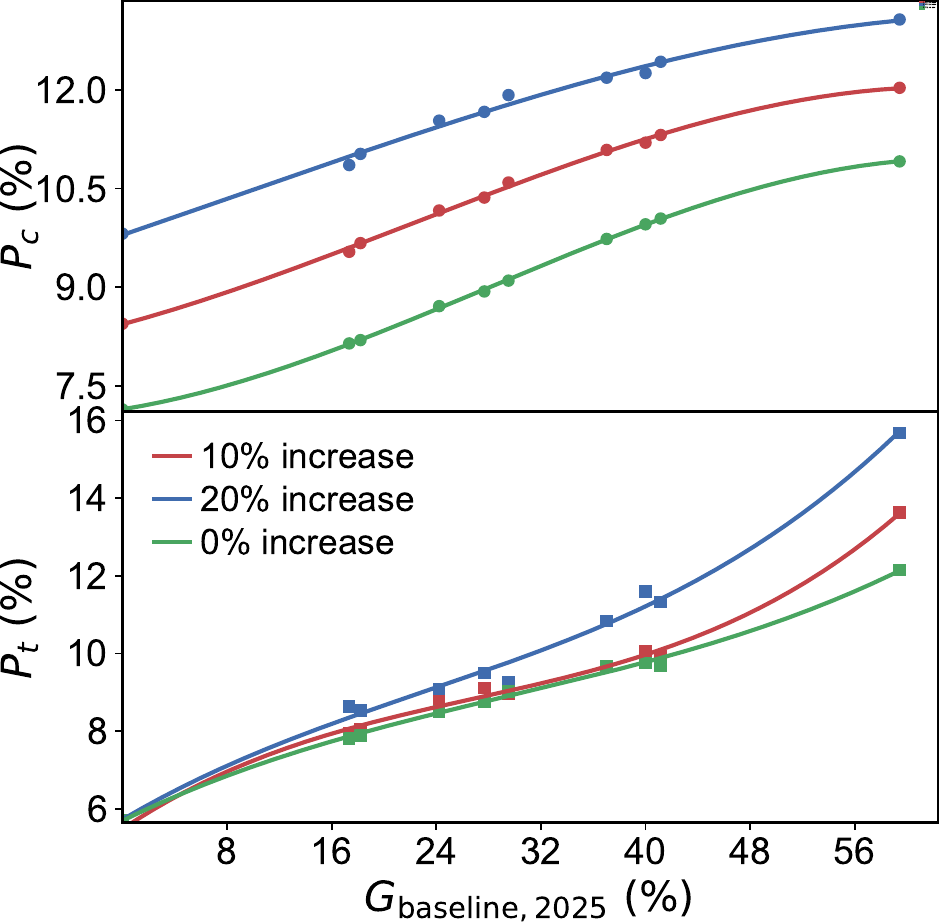}
                        \caption{50\% expansion}
                        \label{50_induced}
                    \end{subfigure}%
                    \hfill%
                    \begin{subfigure}[b]{0.33\textwidth}
                        \centering
                        \includegraphics[width=\linewidth, keepaspectratio]{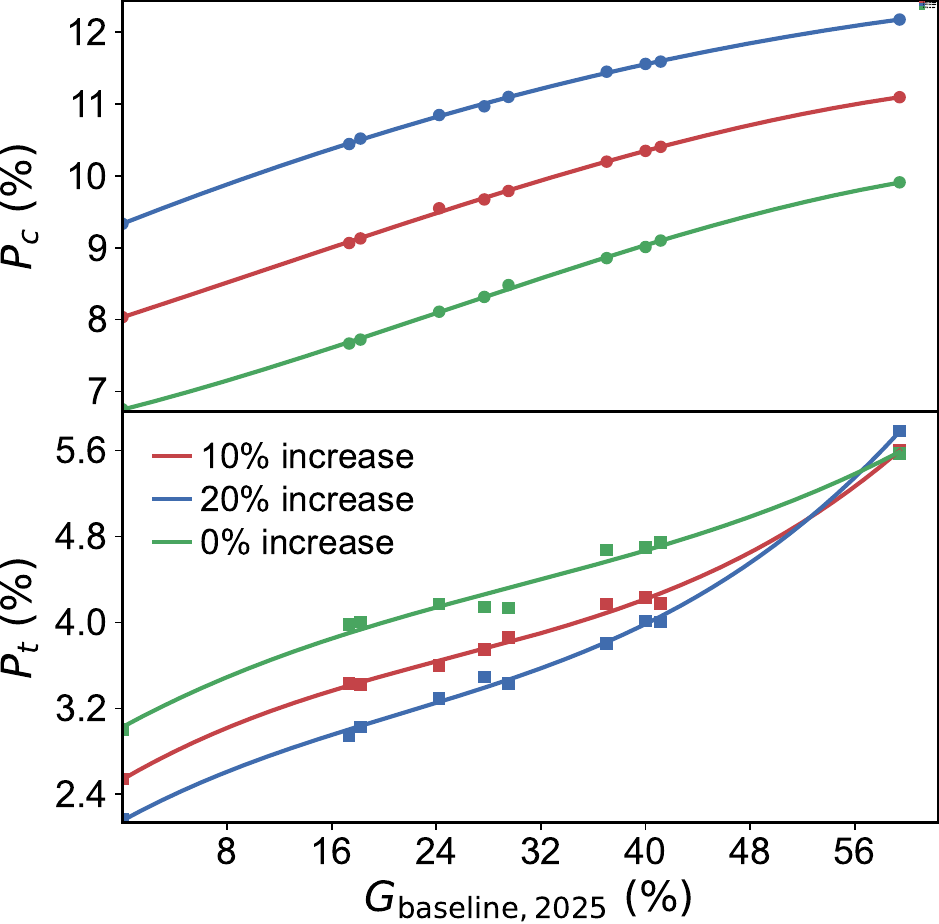}
                        \caption{100\% expansion}
                        \label{100_induced}
                    \end{subfigure}%
                    \hspace{0.12\textwidth}%
                    \vspace{-0.7em}
                    \caption{Optimization levels for San Francisco with 50\% and 100\% expansion, specifically accounting for the induced demand.}
                    \label{fig:induce demand}
                    \vspace{-1.0em}
                \end{figure}

    \section{Conclusion}
    \label{sec:5}
        This study develops an analytical framework that integrates the demand forecasting method
        with the Time-Only and Time-Carbon UE models
        to simulate future network traffic under different routing strategies.
        We apply the bi-conjugate Frank-Wolfe algorithm to these models and
        establish the existence, uniqueness, and equivalence of their optimal solutions.
        Using real-world road networks, commuting OD demand, and population projections under various SSPs for the six representative U.S. cities,
        we reveal that population growth is the dominant driver of urban dynamics and eco-routing efficiency:
        (i) While eco-routing mitigates urban transport emissions,
        emissions in most cities scale superlinearly with population regardless of routing and construction strategies,
        as the scaling order is an intrinsic property of road networks.
        (ii) The efficiency of eco-routing exhibits distinct patterns:
        in some cities, eco-routing is already inefficient;
        in others, while currently efficient, it is projected to become inefficient under population growth;
        and in only a few cities, it retains long-term high efficiency, exclusively under SSP1 or SSP2 scenarios.
        (iii) Under population growth,
        travelers using eco-routing increasingly select shorter routes despite significant decreases in travel speed, thereby giving rise to carbon bottlenecks.
        This indicates that the carbon reduction benefits from reducing travel distance outweigh those from maintaining travel speed.
        Furthermore, the strategy of expanding the small number of bottlenecks (0.46\%) is efficient and practically feasible,
        reducing emissions (3\%) and travel time (28\%) without compromising the efficiency of eco-routing,
        even when accounting for potential induced demand.


        Based on the findings, this study offers policy recommendations for emission mitigation, infrastructure development, and low-carbon urban transport planning.
        Specifically:
        (i) Implement eco-routing and improve carbon bottlenecks.
        Both strategies have proven effective in emissions mitigation.
        Imposing a carbon tax on fuel serves as a practical implementation of city-scale eco-routing,
        necessitating an appropriate tax rate to ensure the stable evolution of the lowest carbon routes under population growth.
        Concurrently, a dynamic curbside management approach can be applied to the carbon bottlenecks
        to reallocate curb space for either parking or lanes based on temporal traffic needs.
        (ii) Optimize urban spatial functionality.
        Given the critical role of reducing travel distance in emission mitigation,
        policies should prioritize improving the jobs-housing balance and promoting polycentric development
        to counteract urban sprawl and shorten the average trip length.
        (iii) Manage gasoline car travel demand.
        As the scaling order of emissions with population remains invariant to routing and construction strategies,
        the fundamental approach to mitigate emissions lies in managing the growing demand for gasoline car travel,
        including expanding public transit, incentivizing carpooling, and following a sustainable SSP characterized by slower population growth.
        Furthermore, introducing electric vehicles is expected to significantly reduce emissions.
        They displace gasoline vehicles, generating zero direct emissions,
        and tend to bypass road segments heavily utilized by gasoline vehicles,
        thereby yielding synergistic emission reduction benefits.


        A limitation of this study is that the travel model only refers to cars,
        and future research needs to conduct a more comprehensive analysis by
        incorporating carpooling, public transit, and other factors.
        Besides, the UE traffic assignment faces challenges under substantial population growth,
        as it may underestimate the severity of congestion.
        Moreover, the strategy of expanding carbon bottlenecks leads to induced demand;
        although we simulated this impact, unforeseen effects may arise in the long run due to complex demographic dynamics
        such as residential relocation and the emergence of new OD pairs; nevertheless, in the short run, this strategy remains efficient.
        In addition, the OD demand data utilized in this study reflects the peak-hour traffic conditions,
        which accounts for the majority of daily urban transport emissions.
        However, conclusions may vary slightly during off-peak periods;
        for instance, the spatial distribution of carbon bottlenecks may shift during other temporal segments.
        With the availability of data for additional time periods,
        further research could extend to different intervals.

     \section*{Declaration of Interest}
         The authors declare no conflict of interest.
    \section*{Data Availability}
        The processed data and implementation code will be made available upon publication.
        The Frank-Wolfe algorithm converges to a unique equilibrium state due to strict convexity (Section~\ref{sec:3.3.2}),
        ensuring reproducibility regardless of initialization.

     \section*{Author Contributions}
         Zhiheng Shi: Conceptualization, Data Curation, Formal Analysis, Methodology, Validation, Visualization, Writing - Original Draft, Writing - review \& editing;
         Xiaohan Xu: Data Curation, Formal Analysis, Resources, Supervision, Writing - Original Draft, Writing - review \& editing;
         Wei Ma: Methodology, Resources, Supervision, Writing - Original Draft, Writing - review \& editing;
         Kairui Feng: Methodology, Resources, Supervision, Writing - Original Draft, Writing - review \& editing;
         Bin He: Supervision, Writing - Original Draft.

     \section*{ Acknowledgement }
         Kairui Feng is supported by Shanghai Municipal Science and Technology Commission Explorers Program (24TS1401600) and Xiaomi Foundation.
         Kairui Feng and Bin He are supported by the National Natural Science Foundation of China (Grant No. 62088101).
         This support is gratefully acknowledged.



    \appendix
    \section{Additional Maps}
        \setcounter{figure}{0}
        Additional maps under the Time-Carbon UE model ($\psi_1=1$ and $\psi_2=9$) are shown in Fig.~\ref{fig:more VOCs maps}.
        \vspace{-0.5em}
        \newpage
        \begin{figure}[H]
            \captionsetup[subfigure]{skip=1pt}
            \centering
            \begin{subfigure}{1.0\linewidth}
                \setlength{\tabcolsep}{0pt}
                \centering
                \begin{tabular}{p{0.24\linewidth} p{0.24\linewidth} p{0.24\linewidth} p{0.24\linewidth}}
                    \myimage{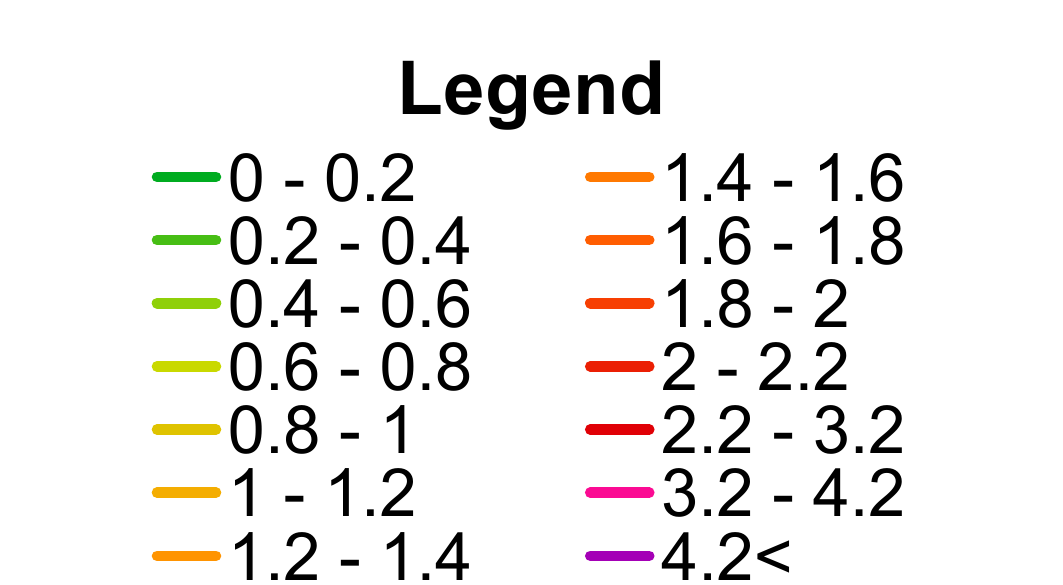}    & \myimage{with01_2030_ssp2} & \myimage{with01_2040_ssp2} & \myimage{with01_2050_ssp2} \\
                    \myimage{with01_2025_real}  & \myimage{with01_2030_ssp3} & \myimage{with01_2040_ssp3} & \myimage{with01_2050_ssp3} \\
                    \myimage{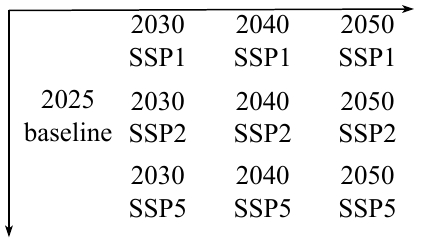}              & \myimage{with01_2030_ssp5} & \myimage{with01_2040_ssp5} & \myimage{with01_2050_ssp5} \\
                \end{tabular}
                \caption{San Francisco}
                \label{fig:more_sf}
            \end{subfigure}%
            \vspace{0.1em}
            \begin{subfigure}{1.0\linewidth}
                \setlength{\tabcolsep}{0pt}
                \centering
                \begin{tabular}{p{0.24\linewidth} p{0.24\linewidth} p{0.24\linewidth} p{0.24\linewidth}}
                                                    & \myimage{with06_2030_ssp2} & \myimage{with06_2040_ssp2} & \myimage{with06_2050_ssp2} \\
                    \myimage{with06_2025_real}  & \myimage{with06_2030_ssp3} & \myimage{with06_2040_ssp3} & \myimage{with06_2050_ssp3} \\
                                    & \myimage{with06_2030_ssp5} & \myimage{with06_2040_ssp5} & \myimage{with06_2050_ssp5} \\
                \end{tabular}
                \caption{Philadelphia}
                \label{fig:more_philadelphia}
            \end{subfigure}%
            \vspace{-0.7em}
            \caption{Maps of VOC across cities, years and SSPs under the Time-Carbon UE model ($\psi_{1}=1, \psi_{2}=9$).}
            \label{fig:more VOCs maps}
        \end{figure}
        \newpage
        \begin{figure}[H]
            \ContinuedFloat
            \captionsetup[subfigure]{skip=1pt}
            \centering
            \begin{subfigure}{1.0\linewidth}
                \setlength{\tabcolsep}{0pt}
                \centering
                \begin{tabular}{p{0.245\linewidth} p{0.245\linewidth} p{0.245\linewidth} p{0.245\linewidth}}
                                                    & \myimage{with03_2030_ssp2} & \myimage{with03_2030_ssp2} & \myimage{with03_2030_ssp2} \\
                    \myimage{with03_2025_real}  & \myimage{with03_2030_ssp2} & \myimage{with03_2030_ssp2} & \myimage{with03_2030_ssp2} \\
                    {\footnotesize (c) Pittsburgh}                  & \myimage{with03_2030_ssp2} & \myimage{with03_2030_ssp2} & \myimage{with03_2030_ssp2} \\
                \end{tabular}
                \caption{Pittsburgh}
                \label{fig:more_pittsburgh}
            \end{subfigure}%
            \vspace{0.1em}
            \begin{subfigure}{1.0\linewidth}
                \setlength{\tabcolsep}{0pt}
                \centering
                \begin{tabular}{p{0.245\linewidth} p{0.245\linewidth} p{0.245\linewidth} p{0.245\linewidth}}
                                                    & \myimage{with04_2030_ssp2} & \myimage{with04_2040_ssp2} & \myimage{with04_2050_ssp2} \\
                    \myimage{with04_2025_real}  & \myimage{with04_2030_ssp3} & \myimage{with04_2040_ssp3} & \myimage{with04_2050_ssp3} \\
                                           & \myimage{with04_2030_ssp5} & \myimage{with04_2040_ssp5} & \myimage{with04_2050_ssp5} \\
                \end{tabular}
                \caption{Miami}
                \label{fig:more_miami}
            \end{subfigure}%
            \vspace{-0.7em}
            \caption[]{(continued).}
        \end{figure}
        \newpage
        \begin{figure}[H]
            \ContinuedFloat
            \captionsetup[subfigure]{skip=1pt}
            \centering
            \begin{subfigure}{1.0\linewidth}
                \setlength{\tabcolsep}{0pt}
                \centering
                \begin{tabular}{p{0.245\linewidth} p{0.245\linewidth} p{0.245\linewidth} p{0.245\linewidth}}
                                                    & \myimage{with05_2030_ssp2} & \myimage{with05_2040_ssp2} & \myimage{with05_2050_ssp2} \\
                    \myimage{with05_2025_real}  & \myimage{with05_2030_ssp3} & \myimage{with05_2040_ssp3} & \myimage{with05_2050_ssp3} \\
                                                    & \myimage{with05_2030_ssp5} & \myimage{with05_2040_ssp5} & \myimage{with05_2050_ssp5} \\
                \end{tabular}
                \caption{Denver}
                \label{fig:more_denver}
            \end{subfigure}%
            \vspace{0.1em}
            \begin{subfigure}{1.0\linewidth}
                \setlength{\tabcolsep}{0pt}
                \centering
                \begin{tabular}{p{0.245\linewidth} p{0.245\linewidth} p{0.245\linewidth} p{0.245\linewidth}}
                                                    & \myimage{with02_2030_ssp2} & \myimage{with02_2040_ssp2} & \myimage{with02_2050_ssp2} \\
                    \myimage{with02_2025_real}  & \myimage{with02_2030_ssp3} & \myimage{with02_2040_ssp3} & \myimage{with02_2050_ssp3} \\
                                          & \myimage{with02_2030_ssp5} & \myimage{with02_2040_ssp5} & \myimage{with02_2050_ssp5} \\
                \end{tabular}
                \caption{Dallas}
                \label{fig:more_dallas}
            \end{subfigure}%
            \vspace{-0.7em}
            \caption[]{(continued).}
        \end{figure}

    \newpage
    \begingroup
        \sloppy   
        \bibliography{sample}
    \endgroup

\end{document}